\documentclass{article}
\usepackage{mathtools}
\usepackage{amsfonts}
\usepackage{amsmath}
\usepackage{amsthm}
\usepackage{amssymb}
\usepackage{mathrsfs}
\usepackage{graphicx}
\usepackage{caption}
\usepackage{dsfont}
\usepackage{esint}
\usepackage{xcolor}
\usepackage{cases}
\usepackage{subcaption}
\usepackage{booktabs}
\usepackage{url}
\usepackage{hyperref}
\usepackage{makecell}
\usepackage{longtable}
\usepackage{authblk}

\def\dss{\displaystyle}
\PassOptionsToPackage{obeyFinal}{todonotes}
\usepackage[backgroundcolor=green!40, bordercolor = white, linecolor=green!40, colorinlistoftodos, textsize=footnotesize]{todonotes}

\theoremstyle{plain}
\newtheorem{theorem}{Theorem}[section]

\newtheorem{corollary}[theorem]{Corollary}
\newtheorem{lemma}[theorem]{Lemma}

\newtheorem{proposition}[theorem]{Proposition}

\newtheorem{conjecture}[theorem]{Conjecture}
\theoremstyle{definition}

\newtheorem{remark}[theorem]{Remark}

\newcommand{\Rq}{\mathcal R_0^{\mathrm{q}}}

\newcommand{\subjclass}[2][2020]{
  \noindent MSC#1: #2\par
}

\newcommand{\keywords}[1]{
  \noindent \textit{Keywords}: #1\par
}

\hypersetup{
  colorlinks=true,
  linkcolor=blue,
  citecolor=blue,
  urlcolor=blue
}

\newcommand{\KE}{K_E}

\title{Tipping the Balance: Allee Thresholds, Saddle-Node Bifurcations, and Optimal Sterile-Male Release Strategies for \textit{Anopheles} Mosquitoes}

\author[1,2,3]{Abba Gumel}
\author[1]{C. Alex Safsten} 

\affil[1]{Department of Mathematics, University of Maryland, College Park, MD 20742, USA} 
\affil[2]{Institute for Health Computing, University of Maryland, North Bethesda, Maryland 20852, USA} 
\affil[3]{Department of Mathematics and Applied Mathematics, University of Pretoria, Pretoria 0002, South Africa} 

\date{June 15, 2026}

\begin{document}
	
	\maketitle

    \begin{abstract}
        We formulate and analyze a sex- and stage-structured model for \textit{Anopheles} dynamics under the sterile insect technique (SIT), motivated by the need for tools robust to insecticide resistance and outdoor transmission. The model tracks aquatic stages, adult males, unmated females, and females mated with wild or sterile males; includes egg-laying capacity and larval competition; and uses a refractory period followed by density-dependent mate search. The resulting Holling type-II mating term generates a mate-finding Allee effect. After establishing well-posedness, we prove that this Allee effect makes the mosquito-free equilibrium locally stable for all admissible parameters and globally asymptotically stable when a quick-mate-search reproduction number $R_0^q$ is below one. When $R_0^q>1$, habitat capacity is large, and larval competition is weak, two positive equilibria arise through a saddle--node bifurcation: a stable natural equilibrium and an unstable Allee equilibrium separating persistence from extinction. For a reduced model, a Goh--Volterra Lyapunov functional estimates the persistence basin. We then show how constant and population-responsive sterile-male releases reshape this bistability. Sufficiently large releases annihilate the positive equilibria in a second saddle--node bifurcation, while a sufficiently large constant release drives local elimination from every admissible initial state. Thus SIT need only push the population across the Allee separatrix, after which mate-finding failure can complete extinction. In a free-horizon optimization framework with an Allee-threshold stopping rule, a hybrid release strategy reduces the sterile-male requirement by about $5\%$ relative to the best constant-only strategy and $39\%$ relative to the best population-responsive-only strategy. These results recast the Allee effect as a control lever for vector suppression.

        \medskip
        
        \keywords{Sterile insect technique, bifrucation analysis, local asymptotic stability, global asymptotic stability}
        \subjclass[2020]{92D25, 34D20, 34C23, 34H05, 49J15, 92D30}
    \end{abstract}

	\section{Introduction}
\noindent
\noindent
Malaria remains one of the deadliest vector-borne diseases of humans.
Caused by protozoan \emph{Plasmodium} parasites and transmitted by
infected adult female \emph{Anopheles} mosquitoes
\cite{BatRan2005,WHO2025}, the disease has exacted a persistent global
public-health toll since the spillover of \emph{Plasmodium} from
non-human primates approximately 12,000 years ago \cite{Man2023}. Of
the several \emph{Plasmodium} species infectious to humans,
\emph{P.~falciparum} and \emph{P.~vivax} together account for nearly
the entire global burden \cite{CDC2018,WHO2025}, with
\emph{P.~falciparum} responsible for the most severe and
life-threatening clinical manifestations \cite{GetEtAl2011,JohSmiFid2013}.
Over 2.5~billion people currently reside in areas at risk of
\emph{P.~falciparum} transmission \cite{GetEtAl2011}, and the disease
caused an estimated 282~million cases and 610,000 deaths worldwide in
2024, with mortality heavily concentrated among children under five
years of age in sub-Saharan Africa \cite{WHO2025}.

Substantial progress was achieved in the global fight against malaria
between 2000 and 2015 \cite{GuiEtAl2020,MohGum2019}, driven primarily
by the widespread deployment of insecticide-based vector-control
interventions. Long-lasting insecticidal nets (LLINs) and indoor
residual spraying (IRS) together accounted for approximately $81\,\%$
of the reduction in malaria burden over this period \cite{MohGum2019},
with LLINs alone attributed to a $68\,\%$ reduction in the incidence
of \emph{P.~falciparum} \cite{GuiEtAl2020}. Complementary
contributions from early diagnosis, improved antimalarial therapy, and
strengthened public-health infrastructure further accelerated these
gains \cite{MohGum2019}. This progress galvanized renewed global
ambitions, reflected in the WHO Global Technical Strategy for Malaria
2016--2030---which sets targets of $90\,\%$ reductions in malaria case
incidence and mortality by 2030 relative to 2015 baseline levels
\cite{WHO2015}---and in the ZERO by~40 commitment, a public--private
partnership pledging to deliver a pipeline of novel vector-control
tools with the overarching goal of ending malaria by 2040
\cite{Gat2019,WilHam2018}.

These ambitions are, however, imperiled by several converging threats.
The sustained, large-scale use of chemical insecticides has driven
widespread resistance in \emph{Anopheles} mosquitoes to every compound
class currently deployed in LLINs and IRS
\cite{AloRocDabCoh2017,BarKayChiHas2018,CarEtAl2022,KumKoeCoeMzi2022,%
MohGum2019,RivVezWeiReaGan2010}, threatening to erode the hard-won
gains of the preceding decade. This biological challenge is compounded
by the growing epidemiological importance of outdoor malaria
transmission: existing tools are designed primarily to target
mosquitoes that bite and rest indoors, and offer limited efficacy
against opportunistic species such as \emph{An.~arabiensis} that feed
both indoors and outdoors, or against exclusively outdoor-feeding
species \cite{MunEtAl2022,WHO2015report}. Concurrently,
\emph{Plasmodium} resistance to frontline antimalarial drugs is
advancing in Southeast Asia and sub-Saharan Africa
\cite{BuyElsDur2021,CDC2018,HalBhaSaf2018,ManEtAl2022,Tow2009}, and a
universally safe and highly effective malaria vaccine has yet to be
realized \cite{DamEtAl2019,deV2019,Max2021}. Taken together, these
converging obstacles underscore an urgent need for alternative or
complementary vector-control strategies capable of interrupting both
indoor and outdoor transmission.

Among the most promising biological approaches is the sterile insect
technique (SIT), an area-wide integrated pest- and vector-management
strategy based on the periodic mass release of radiation-sterilized
male insects \cite{DycHenRob2021}. Sterile males compete with wild
males for mates; when a wild female mates with a sterile male, the
resulting eggs are non-viable, causing a progressive reduction in wild
population abundance with each successive generation. Although the
theoretical foundations of SIT were established in the 1930s and 1940s
\cite{DycHenRob2021}, the technique was first deployed operationally in
the United States in the 1950s, achieving the eradication of the
screwworm fly \emph{Cochliomyia hominivorax} \cite{Kni1955,Kni1957}.
Subsequent refinements have yielded successful applications against
several agricultural pests---including fruit flies, moths, and tsetse
flies---and, more recently, against \emph{Aedes aegypti}, the principal
vector of dengue, chikungunya, yellow fever, and Zika viruses
\cite{IboGumTay2020,Kni1979,PatEtAl2015,VreEtAl2000}. Advances in
mass-rearing, irradiation, and release technologies have renewed
interest in extending SIT to \emph{Anopheles}-borne diseases
\cite{BenRob2003}. Pilot field trials in Italy, China, and elsewhere
have produced encouraging results \cite{BelMedBalCar2013,ZheEtAl2019},
though investigations targeting dominant \emph{Anopheles} vectors in
Africa---where the malaria burden is greatest---have only recently been
initiated \cite{HelEtAl2008,IboGumTay2020,MunEtAl2011,MunEtAl2022}.

Large-scale SIT implementation nevertheless faces several
well-recognized biological and operational challenges, including
imperfect sex-sorting prior to release
\cite{DumYat2022,DumYat2024,GuiEtAl2020,PodEtAl2018}, residual
fertility among irradiated males \cite{DumYat2024},
sterilization-associated fitness costs that reduce male mating
competitiveness and survival
\cite{CulEtAl2020,GuiEtAl2020,HelKno2009,PodEtAl2018}, and seasonal
or climate-driven fluctuations in wild mosquito abundance
\cite{DumDup2024,DouEtAl2021,IboGumTay2020,LeGEtAl2019}. These
factors collectively govern whether SIT can achieve local suppression
or elimination of a target population and, critically, how sterile
males should be released in practice. Mathematical modeling has been
central to addressing these questions. Deterministic models with
impulsive releases and environmental forcing have quantified
SIT-induced changes in mosquito abundance and persistence
\cite{CaiAiLi2014,DumTch2012,IboGumTay2020,LiCaiLi2017};
sex-structured models have elucidated the roles of mating dynamics,
residual fertility, and release design
\cite{AroDum2020,BliCarDumVas2019,DumYat2024}; and analyses
incorporating nonlinear mating interactions have demonstrated that
mate-finding Allee effects can induce bistability, making population
elimination achievable once the wild population is suppressed below a
critical density \cite{AngDumYatIvr2020,DumYat2022}. Building on this
bistability structure, optimal control has emerged as a natural
framework for SIT design, since practical success requires not only
biological efficacy but cost-effective scheduling of sterile-male
releases \cite{BliCarDumVas2024,HuaYouLiuSon2021,ThoYanEst2010}.
Spatial aspects of SIT have also been explored in PDE frameworks,
including barrier-release strategies designed to prevent reinvasion of
cleared territory \cite{AlmEstVau2022,EvaBis2014,MaLiCai2026}.

Despite these advances, a pronounced gap remains. Relatively few
studies combine detailed life-cycle structure, nonlinear mating
dynamics, and control design within a single analytically tractable
framework; because these biological components simultaneously elevate
model dimensionality and nonlinearity, rigorous qualitative analysis is
frequently sacrificed in favor of purely numerical investigations.
Moreover, optimal control formulations in the SIT literature have been
developed almost exclusively over fixed time horizons
\cite{BliCarDumVas2024,HuaYouLiuSon2021,ThoYanEst2010}; free-horizon
formulations, in which releases continue dynamically until a
state-dependent ecological criterion is satisfied, remain largely
unexplored. A widely cited operational benchmark recommends maintaining a sterile-to-wild
male overflooding ratio of at least $10{:}1$
\cite{OliEtAl2021,CarEtAl2022,StrBosDum2019}; this guideline, however, rests on
accumulated field experience rather than on a formal optimization of
sterile-male release effort.

The present study addresses these gaps. We analyze a sex- and
stage-structured deterministic model for the \emph{Anopheles} mosquito
population \cite{IboGumTay2020} that incorporates immature aquatic
developmental stages, density-dependent larval mortality, and nonlinear
mating dynamics giving rise to a mate-finding Allee effect, under
continuous non-constant sterile-male releases. The system is analyzed
with progressively increasing complexity: we begin with the SIT-free
regime to establish baseline equilibrium and bifurcation structures,
then incorporate sterile-male releases to characterize how control
modifies these dynamics. Specifically, the following questions are
addressed:
\begin{enumerate}
    \item[(i)] What is the optimal sterile-male release
        strategy---within the class of constant and proportional
        release rates---that drives the wild population to local
        elimination while minimizing the total number of sterile
        males deployed?
    \item[(ii)] Once the wild population has been driven below the
        Allee threshold, can SIT releases be safely discontinued,
        and under what conditions will the natural Allee dynamics
        thereafter guarantee population extinction without further
        intervention?
\end{enumerate}

The paper is organized as follows. The model is formulated and its
well-posedness established in Section~\ref{sec:model}. A rigorous
analysis of the equilibrium structure and bifurcation properties of the
model, in both the SIT-free and controlled regimes, is presented in
Section~\ref{sec:analysis}. Numerical investigations, including the
computation of optimal release strategies and a sensitivity analysis of
the total sterile-male requirement, are given in
Section~\ref{sec:numerics}. Concluding remarks and directions for future work appear in Section~\ref{sec:discussion}.

\section{The Model}\label{sec:model} \subsection{Preamble and Modeling Assumptions}

The model considered in this work is adapted from the sterile insect
technique~(SIT) framework of Iboi, Gumel, and
Taylor~\cite{IboGumTay2020}.  Two simplifications are made relative to
that work.  First, whereas~\cite{IboGumTay2020} resolves the larval
stage into four compartments corresponding to the four instars, the
present model consolidates these into a single larval compartment.
Second, whereas~\cite{IboGumTay2020} employs impulsive (discrete,
periodic) releases of sterile males, the present model admits release
rates that are continuous in time, though not necessarily constant.
Both simplifications are motivated by mathematical tractability: the
reduced structure permits a thorough analytical treatment of
equilibrium existence, bifurcation structure, and stability that
complements the primarily numerical, seasonally-forced analysis
of~\cite{IboGumTay2020}.  The numerical investigations in
Section~\ref{sec:numerics} are directed toward optimizing the
sterile-male release rate so as to drive the wild mosquito population
to extinction while minimizing the total number of sterile males
deployed.

The model is formulated under the following assumptions.

\begin{enumerate}
\item[(A1)] \textit{Spatial homogeneity.}  The mosquito population is
  well mixed within the habitat; spatial heterogeneity is neglected.

\item[(A2)] \textit{Closed habitat.}  Immigration of mosquitoes from
  external sources is absent.

\item[(A3)] \textit{Mate-search kinetics.}  Following a
  post-emergence refractory period of mean duration~$\zeta > 0$
  (see below), an unmated female actively seeks a mate.  The expected
  duration of the mate-search phase is inversely proportional to the
  effective male density $M_w + \eta M_s$, with proportionality
  constant $\gamma > 0$.

\item[(A4)] \textit{Perfect sex sorting and sterilization.}  No female
  mosquitoes are inadvertently released alongside sterile males, and
  all released males carry no residual
  fertility~\cite{DumYat2022,DumYat2024,GuiEtAl2020,PodEtAl2018}.

\item[(A5)] \textit{Constant habitat capacity.}  Environmental
  conditions are sufficiently stable that the carrying capacity of the
  egg-laying habitat remains constant over the timescale of interest.
\end{enumerate}

\subsection{State Variables, Equations, and Parameter
  Descriptions}\label{sec:formulation}

The total mosquito population at time~$t$ is partitioned into eight
mutually exclusive compartments: eggs~($E$), larvae~($L$), pupae~($P$),
unmated adult females~($F_u$), adult females mated with wild
males~($F_{mw}$), adult females mated with sterile males~($F_{ms}$),
wild adult males~($M_w$), and sterile adult males~($M_s$).  The
population dynamics are governed by the following system of nonlinear
ordinary differential equations:
\begin{equation}\label{eq:model_I}
  \begin{cases}
    \dot{E}
       = \phi\!\left(1-\dfrac{E}{\KE}\right)F_{mw}
         -(\sigma_E+\mu_E)\,E,
    \\[10pt]
    \dot{L}
       = \sigma_E E - (\sigma_L+\mu_L+\delta_L L)\,L,
    \\[10pt]
    \dot{P}
       = \sigma_L L - (\sigma_P+\mu_P)\,P,
    \\[10pt]
    \dot{F}_u
       = r\sigma_P P
         - \dfrac{M_w+\eta M_s}{\gamma+\zeta(M_w+\eta M_s)}\,F_u
         - \mu_F F_u,
    \\[10pt]
    \dot{F}_{mw}
       = \dfrac{M_w}{\gamma+\zeta(M_w+\eta M_s)}\,F_u
         - \mu_F F_{mw},
    \\[10pt]
    \dot{F}_{ms}
       = \dfrac{\eta M_s}{\gamma+\zeta(M_w+\eta M_s)}\,F_u
         - \mu_F F_{ms},
    \\[10pt]
    \dot{M}_w
       = (1-r)\sigma_P P - \mu_M M_w,
    \\[6pt]
    \dot{M}_s
       = S(t) - \mu_M M_s.
  \end{cases}
\end{equation}
All parameter values are nonneg\-ative; mortality rates are strictly
positive.  The state variables and parameters are described in
Table~\ref{tab:model_I_parameters}; additional discussion follows.

\medskip
\noindent\textbf{Oviposition and carrying capacity.}
The parameter $\phi > 0$ is a composite oviposition parameter equal to
the product of the mean egg-laying frequency and the mean clutch size
per oviposition event of a mated adult female.  The parameter~$\KE >
0$ is the environmental carrying capacity for eggs, reflecting the
finite availability of suitable oviposition sites; the logistic
factor $(1-E/\KE)$ in the first equation of~\eqref{eq:model_I}
ensures that egg production is density-dependently suppressed as~$E$
approaches~$\KE$.

\medskip
\noindent\textbf{Aquatic-stage dynamics.}
For $x\in\{E,L,P\}$, $\sigma_x > 0$ denotes the per-capita maturation
rate from aquatic stage~$x$ to the subsequent developmental stage, and
$\mu_x > 0$ the corresponding per-capita natural mortality rate.  The
coefficient $\delta_L > 0$ is the density-dependent larval mortality
rate, capturing intraspecific competition for food, space, and other
limiting
resources~\cite{GilLeeSolBen2011,MurCouMboGod2012,YosEtAl2012}.  The
parameter $r\in(0,1)$ is the female sex ratio at eclosion, i.e., the
fraction of newly eclosed adults that are female.

\medskip
\noindent\textbf{Mating dynamics.}
The nonlinear mating terms in~\eqref{eq:model_I} are governed by three
parameters: $\eta$, $\zeta$, and $\gamma$.

The parameter $\eta\in(0,1)$ is a \emph{fitness-cost coefficient}
quantifying the reduced mating competitiveness of sterile males
relative to wild males~\cite{LeeEtAl2015,SutPriWed2021}.  The value
$\eta=1$ corresponds to fully competitive sterile males; smaller
values reflect greater sterilization-induced fitness costs, consistent
with field observations~\cite{CulEtAl2020,HelKno2009,PodEtAl2018}.

Upon eclosion, a female mosquito undergoes a post-emergence refractory
period of mean duration $\zeta > 0$ days, during which she completes
sexual maturation before initiating mate-seeking
behavior~\cite{ChaEtAl2003,OliBenLemGil2011,StoHamFos2011}.  Once
sexually mature, the unmated female searches actively for a mate.  By
assumption~(A3), the expected duration of this mate-search phase is
$\gamma/(M_w + \eta M_s)$, so the mean total pre-mating interval
(refractory period plus mate-search phase) for a single unmated female
is $\zeta + \gamma/(M_w+\eta M_s)$.  The total per-female mating rate
is therefore
\begin{equation}\label{eq:mating_rate}
  R_{\mathrm{mating}}
  = \frac{F_u}{\;\zeta + \dfrac{\gamma}{M_w+\eta M_s}\;}
  = \frac{M_w+\eta M_s}{\gamma+\zeta(M_w+\eta M_s)}\;F_u.
\end{equation}
The parameter $\gamma>0$ is interpreted as the mean first-encounter
time between a single unmated female and a single effective male (i.e.,
evaluated at $M_w+\eta M_s=1$); it depends on the mosquito diffusivity
and the spatial extent of the habitat, and is derived from first
principles via mean first-passage-time
theory~\cite{SzaSchSch1980,LeVYusAbaDen2020} in
Appendix~\ref{app:spatial}.  Decomposing~\eqref{eq:mating_rate} by
partner type gives
\begin{equation}\label{eq:mating_decomp}
  R_{\mathrm{mating,w}}
    = \frac{M_w}{\gamma+\zeta(M_w+\eta M_s)}\;F_u,
  \qquad
  R_{\mathrm{mating,s}}
    = \frac{\eta M_s}{\gamma+\zeta(M_w+\eta M_s)}\;F_u,
\end{equation}
representing mating events with wild and sterile males, respectively.
Setting $\beta = 1/\gamma$ recovers the equivalent formulation with a
leading competitiveness coefficient employed in~\cite{IboGumTay2020}.

A key structural feature of the mating
rate~\eqref{eq:mating_rate} is that the denominator $\gamma +
\zeta(M_w+\eta M_s)$ remains bounded away from zero for all $\gamma
>0$, regardless of male population size.  Consequently, at low
mosquito densities the mating rate is \emph{quadratically} small in
the state variables: it is bilinear in $(M_w + \eta M_s,\,F_u)$ near
the origin, constituting a Holling type-II functional response that
degenerates to a mass-action term~\cite{XiaRua2001}.  This quadratic
vanishing is the mechanistic source of the mate-finding Allee effect
analyzed in Section~\ref{sec:analysis}.

\medskip
\noindent\textbf{Sterile-male release rate.}
The function $S(t)\geq 0$ denotes the per-unit-time release rate of
sterile males into the habitat.  Although field implementations
typically deploy sterile males in discrete, periodic
batches~\cite{IboGumTay2020,DumTch2012,LiAi2020}, such impulsive
schedules are commonly well approximated by continuous functions on
timescales long relative to the inter-release
interval~\cite{EstYan2005,LiCaiLi2017,ThoYanEst2010}.  In the present
study, $S(t)$ is assumed to be continuous, nonneg\-ative, and bounded
for all $t\geq 0$; specific functional forms are analyzed in
Sections~\ref{sec:analysis} and~\ref{sec:numerics}.

\renewcommand{\arraystretch}{1.3}
\begin{longtable}{|c|p{11.2cm}|}
\caption{State variables and parameters of model~\eqref{eq:model_I}.
All parameter values are nonnegative; mortality rates and maturation rates are strictly positive.}
\label{tab:model_I_parameters}
\\
  \hline\hline
  \multicolumn{2}{|c|}{\bfseries State Variables}\\ \hline
  $E$        & Number of mosquito eggs \\ \hline
  $L$        & Number of mosquito larvae \\ \hline
  $P$        & Number of mosquito pupae \\ \hline
  $F_u$      & Number of unmated adult female mosquitoes \\ \hline
  $F_{mw}$   & Number of adult female mosquitoes mated with wild males \\ \hline
  $F_{ms}$   & Number of adult female mosquitoes mated with sterile males \\ \hline
  $M_w$      & Number of wild adult male mosquitoes \\ \hline
  $M_s$      & Number of sterile adult male mosquitoes \\ \hline
  \multicolumn{2}{|c|}{\bfseries Parameters}\\ \hline
  $\phi$     & Composite oviposition parameter: product of the mean
               egg-laying frequency and mean clutch size per oviposition
               event of a mated adult female \\ \hline
  $\KE$      & Environmental carrying capacity for eggs \\ \hline
  $\sigma_E$ & Per-capita egg-to-larva maturation (hatching) rate \\ \hline
  $\mu_E$    & Per-capita natural mortality rate of eggs \\ \hline
  $\sigma_L$ & Per-capita larva-to-pupa maturation rate \\ \hline
  $\mu_L$    & Per-capita natural mortality rate of larvae \\ \hline
  $\delta_L$ & Density-dependent larval mortality coefficient \\ \hline
  $\sigma_P$ & Per-capita pupa-to-adult maturation (eclosion) rate \\ \hline
  $\mu_P$    & Per-capita natural mortality rate of pupae \\ \hline
  $r$        & Female sex ratio at eclosion: fraction of newly eclosed
               adults that are female, $r\in(0,1)$ \\ \hline
  $\eta$     & Fitness-cost coefficient for sterile-male mating
               competitiveness relative to wild males,
               $\eta\in(0,1)$ \\ \hline
  $\gamma$   & Mean first-encounter time between a single unmated female
               and a single effective male in the habitat \\ \hline
  $\zeta$    & Mean duration of the post-emergence refractory period of
               adult females prior to mate-seeking \\ \hline
  $\mu_F$    & Per-capita natural mortality rate of adult female
               mosquitoes \\ \hline
  $\mu_M$    & Per-capita natural mortality rate of adult male
               mosquitoes \\ \hline
  $S(t)$     & Per-unit-time sterile-male release rate (control input) \\ \hline\hline
\end{longtable}

\begingroup
\renewcommand{\arraystretch}{1.25}
\begin{table}[!htbp]
    \centering
    \begin{tabular}{|c|c|p{7.5cm}|}
    \hline\hline
        Parameter & Baseline value & Source(s) \\ \hline\hline
        $\phi$ & $26$ day${}^{-1}$ & \cite{AfrGitYan2012,TakKloCha1998} \\ \hline
        $\KE$ & $10^5$ & See Appendix~\ref{app:spatial} \\ \hline
        $\sigma_E$ & $0.37$ day${}^{-1}$ & \cite{BayLin2003} \\ \hline
        $\mu_E$ & $0.054$ day${}^{-1}$ & \cite{BayLin2003} \\ \hline
        $\sigma_L$ & $0.091$ day${}^{-1}$ & \cite{BayLin2003} \\ \hline
        $\mu_L$ & $0.054$ day${}^{-1}$ & \cite{BayLin2003} \\ \hline
        $\delta_L$ & $5\times 10^{-5}$ day${}^{-1}$ & See Appendix~\ref{app:spatial} \\ \hline
        $\sigma_P$ & $0.37$ day${}^{-1}$ & \cite{BayLin2003} \\ \hline
        $\mu_P$ & $0.054$ day${}^{-1}$ & \cite{BayLin2003} \\ \hline
        $r$ & $0.5$ & \cite{MazKidMyaKwe2019} \\ \hline
        $\eta$ & $0.75$ & \cite{EstYan2005} \\ \hline
        $\gamma$ & $450$ days & See Appendix~\ref{app:spatial} \\ \hline
        $\zeta$ & $1$ day & \cite{OliBenLemGil2011,Fer2022} \\ \hline
        $\mu_F$ & $0.083$ day${}^{-1}$ & \cite{AduGetYew2022,MatBetOse2020} \\ \hline
        $\mu_M$ & $0.15$ day${}^{-1}$ & \cite{LamLeeGabWat2022} \\ \hline\hline
    \end{tabular}
    \caption{Baseline parameter values used in the numerical simulations
    of model~\eqref{eq:model_I}.  Unless otherwise stated, each
    simulation is performed using these values.  The spatially dependent
    parameters $\KE$, $\delta_L$, and $\gamma$ are interpreted relative
    to the reference domain described in Appendix~\ref{app:spatial}.}
    \label{tab:parameter_values}
\end{table}
\endgroup
%

\subsection{Relationship to Some Related Prior Models}\label{sec:prior_models}

\noindent\textbf{Relationship to Iboi--Gumel--Taylor (2020).}
Beyond the simplifications described in Section~\ref{sec:model}, the
present model differs from~\cite{IboGumTay2020} in its treatment of
the release schedule (continuous rather than impulsive) and in the
form of the sterile-male release rate considered in
Section~\ref{sec:analysis}, which combines a constant component~$S_0$
with a component proportional to the current wild adult population.
These choices preserve the autonomous structure of the model and
facilitate the complete bifurcation analysis carried out in
Section~\ref{sec:analysis}.

\medskip
\noindent\textbf{Relationship to Esteva--Yang (2005).}
Model~\eqref{eq:model_I} is also related to the Esteva--Yang
model~\cite{EstYan2005}, which likewise describes SIT dynamics via a
per-female mating rate depending on wild and sterile male abundance.
The two formulations are, however, mechanistically distinct.  Adapting
the Esteva--Yang mating term to the notation of the present model,
their per-female mating rate is
\begin{equation}\label{eq:EY_mating}
  R_{\mathrm{mating,E\text{-}Y}}
  = \frac{M_w + \eta M_s}
         {\zeta(1-\eta)M_s + \zeta(M_w + \eta M_s)}\;F_u.
\end{equation}
This expression superficially resembles~\eqref{eq:mating_rate} with
the constant $\gamma$ replaced by the density-dependent
factor~$\zeta(1-\eta)M_s$.  This substitution, however, introduces a
critical deficiency: the effective encounter-time coefficient
$\zeta(1-\eta)M_s$ vanishes identically whenever sterile males are
absent ($M_s = 0$) or fully competitive ($\eta = 1$), reducing the
expected mate-search duration to zero in precisely those
regimes---implying instantaneous mate-finding even at arbitrarily low
population densities, a biologically untenable conclusion.

In the present model, by contrast, the strictly positive,
density-independent parameter $\gamma > 0$ ensures that mate-finding
limitation is present at all population densities, including in the
complete absence of SIT intervention.  It is this unconditional
positivity of~$\gamma$ that gives rise to the mate-finding Allee
effect established in Theorem~\ref{thm:LAStrivial} and that renders
the mosquito-free equilibrium locally asymptotically stable for all
admissible parameter values.  A formal demonstration that setting
$\gamma = 0$ eliminates the Allee effect---and that the mosquito-free
equilibrium becomes unstable when $\Rq > 1$---is given in
Appendix~\ref{app:linear_mating}.

Both models share the prediction that reduced sterile-male
competitiveness ($\eta < 1$) suppresses population-level reproductive
success.  The present framework offers two additional mechanistic
contributions beyond~\cite{EstYan2005}.  First, the explicit
decomposition of the pre-mating interval into a post-emergence
refractory phase~($\zeta$) and a subsequent mate-search
phase~($\gamma/(M_w + \eta M_s)$) provides a transparent,
two-timescale representation of female reproductive biology.  Second,
the parameter~$\gamma$, derived from mean first-passage-time theory
(Appendix~\ref{app:spatial}), furnishes a mechanistic link between
individual-level spatial encounter dynamics and population-level
reproductive rates, lending the model greater interpretive and
predictive transparency.

The analysis in Section~\ref{sec:analysis} recovers, within this more
detailed framework, the qualitative results established by Esteva and
Yang for the SIT-free system---namely, the existence of two nontrivial
positive equilibria when the net reproductive output exceeds unity,
and their mutual annihilation through a saddle--node bifurcation as
SIT intensity is increased---together with explicit asymptotic formulas
for the bifurcation thresholds and a more complete characterization of
the global stability structure.

\subsection{Well-Posedness of the Model}\label{sec:well_posedness}

The biologically feasible region for model~\eqref{eq:model_I} is
\begin{equation*}
  \Omega
  = \mathbb{R}^8_+
  = \bigl\{(E,L,P,F_u,F_{mw},F_{ms},M_w,M_s)\in\mathbb{R}^8 :
    \text{all components} \geq 0\bigr\}.
\end{equation*}

\begin{proposition}[Local existence and uniqueness]
  \label{prop:local_existence}
  For each initial condition in $\Omega$, model~\eqref{eq:model_I}
  has a unique solution on a maximal time interval $[0,T_{\max})$.
\end{proposition}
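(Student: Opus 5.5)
We will obtain the statement from the Picard--Lindel\"of theorem combined with a standard continuation argument. Write system~\eqref{eq:model_I} as $\dot x = f(t,x)$ with $x=(E,L,P,F_u,F_{mw},F_{ms},M_w,M_s)$, where the only explicit time dependence enters through the release rate $S(t)$. By the standing assumptions of Section~\ref{sec:formulation}, $S$ is continuous and bounded. Hence $f$ is continuous in $t$, and it suffices to show that $f$ is locally Lipschitz in $x$ on an open neighbourhood of $\Omega$, uniformly for $t$ in compact intervals.

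The first step is to handle the mating denominator $D(M_w,M_s)=\gamma+\zeta(M_w+\eta M_s)$, which is the only possible source of singularity. On $\Omega$ we have $D\ge\gamma>0$. By continuity, $D>\gamma/2$ on the open set
\[
U=\bigl\{x\in\mathbb{R}^8:\ \zeta(M_w+\eta M_s)>-\gamma/2\bigr\},
\]
and $U$ contains $\Omega$. On $U$ the nonlinear terms $(M_w+\eta M_s)F_u/D$, $M_wF_u/D$ and $\eta M_sF_u/D$ are rational functions with a nonvanishing denominator, so they are $C^\infty$. The remaining terms are affine or quadratic in $x$: namely $\phi(1-E/\KE)F_{mw}$ and $\delta_L L^2$. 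Thus $f(t,\cdot)\in C^1(U)$, and its Jacobian is jointly continuous in $(t,x)$ because $S$ enters only additively in the $\dot M_s$ equation. A $C^1$ map is locally Lipschitz on $U$, with Lipschitz constants on compact subsets of $U$ independent of $t$.

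The second step applies the Picard--Lindel\"of (Cauchy--Lipschitz) theorem at any initial point $x_0\in\Omega\subset U$. This gives a unique local solution. Gluing local solutions via uniqueness produces the unique maximal solution on an interval $[0,T_{\max})$, where $T_{\max}\in(0,\infty]$. The same argument shows that if $T_{\max}<\infty$, the solution must leave every compact subset of $U$ as $t\to T_{\max}$.

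The only delicate point is that $\Omega$ is closed, so Lipschitz continuity has to be established on an open neighbourhood rather than on $\Omega$ itself. The choice of $U$ above resolves this, since $D$ stays bounded away from zero there. Note also that $(1-E/\KE)$ carries no singularity. Positivity and global existence ($T_{\max}=\infty$) are not part of this statement, so they are left to subsequent results.
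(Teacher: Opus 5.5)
Your proposal is correct and follows essentially the same route as the paper: $\gamma>0$ keeps the mating denominator bounded away from zero near $\Omega$, so the vector field is $C^1$ (hence locally Lipschitz) in the state and continuous in $t$, and Picard--Lindel\"of together with the standard continuation argument yields the unique maximal solution. Your explicit open neighbourhood $U$, on which the denominator exceeds $\gamma/2$, is a more careful version of the paper's statement that the vector field is continuously differentiable ``throughout $\Omega$''; this matters because $\Omega$ is closed.
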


\begin{proof}
Write~\eqref{eq:model_I} as $\dot{X} = G(X,t)$, where
$X=(E,L,P,F_u,F_{mw},F_{ms},M_w,M_s)^\top$.  For any initial
condition in $\Omega$, the mating denominators $\gamma +
\zeta(M_w+\eta M_s)$ are bounded away from zero in a neighborhood of
the initial condition (since $\gamma > 0$).  Consequently, $G$ is
continuously differentiable---hence locally Lipschitz---in~$X$
throughout~$\Omega$, and continuous in~$t$.  Local existence and
uniqueness on a maximal interval $[0,T_{\max})$ follow from the
Picard--Lindel\"of theorem~\cite{Teschl2012}.  Global existence is
established in Corollary~\ref{cor:global existence}.
\end{proof}

\begin{remark}
The strict positivity of $\gamma$ (Assumption~(A3)) is essential: it
ensures that the mating terms in~\eqref{eq:model_I} are smooth on all
of~$\Omega$, including at zero mosquito density, avoiding the
degeneracy that would arise if $\gamma = 0$ and $(M_w,M_s) = (0,0)$
simultaneously.
\end{remark}

\begin{proposition}[Positive invariance of $\Omega$]
  \label{prop:positivity}
  If $S(t)\geq 0$ for all $t\geq 0$, then every solution
  of~\eqref{eq:model_I} with initial condition in~$\Omega$ remains
  in~$\Omega$ for all $t\in[0,T_{\max})$.
\end{proposition}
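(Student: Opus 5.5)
The plan is the standard quasi-positivity (tangency) argument for the vector field $G$ on the boundary of the orthant. Since $G$ is locally Lipschitz on a neighborhood of $\Omega$ by Proposition~\ref{prop:local_existence}, positive invariance of the closed orthant $\mathbb{R}^8_+$ follows once we check that, for each component $x_i$, we have $G_i(X,t)\ge 0$ whenever $X\in\Omega$ and $x_i=0$. This is the classical invariance criterion for cones under Lipschitz fields; see, e.g., the quasi-positivity results in~\cite{Teschl2012}.

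First, we verify the boundary inequalities equation by equation.
\begin{itemize}
\item On $E=0$: $\dot E=\phi F_{mw}\ge 0$.
\item On $L=0$: $\dot L=\sigma_E E\ge 0$.
\item On $P=0$: $\dot P=\sigma_L L\ge 0$.
\item On $F_u=0$: $\dot F_u=r\sigma_P P\ge 0$.
\item On $F_{mw}=0$: $\dot F_{mw}=M_w F_u/(\gamma+\zeta(M_w+\eta M_s))\ge 0$. The denominator is at least $\gamma>0$ on $\Omega$.
\item On $F_{ms}=0$: the analogous sterile-mating term is likewise nonnegative.
\item On $M_w=0$: $\dot M_w=(1-r)\sigma_P P\ge 0$.
\item On $M_s=0$: $\dot M_s=S(t)\ge 0$ by hypothesis.
\end{itemize}

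Second, we need to pass from this quasi-positivity to invariance rigorously rather than by a heuristic ``the derivative points inward'' remark. I would avoid citing an abstract theorem and instead write each equation in the form $\dot x_i=a_i(t)-b_i(t)x_i$, with $a_i(t)\ge0$ as long as the other components stay nonnegative and $b_i$ continuous along the solution. For example, in the $F_u$ equation $b(t)=\mu_F+(M_w+\eta M_s)/(\gamma+\zeta(M_w+\eta M_s))$, and in the $L$ equation $b(t)=\sigma_L+\mu_L+\delta_L L$. The $E$ equation has the form $\dot E=\phi F_{mw}-(\phi F_{mw}/\KE+\sigma_E+\mu_E)E$. The integrating factor then gives
\[
x_i(t)=x_i(0)e^{-\int_0^t b_i}+\int_0^t a_i(s)e^{-\int_s^t b_i}\,ds .
\]

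The main obstacle is the circularity: $a_i\ge0$ requires the other components to be nonnegative. I would handle it with a first-exit-time argument. Perturb the data by replacing $S$ with $S+\varepsilon$ and each $x_i(0)$ with $x_i(0)+\varepsilon$, so that the perturbed solution starts in the interior. Let $t^*$ be the first time at which some component vanishes. On $[0,t^*]$ all components are nonnegative, so each $a_i\ge0$. Then the representation formula for that component gives $x_i(t^*)\ge \varepsilon e^{-\int_0^{t^*}b_i}>0$, which is a contradiction. Hence the perturbed solutions stay in the interior. Finally, let $\varepsilon\to0$ and use continuous dependence on initial data and parameters on compact subintervals of $[0,T_{\max})$, which is again available from the local Lipschitz property. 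Since $\Omega$ is closed, the limit solution remains in $\Omega$.
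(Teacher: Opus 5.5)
Your proof is correct and rests on the same face-by-face quasi-positivity check as the paper, which inspects each equation on the face where that variable vanishes and concludes directly that no trajectory can exit $\Omega$. Your integrating-factor representation, first-zero-time argument for strictly positive perturbed data, and continuous-dependence limit supply the rigorous justification of that ``cannot exit'' step, which matters when the field is only tangent to a face; the paper leaves this step implicit.
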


\begin{proof}
It suffices to verify that the vector field $G$ points inward (or is
tangent to the boundary) on each face of~$\Omega$ defined by setting
one state variable to zero, with all remaining variables
nonneg\-ative.  Inspection of each equation of~\eqref{eq:model_I}
confirms this: on every such face the corresponding time derivative is
nonneg\-ative.  (For the $\dot{M}_s$ equation, the condition $S(t)
\geq 0$ is necessary and sufficient.)  Hence no trajectory originating
in~$\Omega$ can exit through any boundary face, and $\Omega$ is
positively invariant.
\end{proof}

\begin{proposition}[Uniform boundedness]
  \label{prop:bounded_solns}
  If $S(t)$ is nonneg\-ative and bounded, then every solution
  of~\eqref{eq:model_I} with initial condition in~$\Omega$ is bounded
  in~$\Omega$ for all $t\in[0,T_{\max})$.
\end{proposition}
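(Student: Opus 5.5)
We prove boundedness by a cascade of scalar comparison arguments, following the feed-forward structure of \eqref{eq:model_I}. Throughout we use Proposition~\ref{prop:positivity}, so every component is nonnegative on $[0,T_{\max})$. Let $\bar S=\sup_{t\ge0}S(t)<\infty$.

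\textbf{The egg compartment.} We handle $E$ first, by working with the quantity $E-\KE$. On the set $\{E\ge \KE\}$ the logistic factor gives $\phi(1-E/\KE)F_{mw}\le 0$, hence $\dot E\le -(\sigma_E+\mu_E)E<0$ there. A standard contradiction argument then yields
\[
E(t)\le \max\{E(0),\KE\}=:B_E \quad\text{for all } t.
\]
Concretely, suppose $E$ first exceeds $B_E$ at some time; immediately afterward $E>\KE$, so $\dot E<0$, which is impossible. This is the key point: the only source term that is potentially unbounded, $F_{mw}$, enters multiplied by $(1-E/\KE)$. As a result, $E$ is controlled without any a priori bound on the adults.

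\textbf{The aquatic stages.} With $B_E$ in hand we propagate down the chain:
\begin{itemize}
\item $\dot L\le \sigma_E B_E-(\sigma_L+\mu_L)L$, so $L(t)\le \max\{L(0),\sigma_E B_E/(\sigma_L+\mu_L)\}$. The term $-\delta_L L^2\le0$ is simply discarded.
\item Similarly, $P(t)\le \max\{P(0),\sigma_L B_L/(\sigma_P+\mu_P)\}$.
\end{itemize}
Each of these follows from the linear comparison lemma, or equivalently from the variation-of-constants formula for $\dot y\le a-by$ with $b>0$.

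\textbf{The adult stages.} For males, $\dot M_w\le(1-r)\sigma_P B_P-\mu_M M_w$ and $\dot M_s\le \bar S-\mu_M M_s$ give bounds of the same form. For females, summing the three female equations cancels the mating terms exactly:
\[
\frac{d}{dt}(F_u+F_{mw}+F_{ms})=r\sigma_P P-\mu_F(F_u+F_{mw}+F_{ms})\le r\sigma_P B_P-\mu_F(F_u+F_{mw}+F_{ms}).
\]
Hence the total female population is bounded, and so is each female compartment by nonnegativity.

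All bounds depend only on the initial data and on the parameters, and not on $T_{\max}$. This gives the uniform bound on $[0,T_{\max})$. The bounds are in fact eventually independent of the initial data: the $\limsup$ is at most the equilibrium-type constants. This also prepares the global-existence corollary, since a bounded solution cannot blow up, forcing $T_{\max}=\infty$.

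**Main obstacle.** The only delicate step is the $E$ bound. The naive estimate $\dot E\le\phi F_{mw}$ is circular, because $F_{mw}$ depends on $E$ through $L$ and $P$. The plan avoids this by using the sign of $(1-E/\KE)$ above carrying capacity. An alternative, if one prefers not to split on $E\ge\KE$, is the inequality $\phi(1-E/\KE)F_{mw}\le$ (something bounded). That fails without a bound on $F_{mw}$, so the invariance-of-sublevel-set argument is the intended route.
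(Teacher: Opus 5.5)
Your proof is correct and follows the paper's own cascade of comparison bounds. You bound $E$ through the sign of the logistic factor above $K_E$, then bound $L$, $P$, $M_w$ and $M_s$ successively with the linear comparison lemma for $\dot y\le a-by$.

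The only real difference is in the female compartments. The paper bounds $F_u$ first, then bounds $F_{mw}$ and $F_{ms}$ by estimating their mating gains by $M_{w,\max}F_{u,\max}/\gamma$ and $\eta M_{s,\max}F_{u,\max}/\gamma$. Your summation of the three female equations instead cancels the mating terms exactly. It therefore needs no bound on the mating coefficients and gives a female bound independent of $M_s$, and hence of $S$. That is convenient when $S$ itself depends on the wild adult abundance, as in the release rate \eqref{eq:release_rate}.
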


\begin{proof}
By Proposition~\ref{prop:positivity}, all solution components are
nonneg\-ative, so it remains to establish upper bounds.  The proof
rests on the following standard comparison fact: if a nonneg\-ative
function~$u$ satisfies $\dot{u}\leq a - bu$ for constants $a,b > 0$,
then the Gr\"onwall inequality~\cite{Teschl2012} gives
\begin{equation}\label{eq:Gronwall_bound}
  u(t) \leq u_{\max} := \max\!\left\{u(0),\,\frac{a}{b}\right\}
  \quad \text{for all } t\in[0,T_{\max}).
\end{equation}

\emph{Bound on $E$.}  If $E \geq \KE$, then $1 - E/\KE \leq 0$, so
$\dot{E} \leq -(\sigma_E + \mu_E)\KE < 0$.  Hence $E$ cannot
increase through~$\KE$, yielding
\[
  0 \leq E(t) \leq E_{\max} := \max\{E(0),\,\KE\}.
\]

\emph{Sequential bounds on remaining variables.}  Since $E$ is now
bounded, the comparison fact~\eqref{eq:Gronwall_bound} yields upper
bounds successively via the following dependency chain:
\begin{itemize}
  \item $L_{\max}$: from $\dot{L}\leq \sigma_E E_{\max} -
    (\sigma_L+\mu_L)L$ (the term $-\delta_L L^2 \leq 0$ only
    strengthens the inequality);
  \item $M_{s,\max}$: from $\dot{M}_s \leq S_{\max} - \mu_M M_s$,
    where $S(t)\leq S_{\max}<\infty$ by hypothesis;
  \item $P_{\max}$: from $\dot{P}\leq \sigma_L L_{\max} -
    (\sigma_P+\mu_P)P$;
  \item $F_{u,\max}$ and $M_{w,\max}$: from the equations for
    $\dot{F}_u$ and $\dot{M}_w$, using $P_{\max}$;
  \item $F_{mw,\max}$ and $F_{ms,\max}$: from the equations for
    $\dot{F}_{mw}$ and $\dot{F}_{ms}$, using $F_{u,\max}$,
    $M_{w,\max}$, and $M_{s,\max}$.
\end{itemize}
Explicit expressions for each upper bound are recorded in
Appendix~\ref{app:ext_proof}.
\end{proof}

\begin{corollary}[Global existence and uniqueness]
  \label{cor:global existence}
  If $S(t)$ is nonneg\-ative and bounded, then every solution
  of~\eqref{eq:model_I} with initial condition in~$\Omega$ exists and
  is unique for all $t\geq 0$; that is, $T_{\max} = \infty$.
\end{corollary}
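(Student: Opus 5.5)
The plan is to combine Proposition~\ref{prop:local_existence} with the a priori bounds of Propositions~\ref{prop:positivity} and~\ref{prop:bounded_solns}, using the standard continuation (blow-up) alternative. Local theory for $\dot X = G(X,t)$ says that if $T_{\max}<\infty$, then $\|X(t)\|\to\infty$ as $t\uparrow T_{\max}$; equivalently, the solution must eventually leave every compact subset of the domain on which $G$ is locally Lipschitz. So the whole argument reduces to ruling out this escape.

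The steps, in order, are as follows.
\begin{enumerate}
\item Fix an initial condition in $\Omega$. By Proposition~\ref{prop:positivity}, the solution stays in $\Omega$ on $[0,T_{\max})$.
\item By Proposition~\ref{prop:bounded_solns}, each component satisfies $0\le X_i(t)\le X_{i,\max}$ on $[0,T_{\max})$. These bounds depend only on the initial data, on $\sup_t S(t)$ and on the parameters, and not on $T_{\max}$. Hence the trajectory lies in the compact box $K=\prod_i[0,X_{i,\max}]\subset\Omega$.
\item On $K\times[0,\infty)$ the mating denominator satisfies $\gamma+\zeta(M_w+\eta M_s)\ge\gamma>0$. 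Therefore $G$ is Lipschitz in $X$ on $K$, uniformly in $t$, and $\sup_{K\times[0,\infty)}|G|=:C<\infty$, using that $S$ is bounded.
\item Suppose $T_{\max}<\infty$. Then $\|\dot X\|\le C$, so $X$ is uniformly Lipschitz in $t$ and $\lim_{t\uparrow T_{\max}}X(t)=X^*\in K$ exists by the Cauchy criterion. Apply Proposition~\ref{prop:local_existence} again from the point $(T_{\max},X^*)$; the time-dependent version of Picard--Lindel\"of applies since $G$ is continuous in $t$. This extends the solution past $T_{\max}$ and contradicts maximality, so $T_{\max}=\infty$.
\item Uniqueness on $[0,\infty)$ follows because any two solutions agree on every compact interval $[0,T]$. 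This uses uniqueness from Proposition~\ref{prop:local_existence}, or directly Gr\"onwall with the Lipschitz constant on a box containing both solutions.
\end{enumerate}

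The only delicate point is to make sure the bounds in Proposition~\ref{prop:bounded_solns} are genuinely uniform on $[0,T_{\max})$, meaning they do not secretly depend on $T_{\max}$. I would check this against the dependency chain in that proof. The $E$ bound comes from the sign of $\dot E$ when $E\ge\KE$. Every later bound comes from the Gr\"onwall comparison \eqref{eq:Gronwall_bound} with constants built from earlier maxima. The $F_{mw}$ and $F_{ms}$ steps need the coefficient bounds $M_w/(\gamma+\zeta(\cdot))\le 1/\zeta$ and $\eta M_s/(\gamma+\zeta(\cdot))\le 1/\zeta$, or alternatively $\le M_{w,\max}/\gamma$ and $\eta M_{s,\max}/\gamma$, so that the inflow terms are bounded constants. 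Once this is confirmed, the continuation argument above is routine.
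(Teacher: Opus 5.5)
Your proposal is correct and follows essentially the same route as the paper. The paper combines the local Lipschitz property from Proposition~\ref{prop:local_existence} with the a priori bounds of Propositions~\ref{prop:positivity} and~\ref{prop:bounded_solns}, then simply cites standard continuation arguments~\cite{Teschl2012}; you make that step explicit (bounded $\dot X$ on the compact box, existence of $\lim_{t\uparrow T_{\max}}X(t)$, restart), and the explicit constants in Appendix~\ref{app:ext_proof} confirm your check that the bounds do not depend on $T_{\max}$.
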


\begin{proof}
Since $G(X,t)$ is locally Lipschitz in~$X$ and all solutions
in~$\Omega$ are uniformly bounded by
Proposition~\ref{prop:bounded_solns}, global existence and uniqueness
follow from standard continuation arguments~\cite{Teschl2012}.
\end{proof}

\noindent
Propositions~\ref{prop:local_existence}--\ref{prop:bounded_solns} and
Corollary~\ref{cor:global existence} together confirm that
model~\eqref{eq:model_I} is well-posed both mathematically and
ecologically on~$\Omega$: solutions exist globally, are unique,
preserve nonneg\-ativity, and remain bounded for all biologically
admissible initial conditions and release schedules.  The equilibrium
structure and stability properties of the model are analyzed in
Section~\ref{sec:analysis}.

    \section{Existence and Asymptotic Stability of Equilibria of the Model}\label{sec:analysis}

\subsection{Existence and stability of equilibria of SIT-free model}

Before analyzing the full model~\eqref{eq:model_I}, it is
instructive to examine the special case in which no sterile males are
released, obtained by setting $S(t) \equiv 0$
in~\eqref{eq:model_I}.  This SIT-free reduction isolates the
intrinsic dynamics of the wild mosquito population and provides the
analytical foundation for understanding the effects of SIT
intervention in subsequent sections.  The SIT-free model admits a trivial
mosquito free equilibrium (MFE):
\begin{equation*}
    \mathcal E_0
    = (E,\,L,\,P,\,F_u,\,F_{mw},\,F_{ms},\,M_w,\,M_s)
    = (0,\,0,\,0,\,0,\,0,\,0,\,0,\,0).
\end{equation*}
 
\begin{theorem}\label{thm:LAStrivial}
The MFE $\mathcal E_0$ of the SIT-free
model~\eqref{eq:model_I} is locally asymptotically stable for all
admissible parameter values.
\end{theorem}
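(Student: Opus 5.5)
The plan is to linearize the SIT-free system (model~\eqref{eq:model_I} with $S\equiv 0$) at $\mathcal E_0$ and show that the Jacobian is triangular with strictly negative diagonal entries. The mechanism is the one described in Section~\ref{sec:formulation}: with $\gamma>0$ the mating terms are quadratically small at the origin. First I will record that the mating coefficients $M_w/(\gamma+\zeta(M_w+\eta M_s))\,F_u$ and $\eta M_s/(\gamma+\zeta(M_w+\eta M_s))\,F_u$ are smooth near the origin, since the denominator is at least $\gamma$. They are products of $F_u$ with a function vanishing at $M_w=M_s=0$, so all their first partial derivatives vanish at $\mathcal E_0$. The same holds for the nonlinear pieces $-\phi E F_{mw}/\KE$ and $-\delta_L L^2$.

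The Jacobian $J(\mathcal E_0)$ then has the following linear part:
\begin{align*}
&\dot E \approx \phi F_{mw}-(\sigma_E+\mu_E)E,\quad \dot L\approx \sigma_E E-(\sigma_L+\mu_L)L,\quad \dot P\approx \sigma_L L-(\sigma_P+\mu_P)P,\\
&\dot F_u\approx r\sigma_P P-\mu_F F_u,\quad \dot F_{mw}\approx -\mu_F F_{mw},\quad \dot F_{ms}\approx-\mu_F F_{ms},\\
&\dot M_w\approx (1-r)\sigma_P P-\mu_M M_w,\quad \dot M_s\approx -\mu_M M_s.
\end{align*}
The key observation is that the linear feedback loop from adults back to eggs is broken. $F_{mw}$ is decoupled at linear order because its only source, the mating term, has zero derivative. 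So if I order the variables as $(F_{mw},F_{ms},M_s,E,L,P,F_u,M_w)$, each variable depends linearly only on itself and on earlier variables. Hence $J(\mathcal E_0)$ is lower triangular in this ordering, and its eigenvalues are the diagonal entries $-\mu_F,-\mu_F,-\mu_M,-(\sigma_E+\mu_E),-(\sigma_L+\mu_L),-(\sigma_P+\mu_P),-\mu_F,-\mu_M$. These are all strictly negative, because mortality and maturation rates are positive.

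By the linearization (Hartman--Grobman / Lyapunov indirect) theorem, $\mathcal E_0$ is then locally asymptotically stable, indeed exponentially so, regardless of $\phi$, $\KE$, $\gamma$, or $\zeta$. Positive invariance from Proposition~\ref{prop:positivity} guarantees that nearby solutions starting in $\Omega$ stay there.

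The only real point to check carefully is the vanishing of the derivatives of the mating term at the origin, which uses $\gamma>0$. Differentiating $F_u M_w/(\gamma+\zeta(M_w+\eta M_s))$ with respect to $M_w$ gives $F_u\gamma/(\cdot)^2$, which is $0$ at $F_u=0$. Differentiating with respect to $F_u$ gives $M_w/(\cdot)$, which is $0$ at $M_w=0$. This is where the contrast with the $\gamma=0$ case of Appendix~\ref{app:linear_mating} arises. Everything else is routine.
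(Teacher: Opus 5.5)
Your proof is correct and is essentially the paper's argument. Both proofs rest on the same two facts. First, the mating terms are products of two factors that vanish at $\mathcal E_0$, so they contribute nothing to the Jacobian and the linear adult-to-egg loop is broken. Second, this makes the Jacobian triangular, with eigenvalues $-(\sigma_E+\mu_E)$, $-(\sigma_L+\mu_L)$, $-(\sigma_P+\mu_P)$, $-\mu_F$ (three times) and $-\mu_M$ (twice). Your reordering $(F_{mw},F_{ms},M_s,E,L,P,F_u,M_w)$ is the same block structure the paper gets by splitting off rows 5, 6 and 8 and keeping the lower-triangular $5\times 5$ block in $(E,L,P,F_u,M_w)$.

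One cosmetic slip: in general $\partial_{M_w}$ of the mating term equals $F_u(\gamma+\zeta\eta M_s)/(\cdot)^2$, not $F_u\gamma/(\cdot)^2$. The two agree only when $M_s=0$, and either way the derivative vanishes at $\mathcal E_0$, so your conclusion is unaffected.
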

 
\begin{proof}
Setting $S(t) \equiv 0$, the Jacobian of~\eqref{eq:model_I}
evaluated at $\mathcal E_0$ is
\begin{equation}\label{eq:AI_0}
    A(\mathcal E_0)=\begin{pmatrix}
     -\mu_E-\sigma_E & 0 & 0 & 0 & \phi & 0 & 0 & 0  \\
     \sigma_E & -\mu_L-\sigma_L & 0 & 0 & 0 & 0 & 0 & 0 \\
     0 & \sigma_L & -\mu_P-\sigma_P & 0 & 0 & 0 & 0 & 0 \\
     0 & 0 & r\sigma_P & -\mu_F & 0 & 0 & 0 & 0 \\
     0 & 0 & 0 & 0 & -\mu_F & 0 & 0 & 0 \\
     0 & 0 & 0 & 0 & 0 & -\mu_F & 0 & 0 \\
     0 & 0 & (1-r)\sigma_P & 0 & 0 & 0 & -\mu_M & 0\\
     0 & 0 & 0 & 0 & 0 & 0 & 0 & -\mu_M
    \end{pmatrix}.
\end{equation}
The numerators of the mating terms $\frac{M_w}{\gamma+\zeta(M_w+\eta M_s)}F_u$ and
$\frac{\eta M_s}{\gamma+\zeta(M_w+\eta M_s)}F_u$ are bilinear in
$(M_w,F_u)$ and $(M_s,F_u)$, respectively; since both factors in
each product vanish at $\mathcal E_0$, all partial derivatives of these
terms with respect to any state variable are zero at $\mathcal E_0$.
Consequently, rows~5, 6, and~8 of $A(\mathcal E_0)$---corresponding to
$\dot{F}_{mw}$, $\dot{F}_{ms}$, and $\dot{M}_s$---contain only
their respective diagonal entries, contributing three decoupled
eigenvalues $-\mu_F$ (multiplicity two) and $-\mu_M$ directly.
The remaining $5\times 5$ submatrix, corresponding to the variables
$(E,L,P,F_u,M_w)$, is lower-triangular (the entry $\phi$ appearing
in row~1 of $A(\mathcal E_0)$ at column~5 corresponds to $F_{mw}$,
which lies outside this submatrix and does not affect its spectrum),
with diagonal entries $-(\sigma_E+\mu_E)$, $-(\sigma_L+\mu_L)$,
$-(\sigma_P+\mu_P)$, $-\mu_F$, and $-\mu_M$.  The eight eigenvalues of $A(\mathcal{E}_{0})$
are $-(\sigma_{E}+\mu_{E})$ (row~1), $-(\sigma_{L}+\mu_{L})$
(row~2), $-(\sigma_{P}+\mu_{P})$ (row~3), $-\mu_{F}$ from
rows~4,\,5,\,6 (multiplicity~3), and $-\mu_{M}$ from rows~7
and~8 (multiplicity~2), total $1+1+1+3+2=8$, all strictly
negative, and $\mathcal E_0$ is locally
asymptotically stable for all admissible parameter values.
\end{proof}
 
The local stability of $\mathcal E_0$ admits a natural mechanistic
explanation.  The per-female mating rate in~\eqref{eq:model_I},
namely
\[
    \frac{M_w}{\gamma + \zeta M_w}\,F_u
    \;\sim\;
    \frac{1}{\gamma}\,M_w F_u
    \quad \text{as } (M_w,F_u)\to (0,0),
\]
is \emph{quadratic} (bilinear) in the state variables near $\mathcal E_0$
(a Holling type-II functional response that degenerates to a
mass-action term at low densities \cite{XiaRua2001}).
Consequently, mating events become vanishingly rare as the population
approaches extinction: unmated females cannot reproduce, and the
absence of males near $\mathcal E_0$ ensures that no new mated females
are produced.  This renders the effective reproduction pathway
inactive in a neighborhood of $\mathcal E_0$, and the linearized
dynamics are governed entirely by mortality and developmental
transitions.  Equivalently, the basic reproduction number
$\mathcal{R}_0$ of the SIT-free model~\eqref{eq:model_I} at
$\mathcal E_0$, computed via the next-generation operator
method \cite{DieHeeMet1990,VanWat2002}, equals zero---a consequence of the
fact that the mating term contributes no linear terms to the Jacobian
at $\mathcal E_0$, so that no new mated females are generated in the
linearized system.  The equilibrium $\mathcal E_0$ is therefore
\emph{superstable} in the sense that $\mathcal{R}_0 = 0$.  This
property has been identified in related ecological models employing
Holling type-II mating or predation terms~\cite{BouBer2009}, and is
consistent with the findings of several SIT modeling
studies~\cite{AlmDupPriVau2022,EstYan2005,ThoYanEst2010}, which
establish the same stability under constant, positive sterile-male
release rates.  The present result extends these findings by
demonstrating that the MFE remains locally
asymptotically stable even in the complete absence of SIT
intervention ($S(t) \equiv 0$), a consequence of the Holling type-II
structure of the mating term rather than the release of sterile males.
 
\noindent
The ecological implication of Theorem~\ref{thm:LAStrivial} is that a
sufficiently small wild mosquito population introduced into a
mosquito-free environment will fail to establish itself and will tend
to extinction.  This outcome is a direct consequence of the
mate-finding Allee effect inherent in the mating term: at low
population densities, the probability that an unmated female locates
a mate within her reproductive lifespan becomes negligible, precluding
population growth.
 
Theorem~\ref{thm:LAStrivial} can be extended to a global asymptotic
stability result for $\mathcal E_0$ under a condition on the following
threshold quantity:
\begin{equation}\label{eq:Rlm}
    \Rq
    \;:=\;
    \phi
    \left(\frac{\sigma_E}{\sigma_E+\mu_E}\right)
    \left(\frac{\sigma_L}{\sigma_L+\mu_L}\right)
    \left(\frac{r\sigma_P}{\sigma_P+\mu_P}\right)
    \left(\frac{1}{\mu_F(1+\zeta\mu_F)}\right).
\end{equation}
 
\begin{theorem}\label{thm:GAStrivial}
    The MFE $\mathcal E_0$ of the SIT-free
    model~\eqref{eq:model_I} is globally asymptotically stable in
    $\Omega$ whenever $\Rq < 1$.
\end{theorem}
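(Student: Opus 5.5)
The plan is to construct a linear Lyapunov function on the reproductive chain $(E,L,P,F_u,F_{mw})$ and then apply LaSalle's invariance principle. The guiding observation is that, with $S\equiv 0$, the per-female mating rate
\[
k(t):=\frac{M_w}{\gamma+\zeta(M_w+\eta M_s)}
\]
satisfies $0\le k<1/\zeta$ for all states in $\Omega$, because $\gamma>0$. So a newly eclosed female mates with probability at most $(1/\zeta)/(1/\zeta+\mu_F)=1/(1+\zeta\mu_F)$. This is exactly the factor that appears in $\Rq$: $\Rq$ is the reproduction number of the "quick mate search" limit $\gamma\to 0$, and it should dominate the true reproductive output.

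Concretely, I will set $V=a_1E+a_2L+a_3P+a_4F_u+a_5F_{mw}$, with the coefficients chosen backwards along the life cycle:
\[
a_5=\frac{1}{\mu_F},\quad a_4=\frac{a_5}{1+\zeta\mu_F},\quad a_3=\frac{r\sigma_P\,a_4}{\sigma_P+\mu_P},\quad a_2=\frac{\sigma_L\,a_3}{\sigma_L+\mu_L},\quad a_1=\frac{\sigma_E\,a_2}{\sigma_E+\mu_E}.
\]
Then $a_1\phi=\Rq$. Differentiating along solutions and collecting terms, I expect the following.
\begin{itemize}
\item The linear terms in $E$, $L$, $P$ cancel exactly by the choice of $a_1,a_2,a_3$.
\item The leftover terms $-a_1\phi EF_{mw}/\KE$ and $-a_2\delta_L L^2$ are nonpositive.
\item The $F_{mw}$ coefficient is $a_1\phi-\mu_Fa_5=\Rq-1<0$.
\item The $F_u$ coefficient is $-(k+\mu_F)a_4+ka_5=a_5\mu_F(\zeta k-1)/(1+\zeta\mu_F)$, which is strictly negative because $\zeta k<1$.
\end{itemize}
Hence $\dot V\le(\Rq-1)F_{mw}-cF_u\le 0$ for some state-dependent $c>0$. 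The key design point is that the $F_u$ coefficient is affine in $k$, so it suffices to check it at $k=0$ and $k=1/\zeta$. A monotone comparison argument would not work here, because increasing $k$ decreases $F_u$.

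The system with $S\equiv0$ is autonomous, and its solutions are bounded by Proposition~\ref{prop:bounded_solns}, so every omega-limit set is nonempty, compact and invariant. LaSalle's principle then places each omega-limit set inside the largest invariant subset of $\{\dot V=0\}$. On that subset $F_{mw}=F_u=0$, so invariance of the $F_u$ equation forces $P=0$. Then the $P$ equation forces $L=0$, and the $L$ equation forces $E=0$. The remaining variables $M_w$, $F_{ms}$ and $M_s$ then obey pure decay ($\dot M_w=-\mu_MM_w$, $\dot F_{ms}=-\mu_FF_{ms}$, $\dot M_s=-\mu_MM_s$), and the only bounded entire solution of such decay equations is zero. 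So the invariant set is $\{\mathcal E_0\}$, and every solution converges to $\mathcal E_0$. Combining this global attractivity with the local asymptotic stability from Theorem~\ref{thm:LAStrivial} gives global asymptotic stability in $\Omega$.

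The main obstacle is handling the nonlinear, time-varying mating rate in $\dot V$. The bound $k<1/\zeta$ must be used to get a sign uniformly over all of $\Omega$, and strictness is needed so that $\{\dot V=0\}$ forces $F_u=0$. I will also double-check that $V$ is only positive semidefinite in the full state: this is why I rely on LaSalle together with the earlier local stability theorem, rather than a direct Lyapunov argument.
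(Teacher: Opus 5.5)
Your proof is correct and is essentially the paper's argument: the same linear Lyapunov function on $(E,L,P,F_u,F_{mw})$, the bound of the mating rate by $1/\zeta$, and LaSalle's principle. The one difference is the weight on $F_{mw}$. Normalized so the $E$-coefficient is $1$, yours is $\phi/(\mu_F\Rq)$ rather than the paper's $\phi/\mu_F$, so you use $\Rq<1$ in the $F_{mw}$ coefficient, whereas the paper cancels the $F_{mw}$ terms and uses $\Rq<1$ in the $F_u$ coefficient.

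One small correction: when $M_s(0)>0$, the outflow rate from $F_u$ is $(M_w+\eta M_s)/(\gamma+\zeta(M_w+\eta M_s))\ge k$, so your $F_u$ coefficient is an upper bound, not an identity. This is harmless, because the extra term $-a_4\eta M_s F_u/(\gamma+\zeta(M_w+\eta M_s))$ is nonpositive.
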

 
\begin{proof}
Consider the candidate Lyapunov function $V:\Omega\to\mathbb{R}$
defined by
\begin{align*}
    V(E,L,P,F_u,F_{mw},F_{ms},M_w,M_s)
    &= E
     + \left(\frac{\sigma_E+\mu_E}{\sigma_E}\right)L
     + \left[\frac{(\sigma_E+\mu_E)(\sigma_L+\mu_L)}
                  {\sigma_E\sigma_L}\right]P \\
    &\quad
     + \left[\frac{(\sigma_E+\mu_E)(\sigma_L+\mu_L)(\sigma_P+\mu_P)}
                  {r\sigma_E\sigma_L\sigma_P}\right]F_u
     + \left(\frac{\phi}{\mu_F}\right)F_{mw},
\end{align*}
with time derivative along trajectories of~\eqref{eq:model_I} given
by
\begin{equation}\label{eq:Lyapunov_derivative}
\begin{split}
    \dot{V}
    &= \dot{E}
     + \left(\frac{\sigma_E+\mu_E}{\sigma_E}\right)\dot{L}
     + \left[\frac{(\sigma_E+\mu_E)(\sigma_L+\mu_L)}
                  {\sigma_E\sigma_L}\right]\dot{P} \\
    &\quad
     + \left[\frac{(\sigma_E+\mu_E)(\sigma_L+\mu_L)(\sigma_P+\mu_P)}
                  {r\sigma_E\sigma_L\sigma_P}\right]\dot{F}_u
     + \left(\frac{\phi}{\mu_F}\right)\dot{F}_{mw}.
\end{split}
\end{equation}
 
Using the inequality $\phi F_{mw}(1 - E/K_E) \leq \phi F_{mw}$, which
follows from the nonnegativity of $E$ and the equation for $\dot{E}$
in~\eqref{eq:model_I}, equation~\eqref{eq:Lyapunov_derivative}
satisfies
\begin{equation}\label{eq:Vdot_simplified}
    \begin{split}
    \dot{V}
    &\leq
    \frac{(\sigma_E+\mu_E)(\sigma_L+\mu_L)(\sigma_P+\mu_P)}
         {r\sigma_E\sigma_L\sigma_P}
    \Bigl[(\Rq-1)C_m
          +(\Rq C_m\zeta-1)\mu_F\Bigr]F_u \\
    &\quad
    -\delta_L\!\left(\frac{\sigma_E+\mu_E}{\sigma_E}\right)L^2,
    \end{split}
\end{equation}
where $C_m$ denotes the mating coefficient
\begin{equation*}
    C_m = \frac{M_w+\eta M_s}{\gamma+\zeta(M_w+\eta M_s)}.
\end{equation*}
A full derivation of how \eqref{eq:Vdot_simplified} can be obtained from \eqref{eq:Lyapunov_derivative} is found in Appendix \ref{app:ext_proof}. Since all state variables are nonneg\-ative in $\Omega$, we have
$C_m \geq 0$.  Moreover, although $M_w$ and $M_s$ may be arbitrarily
large in $\Omega$, the mating coefficient is bounded above: $C_m \leq
1/\zeta$.  Suppose now that $\Rq < 1$.  Then
$(\Rq - 1) < 0$, so the first term in the bracket is
nonpositive.  For the second term, the bound $C_m \leq 1/\zeta$
gives $\Rq C_m\zeta \leq \Rq < 1$,
so $(\Rq C_m\zeta - 1)\mu_F < 0$ as well.  Hence the
entire bracket is strictly negative, and $\dot{V} \leq 0$ throughout
$\Omega$.  Consequently, $V$ is a Lyapunov function on $\Omega$.
 
It remains to identify the largest invariant set on which $\dot{V} =
0$.  Let
\begin{equation*}
    X(t)=\bigl(E(t),\,L(t),\,P(t),\,F_u(t),\,F_{mw}(t),
    \,F_{ms}(t),\,M_w(t),\,M_s(t)\bigr)
\end{equation*}
be a trajectory satisfying $\dot{V}(X(t)) \equiv 0$.  Then $L(t)
\equiv F_u(t) \equiv 0$.  Since $L(t) \equiv 0$
implies $\dot{L}(t) \equiv 0$, the larval equation
in~\eqref{eq:model_I} gives
\begin{equation*}
    0 = \dot{L} = \sigma_E E - (\sigma_L+\mu_L+\delta_L L)L
      = \sigma_E E,
\end{equation*}
and hence $E(t) \equiv 0$.  Applying the same reasoning sequentially
to the remaining equations of~\eqref{eq:model_I} yields $P(t) \equiv
F_{mw}(t) \equiv F_{ms}(t) \equiv M_w(t) \equiv 0$.  Therefore the
largest invariant set contained in $\{\dot{V} = 0\}$ is 
\begin{equation*}
    \mathcal D=\{(0,0,0,0,0,0,0,M_s):M_s\geq 0\}.
\end{equation*}
By LaSalle's invariance
principle \cite{LaS1976}, the $\omega$-limit set of every
trajectory in $\Omega$ is contained in $\mathcal D$, so
\begin{equation*}
    \lim_{t\to\infty}(E(t),L(t),P(t),F_u(t),F_{mw}(t),F_{ms}(t))=(0,0,0,0,0,0,0).
\end{equation*}
Finally, since $\dot M_s=-\mu_M M_s$, we must have $\lim_{t\to\infty}M_s(t)=0$. Hence, all
trajectories originating in $\Omega$ converge to $\mathcal E_0$.
\end{proof}
 
\noindent
The ecological implication of Theorem~\ref{thm:GAStrivial} is that
reducing and sustaining $\Rq$ below unity is
sufficient to guarantee elimination of the mosquito population from
the habitat, irrespective of its initial size.  An examination of
the expression in~\eqref{eq:Rlm} identifies several actionable
pathways to achieving this condition.  Insecticide-based vector
control measures, such as indoor residual spraying (IRS) and
larviciding, that increase the mortality rates of immature stages
($\mu_E$, $\mu_L$, $\mu_P$) or adult mosquitoes ($\mu_F$, $\mu_M$)
can reduce $\Rq$ below one when applied at sufficient
intensity.  Complementary biological or genetic strategies---such as
those aimed at suppressing the egg-laying rate ($\phi$), reducing the
female sex ratio ($r$), or retarding aquatic-stage maturation
($\sigma_E$, $\sigma_L$, $\sigma_P$)---offer additional levers for
driving $\Rq$ below the elimination threshold.  Taken
together, these results suggest that combining standard
insecticide-based control with biological or genetic interventions
having the properties described above may substantially enhance the
prospect of mosquito population elimination.
 
Theorems~\ref{thm:LAStrivial} and~\ref{thm:GAStrivial} together
reveal a noteworthy feature of the SIT-free model: the trivial
equilibrium is \emph{always} locally asymptotically stable
(Theorem~\ref{thm:LAStrivial}), yet is globally asymptotically
stable only when $\Rq < 1$
(Theorem~\ref{thm:GAStrivial}).  This local-but-not-global stability
structure is a hallmark of a bistable system and is a direct
population-level manifestation of the mate-finding Allee effect
inherent in the mating term: at sufficiently low densities, the
probability that an unmated female locates a mate within her
reproductive lifespan becomes negligible, causing the population to
decline toward extinction regardless of the value of
$\Rq$.
 
The threshold $\Rq$ is precisely the basic
reproduction number of a limiting, simplified version of
model~\eqref{eq:model_I} obtained by setting $\gamma = 0$.  In this
limiting model, mate-finding is instantaneous---unmated females
locate a mate without delay---and no sterile males are present.
Because the mating term reduces to a linear (mass-action) expression
when $\gamma = 0$, this limiting system is referred to as the
\emph{quick mate search model}; hence the superscript $q$ in
$\Rq$.  The equations of the quick mate search model are
given in Appendix~\ref{app:linear_mating} for completeness, together
with a detailed explanation of how the strictly positive
mate-search time ($\gamma > 0$) induces the unconditional local
asymptotic stability of $\mathcal E_0$ observed in
Theorem~\ref{thm:LAStrivial}.  Because mate-finding is instantaneous
in the quick mate search model, adult females transition from the unmated
compartment $F_u$ to the mated compartment $F_{mw}$ without delay,
and are therefore able to lay eggs throughout a greater fraction of
their adult lifespan.  The mosquito abundance predicted by the linear
mating model consequently exceeds that of the full
model~\eqref{eq:model_I} for the same parameter values.  It follows that $\Rq < 1$ is sufficient---though not
necessary---for global extinction in the full model~\eqref{eq:model_I}:
because the quick mate search model overestimates reproductive output
relative to the full model (by setting the mate-search time $\gamma$
to zero), any parameter regime that drives the quick mate search model to
extinction will also drive the full model to extinction.  The
condition $\Rq < 1$ is therefore \emph{conservative}
in the sense that the true extinction threshold of the full
model---accounting for the mate-finding delay and the Holling type-II
nonlinearity of the mating term---lies strictly above the threshold
$\Rq = 1$.  In other words, global extinction in the
full model may occur even when $\Rq \geq 1$.


    

Since Theorem~\ref{thm:GAStrivial} establishes global asymptotic
stability of $\mathcal E_0$ when $\Rq < 1$, it is instructive to also investigate the dynamics of the SIT-free model when
$\Rq > 1$.  In particular, we ask whether nontrivial
(positive) equilibria exist and, if so, how many.
 
\begin{theorem}\label{thm:positive_equilibria}
    Suppose $\Rq > 1$ and $S(t) \equiv 0$
    in~\eqref{eq:model_I}.  Then there exist thresholds
    $K_E^{\ast} > 0$ and $\delta_L^{\ast} > 0$, depending on the
    model parameters, such that whenever $K_E > K_E^{\ast}$ and
    $\delta_L < \delta_L^{\ast}$, the SIT-free model admits exactly
    two positive (component-wise) equilibria $X_{-}^{\ast\ast}$ and
    $X_{+}^{\ast\ast}$, with each component of $X_{+}^{\ast\ast}$
    greater than or equal to the corresponding component of
    $X_{-}^{\ast\ast}$.
\end{theorem}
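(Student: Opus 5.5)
The plan is to reduce the equilibrium system of the SIT-free model ($S\equiv0$, hence $M_s=0$ and $F_{ms}=0$ at any equilibrium) to a single scalar equation in one variable, and then count its positive roots.

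\textbf{Reduction to the larval variable.} At equilibrium, the linear equations give
\[
P=\frac{\sigma_L L}{\sigma_P+\mu_P},\qquad
M_w=\frac{(1-r)\sigma_P P}{\mu_M}=a L,\qquad
F_u=\frac{r\sigma_P P}{\mu_F+C(M_w)},
\]
where $C(M)=M/(\gamma+\zeta M)$. Consequently
\[
F_{mw}=\frac{C(M_w)F_u}{\mu_F}.
\]
The larval equation gives
\[
E=\frac{(\sigma_L+\mu_L+\delta_L L)L}{\sigma_E}.
\]
Everything is therefore an explicit, componentwise increasing function of $L>0$; monotonicity of $C$ handles $F_u$ and $F_{mw}$. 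The remaining egg equation becomes the scalar condition
\[
\phi\Bigl(1-\frac{E(L)}{K_E}\Bigr)F_{mw}(L)=(\sigma_E+\mu_E)E(L).
\]
Dividing by $L$, write this as $g(L)=h(L)$, where
\[
g(L)=\frac{\phi\, b\, C(aL)}{\mu_F(\mu_F+C(aL))}\Bigl(1-\frac{E(L)}{K_E}\Bigr),\qquad
h(L)=\frac{(\sigma_E+\mu_E)(\sigma_L+\mu_L+\delta_L L)}{\sigma_E},
\]
with $b=r\sigma_P\sigma_L/(\sigma_P+\mu_P)$.

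Since componentwise monotonicity in $L$ holds, the ordering claim ($X_+^{\ast\ast}\ge X_-^{\ast\ast}$ componentwise) is automatic once we show there are exactly two positive roots $L_-<L_+$. Positivity of all components also needs $E(L)<K_E$, which is automatic for any root because the left-hand side must be positive.

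\textbf{Root counting.} Observe the following.
\begin{itemize}
\item $g(0)=0<h(0)$.
\item $g$ becomes negative once $E(L)>K_E$.
\item $h$ is affine and increasing.
\end{itemize}
Clearing denominators, multiply through by $(\gamma+\zeta aL)(\mu_F(\gamma+\zeta aL)+aL)$, which is positive. The equation becomes a polynomial equation in $L$ of low degree (at most four). Write the difference as
\[
\Phi(L)=\frac{\phi b\,aL}{\mu_F(\mu_F\gamma+(1+\zeta\mu_F)aL)}\Bigl(1-\frac{E(L)}{K_E}\Bigr)-h(L).
\]
The first factor is a concave, increasing Michaelis–Menten function of $L$. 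Its limit is
\[
\frac{\phi b}{\mu_F(1+\zeta\mu_F)}=\Rq\,\frac{(\sigma_E+\mu_E)(\sigma_L+\mu_L)}{\sigma_E}=\Rq\,h(0)\Big|_{\delta_L=0}.
\]
So in the formal limit $K_E\to\infty$, $\delta_L\to0$, the equation reads
\[
\frac{\Rq\,aL}{\kappa+aL}=1,\qquad \kappa=\frac{\mu_F\gamma}{1+\zeta\mu_F}.
\]
It has exactly one positive root $L_0=\kappa/(a(\Rq-1))$, and since the curves cross there transversally, $\Phi$ changes sign from negative to positive at $L_0$. The second root in the perturbed problem comes from the cutoff factor $1-E/K_E$ and the growing $\delta_L L$ term, which push $\Phi$ back to negative at large $L$, near the value where $E(L)\approx K_E$.

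The plan is to make this rigorous as follows.
\begin{enumerate}
\item Fix $\bar L>L_0$, say $\bar L=2L_0$, and show that for $K_E$ large and $\delta_L$ small, $\Phi(\bar L)>0$ by continuity in $(1/K_E,\delta_L)$.
\item Note that $\Phi(0^+)<0$ and that $\Phi<0$ wherever $E(L)\ge K_E$.
\item Conclude by the intermediate value theorem that there are at least two positive roots, one in $(0,\bar L)$ and one in $(\bar L,\,E^{-1}(K_E))$.
\end{enumerate}

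\textbf{Upper bound on the number of roots (main obstacle).} For exactly two roots, I would show that $\Psi(L):=g(L)-h(L)$ is strictly concave on the relevant interval $\{E(L)<K_E\}$; a concave function that is negative at $0$ has at most two zeros. Here
\begin{itemize}
\item $h$ is affine;
\item $g$ is the product of the increasing concave function $m(L)=\phi b C/(\mu_F(\mu_F+C))$ and the decreasing concave function $1-E(L)/K_E$ (concave because $E$ is convex in $L$).
\end{itemize}
A product of two nonnegative concave functions, one increasing and one decreasing, is concave. To see this, write
\[
(uv)''=u''v+2u'v'+uv''.
\]
Each term is $\le0$: $u''\le0$ with $v\ge0$, $u'\ge0$ with $v'\le0$, and $u\ge0$ with $v''\le0$. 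So $g$ is concave on that interval, and $g-h$ is concave.

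Strictness follows from $m''<0$ for $L>0$, valid since $\gamma>0$. This gives exactly two roots, and the thresholds $K_E^\ast,\delta_L^\ast$ are those guaranteeing $\Phi(\bar L)>0$. I expect the delicate points to be verifying the concavity of $m$ (a composition of Michaelis–Menten forms, which should again be Michaelis–Menten in $L$) and producing explicit threshold choices, rather than the counting itself.
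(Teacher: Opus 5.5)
Your argument is correct and complete, but it takes a different route from the paper's.

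The paper clears denominators to get the cubic $p(s)=as^3+bs^2+cs+d$. It notes that $a>0$ and $p(0)=d>0$ force a negative root. It then passes to the quadratic $p_0$ at $\delta_L=0$, which has two simple positive roots once $K_E$ is large (by the discriminant), and continues those roots to small $\delta_L>0$ by the implicit function theorem. ``Exactly two'' then follows because a cubic has only three roots.

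You instead keep the rational function $\Psi=g-h$. You get existence from the sign pattern $\Psi(0)<0<\Psi(\bar L)$ together with $\Psi<0$ on $\{E(L)\ge K_E\}$. You get the upper bound from strict concavity of $\Psi$ on $\{E(L)\le K_E\}$.

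Your route buys two things.
\begin{itemize}
\item A global count: there are at most two positive equilibria for every $(K_E,\delta_L)$, not only asymptotically.
\item Genuinely uniform thresholds. The point $\bar L=2L_0$ is independent of $K_E$ and $\delta_L$. Also $\Psi(\bar L)\to h(0)(\Rq-1)/(\Rq+1)>0$, and $\Psi(\bar L)$ is decreasing in both $1/K_E$ and $\delta_L$. So one pair $(K_E^\ast,\delta_L^\ast)$ serves all $K_E>K_E^\ast$ and $\delta_L<\delta_L^\ast$, exactly as the statement requires.
\end{itemize}
The paper's implicit-function step is carried out at fixed $K_E$, so its $\delta_L^\ast$ a priori depends on $K_E$. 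Near the large root $s_{+,0}\sim K_E$, the $\delta_L$-terms of $p$ have relative size of order $\delta_L K_E$, and the paper does not address uniformity in $K_E$.

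In return, the polynomial route yields explicit coefficients and the expansions \eqref{eq:Lstar}--\eqref{eq:Lstarstar}. The paper reuses these for the limiting Jacobians in Theorem~\ref{thm:stable_equilibria} and, via Lemma~\ref{lem:cubic_roots}, for the release thresholds in Theorem~\ref{thm:SIT_bifurcation}.

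Two small corrections.
\begin{itemize}
\item Monotonicity of $F_u=bL/(\mu_F+C(aL))$ does not follow from monotonicity of $C$ alone. Use instead that $(\mu_F+C(aL))/L$ is decreasing in $L$.
\item Your heuristic that the second root sits near $E(L)\approx K_E$ is off. For $\delta_L\to0$ it sits near $E\approx K_E(1-1/\Rq)$. Your rigorous bracketing of that root in $(\bar L,E^{-1}(K_E))$ does not depend on this heuristic.
\end{itemize}
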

 
\begin{proof}
Let $X^{\ast\ast} = (E^{\ast\ast},\, L^{\ast\ast},\, P^{\ast\ast},\,
F_u^{\ast\ast},\, F_{mw}^{\ast\ast},\, F_{ms}^{\ast\ast},\,
M_w^{\ast\ast},\, M_s^{\ast\ast})$ denote an equilibrium of
model~\eqref{eq:model_I} with $S(t) \equiv 0$.  Setting $S(t) \equiv
0$ in the equations for $\dot{M}_s$ and $\dot{F}_{ms}$
in~\eqref{eq:model_I} and evaluating at equilibrium gives
$M_s^{\ast\ast} = 0$ and $F_{ms}^{\ast\ast} = 0$.  The equations
governing $F_{ms}$ and $M_s$ therefore decouple from the remaining
six equations at equilibrium, and the subsequent analysis is
restricted to these six equations in the variables
$(E, L, P, F_u, F_{mw}, M_w)$.
 
Solving each of the six remaining equilibrium equations in terms of
$L^{\ast\ast}$ yields
\begin{equation}\label{eq:equil_components}
\begin{split}
    P^{\ast\ast}
        &= \frac{\sigma_L\,L^{\ast\ast}}{\sigma_P+\mu_P},
    \quad
    M_w^{\ast\ast}
        = \frac{(1-r)\sigma_L\sigma_P\,L^{\ast\ast}}
               {\mu_M(\sigma_P+\mu_P)},
    \\[4pt]
    F_u^{\ast\ast}
        &= \frac{r\sigma_L\sigma_P\,L^{\ast\ast}
                 \bigl[(1-r)\zeta\sigma_L\sigma_P\,L^{\ast\ast}
                       + \gamma\mu_M(\sigma_P+\mu_P)\bigr]}
                {(\sigma_P+\mu_P)\,Q(L^{\ast\ast})},
    \\[4pt]
    F_{mw}^{\ast\ast}
        &= \frac{(1-r)r\sigma_L^2\sigma_P^2\,(L^{\ast\ast})^2}
                {\mu_F(\sigma_P+\mu_P)\,Q(L^{\ast\ast})},
    \\[4pt]
    E^{\ast\ast}
        &= \frac{(1-r)r\phi\sigma_L^2\sigma_P^2\,K_E\,(L^{\ast\ast})^2}
                {(1-r)r\phi\sigma_L^2\sigma_P^2\,(L^{\ast\ast})^2
                 + K_E\mu_F(\sigma_E+\mu_E)(\sigma_P+\mu_P)\,Q(L^{\ast\ast})},
\end{split}
\end{equation}
where
\begin{equation*}
    Q(L^{\ast\ast})
    = \gamma\mu_F\mu_M(\sigma_P+\mu_P)
      + (1-r)\sigma_L\sigma_P(1+\zeta\mu_F)\,L^{\ast\ast}.
\end{equation*}
Each expression in~\eqref{eq:equil_components} is a positive,
increasing function of $L^{\ast\ast}$ for $L^{\ast\ast} > 0$, and
each vanishes when $L^{\ast\ast} = 0$ (recovering the trivial
equilibrium $\mathcal E_0$).  It therefore suffices to determine the
number of positive solutions $L^{\ast\ast} > 0$.
 
From the larval equation in~\eqref{eq:model_I}, any equilibrium
value $L^{\ast\ast}$ must satisfy
\begin{equation}\label{eq:L_equilibrium}
    \sigma_E E^{\ast\ast} - (\sigma_L+\mu_L+\delta_L L^{\ast\ast})L^{\ast\ast} = 0.
\end{equation}
Substituting the expression for $E^{\ast\ast}$
from~\eqref{eq:equil_components} into~\eqref{eq:L_equilibrium} and
dividing through by $K_E$, one finds that $L^{\ast\ast}$ satisfies
\begin{equation}\label{eq:quartic}
    L^{\ast\ast}\bigl(a\,(L^{\ast\ast})^3 + b\,(L^{\ast\ast})^2
    + c\,L^{\ast\ast} + d\bigr) = 0,
\end{equation}
where
\begin{align}
    a &= \frac{\delta_L(1-r)\Rq\mu_F\sigma_L\sigma_P
               (\sigma_E+\mu_E)(\sigma_L+\mu_L)(\sigma_P+\mu_P)
               (1+\zeta\mu_F)}
              {K_E\sigma_E},
    \label{eq:a_coef}\\[4pt]
    b &= (1-r)\mu_F\sigma_L\sigma_P(1+\zeta\mu_F)(\sigma_E+\mu_E)
        (\sigma_P+\mu_P)
        \left(\frac{\Rq(\sigma_L+\mu_L)^2}{K_E\sigma_E}
              + \delta_L\right),
    \label{eq:b_coef}\\[4pt]
    c &= \mu_F(\sigma_E+\mu_E)(\sigma_P+\mu_P)
        \Bigl[-(\Rq-1)(1-r)\sigma_L\sigma_P
               (\sigma_L+\mu_L)(1+\zeta\mu_F)
              + \gamma\delta_L\mu_F\mu_M(\sigma_P+\mu_P)\Bigr],
    \label{eq:c_coef}\\[4pt]
    d &= \gamma\mu_F^2\mu_M(\sigma_E+\mu_E)(\sigma_L+\mu_L)(\sigma_P+\mu_P)^2.
    \label{eq:d_coef}
\end{align}
The factor $L^{\ast\ast} = 0$ in~\eqref{eq:quartic} corresponds to
the MFE $\mathcal E_0$.  Positive equilibria therefore
correspond to positive roots of the cubic polynomial
\begin{equation*}
    p(s) = as^3 + bs^2 + cs + d.
\end{equation*}
Since $p(0) = d > 0$ and
$p(s) \to -\infty$ as $s \to -\infty$, $p(s)$ has at least one negative
real root.  

It remains to show that, for $K_E$ sufficiently large and $\delta_L$ sufficiently small, there are also two positive roots. We write the dependence of $p(s)$ on $\delta_L$ explicitly: $p(s)=p_{\delta_L}(s)$.  In the limit as $\delta_L \to 0$, the cubic $p_{\delta_L}(s)$ approaches a limiting quadratic polynomial 
\begin{equation*}
p_0(s)=b_0 s^2+c_0 s+d_0,
\end{equation*}
where
\begin{align}
    b_0&=
        \frac{\Rq(1-r)\mu_F\sigma_L\sigma_P(1+\zeta\mu_F)
        (\sigma_E+\mu_E)(\sigma_P+\mu_P)(\sigma_L+\mu_L)^2}{K_E\sigma_E}\label{eq:b0}\\
    c_0&=-(\Rq-1)(1-r)\sigma_L\sigma_P\mu_F(\sigma_E+\mu_E)(\sigma_L+\mu_L)(\sigma_P+\mu_P)(1+\zeta\mu_F)\label{eq:c0}\\
    d_0&=\gamma\mu_F^2\mu_M(\sigma_E+\mu_E)(\sigma_L+\mu_L)(\sigma_P+\mu_P)^2.\label{eq:d0}
\end{align}
By Descartes' rule of signs, $p_0$ has no negative roots and $0$ or $2$ positive roots (counting multiplicity), depending on the parameter values. The discriminant of $p_0$ is $c_0^2-4b_0d_0$. Since $b_0\to 0$ as $\KE\to \infty$, there exists $\KE^\ast$ so that if $\KE>\KE^\ast$, then the discriminant of $p_0$ is positive, and so for such $\KE$, $p_0$ has two distinct positive roots denoted $s_{\pm,0}$ with $s_{-,0}<s_{+,0}$. Since these are two distinct roots of a quadratic polynomial, $p_0'(s_{\pm,0})\neq 0$. Therefore, by the implicit function theorem, there exists $\delta_L^\ast>0$ and continuous functions $s_\pm:[0,\delta_L^\ast)\to\mathbb R$ such that $s_\pm(0)=s_{\pm,0}$ and $p_{\delta_L}(s_\pm(\delta_L))=0$ for all $0<\delta_L<\delta_L^\ast$. Taking $\delta_L^\ast$ sufficiently small, we can be sure that $0<s_-(\delta_L)<s_+(\delta_L)$.

For given $\delta_L<\delta_L^\ast$, we define $L_{\pm}^{\ast\ast}=s_\pm(\delta_L)$. We may compute the asymptotic forms of $L_\pm^{\ast\ast}$ (with details of this calculation in Appendix \ref{app:ext_proof}):
\begin{align}
    L_+^{\ast\ast}&=\frac{\KE \left(\Rq-1\right) \sigma_E}{\Rq \left(\mu _L+\sigma _L\right)}-\frac{\gamma  \mu _F \mu _M \left(\mu _P+\sigma _P\right)}{(1-r) \left(\Rq-1\right) \sigma _L \sigma _P \left(\zeta  \mu _F+1\right)}+O(1/\KE,\delta_L),\label{eq:Lstar}\\
    L_-^{\ast\ast}&=\frac{\gamma  \mu _F \mu _M \left(\mu _P+\sigma _P\right)}{(1-r) \left(\Rq-1\right) \sigma _L \sigma _P \left(\zeta  \mu _F+1\right)}+O(1/\KE,\delta_L).\label{eq:Lstarstar}
\end{align}
Recalling that each component
of~\eqref{eq:equil_components} is an increasing function of
$L^{\ast\ast}$, we obtain two positive equilibria $X_-^{\ast\ast}$ and $X_+^{\ast\ast}$ with each component of $X_+^{\ast\ast}$ greater than or equal to the corresponding component of $X_-^{\ast\ast}$.
\end{proof}

Theorem~\ref{thm:positive_equilibria} and its proof show that
$\Rq > 1$ is necessary but not sufficient for the existence of
positive equilibria of model~\eqref{eq:model_I}.  Even when $\Rq >
1$, the additional conditions $K_E > K_E^{\ast}$ and $\delta_L <
\delta_L^{\ast}$ must also hold, and these become increasingly strict or
stringent as $\Rq$ approaches unity from above: the closer $\Rq$ is
to one, the larger the required habitat capacity and the weaker the
required larval competition.  Biologically, this reflects the
following mechanism.  The threshold $\Rq$ measures net reproductive
output under the most favorable mating conditions (instantaneous
mate-finding, $\gamma = 0$, as in the linear mating or ``quick-search" model); when
$\Rq$ barely exceeds one, the population is only marginally capable
of generational replacement even under these ideal conditions.  In
such a regime, the mate-finding Allee effect---which reduces
effective reproduction relative to the linear mating model---can
suppress population growth unless the habitat is sufficiently large
(large $K_E$, ensuring an adequate egg-laying resource) and larval
crowding is sufficiently weak (small $\delta_L$, reducing
density-dependent juvenile mortality).  Together, these conditions
allow the population to attain the density at which mate-finding
becomes sufficiently frequent to sustain persistence.
    
    \begin{figure}
        \centering
        \includegraphics[width=0.6\linewidth]{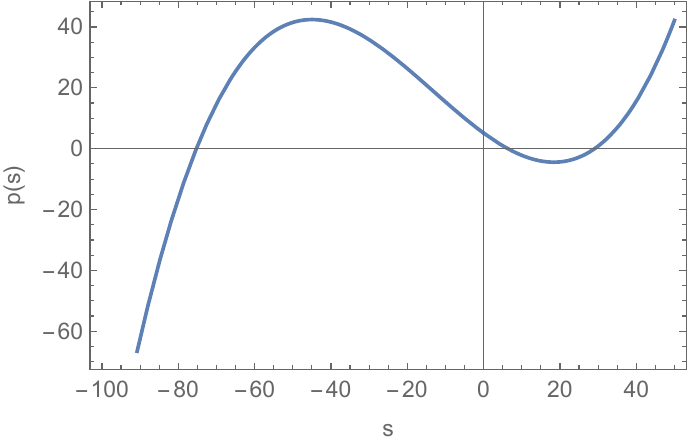}
        \caption{Plot of the cubic polynomial $p(s)$ from the proof of Theorem~\ref{thm:positive_equilibria}, whose positive roots are the $L$-components of the positive equilibria
             of model~\eqref{eq:model_I}.  The cubic always has one
             negative real root; it has two positive real roots when
             $K_E$ is sufficiently large and $\delta_L$ sufficiently
             small.  Parameter values used: $K_E = 1000$,
             $\delta_L = 0.05$; all remaining parameters as in
             Table~\ref{tab:parameter_values}}
        \label{fig:placeholder}
    \end{figure}

\begin{theorem}\label{thm:stable_equilibria}
    Suppose $\Rq > 1$ and $S(t) \equiv 0$ in
    model~\eqref{eq:model_I}, and let $K_E^{\ast}$,
    $\delta_L^{\ast}$, and $X_{\pm}^{\ast\ast}$ be as in
    Theorem~\ref{thm:positive_equilibria}.  Then there exist
    thresholds $K_E^{\ast\ast} \geq K_E^{\ast}$ and
    $\delta_L^{\ast\ast} \leq \delta_L^{\ast}$ such that whenever
    $K_E > K_E^{\ast\ast}$ and $\delta_L < \delta_L^{\ast\ast}$,
    the equilibrium $X_+^{\ast\ast}$ is locally asymptotically
    stable and $X_-^{\ast\ast}$ is unstable.
\end{theorem}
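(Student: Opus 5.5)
We plan to linearize the SIT-free system at $X_\pm^{\ast\ast}$ and read off stability from the characteristic polynomial. Because $M_s^{\ast\ast}=F_{ms}^{\ast\ast}=0$ and $\dot M_s=-\mu_M M_s$, the Jacobian is block-triangular. The $M_s$ row contributes the eigenvalue $-\mu_M$. The $F_{ms}$ row contributes the eigenvalue $-\mu_F$, because $F_{ms}$ appears in no other equation. This leaves the $6\times 6$ Jacobian $J$ in the variables $(E,L,P,F_u,F_{mw},M_w)$.

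This system is cooperative: every off-diagonal entry of $J$ is nonnegative at a positive equilibrium. The only negative interaction, $-\partial_{M_w}$ of the mating loss in $\dot F_u$, enters the $F_u$ row through $M_w$; handling its sign is the only delicate point. The loop structure is $E\to L\to P\to\{F_u,M_w\}\to F_{mw}\to E$. We will therefore compute $\det(\lambda I-J)$ by treating it as a single feedback loop closed through $P$. Concretely, the characteristic equation takes the form
\[
\prod_i(\lambda+d_i)\,\cdot\,\Big(1-\mathcal G(\lambda)\Big)=0,
\]
where the $d_i>0$ are the diagonal decay rates and $\mathcal G(\lambda)$ is the loop gain. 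The decay rates are $\sigma_E+\mu_E+\phi F_{mw}/K_E$, $\sigma_L+\mu_L+2\delta_L L$, $\sigma_P+\mu_P$, $\mu_F+C_m$, $\mu_F$ and $\mu_M$.

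The central observation is that $\mathcal G(0)$ equals the slope condition of the scalar equilibrium equation. At $\lambda=0$, $1-\mathcal G(0)$ is, up to a positive factor, $-\frac{d}{dL}$ of $h(L):=\sigma_E E^{\ast\ast}(L)-(\sigma_L+\mu_L+\delta_L L)L$, evaluated at $L^{\ast\ast}$. Here $E^{\ast\ast}(L)$ is the map in \eqref{eq:equil_components}. We expect $h(L)=-L\,p(L)\cdot(\text{positive factor})$, so that $\det(-J)$ has the sign of $p'(L_\pm^{\ast\ast})$ times a positive quantity. By the proof of Theorem~\ref{thm:positive_equilibria}, $p_{\delta_L}$ has simple roots with $p'(s_-)<0<p'(s_+)$ for $K_E$ large and $\delta_L$ small. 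This sign pattern persists from $p_0$ by continuity, possibly after shrinking $\delta_L^\ast$ and enlarging $K_E^\ast$; these adjusted values are the thresholds $\delta_L^{\ast\ast}$ and $K_E^{\ast\ast}$ in the statement. At $X_-^{\ast\ast}$ we then get $\mathcal G(0)>1$. Since $\mathcal G$ is real, decreasing on $[0,\infty)$ and tends to $0$, there is a real positive root, so $X_-^{\ast\ast}$ is unstable.

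At $X_+^{\ast\ast}$ we get $\mathcal G(0)<1$. For cooperative irreducible $J$ the Perron–Frobenius/Metzler theory applies: $s(J)<0$ if and only if $-J$ is a nonsingular M-matrix. Equivalently, the loop gain bound $|\mathcal G(\lambda)|\le \mathcal G(\Re\lambda)\le\mathcal G(0)<1$ holds for $\Re\lambda\ge0$, since each factor in $\mathcal G$ is a product of terms $k/(\lambda+d)$ with $k>0$. This rules out roots in the closed right half-plane, so $X_+^{\ast\ast}$ is locally asymptotically stable.

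The main obstacle is the $F_u$ row, which is not sign-definite. The mating loss $C_m(M_w)F_u$ gives $\partial\dot F_u/\partial M_w=-C_m'(M_w)F_u<0$, so $J$ is not Metzler in the naive coordinates. We would first try a change of variables or an algebraic elimination in which $F_u$ and $M_w$ are solved from the $P$ input. The gain from $P$ to $F_{mw}$ becomes the rational function
\[
\frac{1}{\lambda+\mu_F}\left[\frac{C_m' F_u\,(1-r)\sigma_P}{\lambda+\mu_M}\cdot\frac{\lambda+\mu_F}{\lambda+\mu_F+C_m}+\frac{C_m\,r\sigma_P}{\lambda+\mu_F+C_m}\right].
\]
Both terms have positive coefficients; the negative entry cancels because $F_u+F_{mw}$ obeys a clean equation. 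The gain therefore satisfies $|\mathcal G(\lambda)|\le\mathcal G(\Re\lambda)$ for $\Re\lambda\ge0$, and the argument above goes through. If this fails, a fallback is to use the asymptotics \eqref{eq:Lstar}–\eqref{eq:Lstarstar}. At $X_+^{\ast\ast}$ the mating term saturates as $K_E\to\infty$, so $C_m\to1/\zeta$ and $C_m'\to0$. Then $J$ becomes a small perturbation of a lower-triangular-plus-one-loop matrix whose stability follows from $\mathcal G(0)<1$, and Routh–Hurwitz need only be checked in the limit.
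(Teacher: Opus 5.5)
Your proof is correct, and it takes a genuinely different route from the paper.

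\textbf{The paper's route.} The paper never works at the actual equilibria. It passes to the limit $K_E\to\infty$, $\delta_L\to0$. At $X_+^{\ast\ast}$ the mating term then saturates ($C_m\to1/\zeta$, $C_m'F_u\to0$), and the Jacobian collapses to a $5\times5$ single-loop matrix. The modulus bound $|g(\lambda)|\ge g(0)$ on that matrix is your loop-gain bound in the limit, with limiting gain $1/\Rq$. At $X_-^{\ast\ast}$ the paper evaluates $\det B_-$ by cofactor expansion and shows its limit is negative. Continuity of eigenvalues then supplies the thresholds.

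\textbf{Your route.} You prove an exact factorization $\det(\lambda I-J)=\prod_i(\lambda+d_i)\,(1-\mathcal G(\lambda))$ at the equilibria themselves. It is valid because every cycle of the interaction graph contains $F_{mw}\to E\to L\to P$, so no two cycles are disjoint. Your elimination correctly merges the paths through the negative entry $-C_m'F_u$ into $\frac{(1-r)\sigma_P C_m'F_u}{(\lambda+\mu_M)(\lambda+C_m+\mu_F)}$. Hence $\mathcal G$ is a positive combination of products of factors $k/(\lambda+d)$.

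The key identification also checks out:
\[
1-\mathcal G(0)=-\frac{h'(L^{\ast\ast})}{\sigma_L+\mu_L+2\delta_LL^{\ast\ast}},
\qquad
h(L)\,D(L)=-K_E\,L\,p(L),
\]
where $D(L)>0$ is the denominator of $E^{\ast\ast}(L)$; this uses $a=\delta_L(1-r)r\phi\sigma_L^2\sigma_P^2/K_E$. So the sign of $1-\mathcal G(0)$ is the sign of $p'(L^{\ast\ast})$.

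\textbf{What your route buys.}
\begin{itemize}
\item No asymptotics are needed. Because $p$ has positive leading coefficient, $p(0)>0$, one negative root and two distinct positive roots, $p'(L_-^{\ast\ast})<0<p'(L_+^{\ast\ast})$ holds automatically. No continuity argument from $p_0$ is required, and you may take $K_E^{\ast\ast}=K_E^\ast$ and $\delta_L^{\ast\ast}=\delta_L^\ast$.
\item It avoids a weakness in the paper's joint limit. The entries $2\delta_LL_+^{\ast\ast}$ and $E_+^{\ast\ast}/K_E$ depend on the size of $\delta_LK_E$. When $\delta_LK_E\to\infty$, $L_+^{\ast\ast}$ is of order $1/\delta_L$ rather than $K_E$, and $2\delta_LL_+^{\ast\ast}\to2(\Rq-1)(\sigma_L+\mu_L)\neq0$. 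So $\widehat A_+$ is path-dependent and does not by itself cover a full parameter rectangle.
\item It explains the paper's determinant computation: $\det A_-$ is a positive multiple of $1-\mathcal G(0)$.
\end{itemize}
The paper's route, for its part, gives explicit limiting matrices and a closed-form limiting determinant.

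\textbf{Two small corrections.}
\begin{itemize}
\item Your opening claim that $J$ is cooperative is false in your coordinates. It becomes true after replacing $F_u$ by $F_u+F_{mw}$, which is what your cancellation remark reflects.
\item The Perron--Frobenius appeal is unnecessary. The bound $|\mathcal G(\lambda)|\le\mathcal G(\mathrm{Re}\,\lambda)\le\mathcal G(0)<1$ for $\mathrm{Re}\,\lambda\ge0$ already settles the stable case.
\end{itemize}
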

 
\begin{proof}
Suppose the conditions of Theorem~\ref{thm:positive_equilibria}
hold, so that the positive equilibria $X_{\pm}^{\ast\ast}$
of~\eqref{eq:model_I} exist.  We investigate the asymptotic
stability of each equilibrium in turn.  For given $K_E, \delta_L >
0$, let $A_{\pm}(K_E,\delta_L)$ denote the Jacobian
of~\eqref{eq:model_I} evaluated at $X_{\pm}^{\ast\ast}$.
 
\medskip
\noindent\textit{Asymptotic stability of $X_+^{\ast\ast}$.}
To compute the limiting Jacobian at \(X^{\ast\ast}_{+}\), we first record two
asymptotic identities following from \eqref{eq:Lstar}. In the limit
\(\delta_L\to 0\) and \(K_E\to\infty\),
\[
\frac{L^{\ast\ast}_{+}}{K_E}
\longrightarrow
\frac{(\mathcal R_0^q-1)\sigma_E}{\mathcal R_0^q(\sigma_L+\mu_L)}.
\]
Since the larval equilibrium equation gives $
\sigma_EE^{\ast\ast}_{+}
=
(\sigma_L+\mu_L+\delta_LL^{\ast\ast}_{+})L^{\ast\ast}_{+},$
the limiting relation with \(\delta_L=0\) yields
\[
\frac{E^{\ast\ast}_{+}}{K_E}
\longrightarrow
\frac{\sigma_L+\mu_L}{\sigma_E}
\cdot
\frac{(\mathcal R_0^q-1)\sigma_E}
{\mathcal R_0^q(\sigma_L+\mu_L)}
=
1-\frac{1}{\mathcal R_0^q}.
\]
Next, the egg equilibrium equation gives
\[
\phi\left(1-\frac{E^{\ast\ast}_{+}}{K_E}\right)F^{\ast\ast}_{mw,+}
=
(\sigma_E+\mu_E)E^{\ast\ast}_{+}.
\]
Dividing by \(K_E\) and using
\(E^{\ast\ast}_{+}/K_E\to 1-1/\mathcal R_0^q\), we obtain
\[
\frac{\phi F^{\ast\ast}_{mw,+}}{K_E}
\longrightarrow
(\sigma_E+\mu_E)
\frac{1-1/\mathcal R_0^q}{1/\mathcal R_0^q}
=
(\mathcal R_0^q-1)(\sigma_E+\mu_E).
\]
These identities determine the limiting contributions of the logistic
egg-laying term in the first row of the Jacobian.

Define
\begin{equation*}
    \widehat{A}_+
    \;:=\; \lim_{\substack{K_E\to\infty\\\delta_L\to 0}}
           A_+(K_E,\delta_L),
\end{equation*}
which is computed explicitly using the asymptotic expansion of
$X_+^{\ast\ast}$ in~\eqref{eq:Lstar}.  Columns~6 and~7, and row~8,
of $\widehat{A}_+$ each contain only their respective diagonal
entries $-\mu_F$, $-\mu_M$, and $-\mu_M$, contributing three
decoupled, strictly negative eigenvalues.  The remaining eigenvalues
are those of the $5\times 5$ leading submatrix
\begin{equation}\label{eq:Atilde_plus}
    \widetilde{A}_+
    = \begin{pmatrix}
        -\Rq(\sigma_E+\mu_E) & 0 & 0 & 0
            & \dfrac{\phi}{\Rq} \\[8pt]
        \sigma_E & -(\sigma_L+\mu_L) & 0 & 0 & 0 \\[4pt]
        0 & \sigma_L & -(\sigma_P+\mu_P) & 0 & 0 \\[4pt]
        0 & 0 & r\sigma_P
            & -\!\left(\mu_F+\dfrac{1}{\zeta}\right) & 0 \\[8pt]
        0 & 0 & 0 & \dfrac{1}{\zeta} & -\mu_F
      \end{pmatrix}.
\end{equation}
The characteristic polynomial of $\widetilde{A}_+$ is
\begin{equation}\label{eq:char_poly_plus}
    q(\lambda)
    = g(\lambda)
      - \frac{r\phi\sigma_E\sigma_L\sigma_P}{\zeta\,\Rq},
\end{equation}
where
\begin{equation*}
    g(\lambda)
    := \bigl[\lambda + \Rq(\sigma_E+\mu_E)\bigr]
       \bigl[\lambda + (\sigma_L+\mu_L)\bigr]
       \bigl[\lambda + (\sigma_P+\mu_P)\bigr]
       \bigl[\lambda + \!\left(\tfrac{1}{\zeta}+\mu_F\right)\bigr]
       \bigl[\lambda + \mu_F\bigr].
\end{equation*}
Suppose $\lambda\in\mathbb{C}$ satisfies $\mathrm{Re}(\lambda) \geq
0$.  For any $p > 0$, $|\lambda + p| \geq p$, so
\begin{align*}
    |g(\lambda)|
    &\;\geq\;
    \Rq(\sigma_E+\mu_E)(\sigma_L+\mu_L)(\sigma_P+\mu_P)
    \!\left(\frac{1}{\zeta}+\mu_F\right)\mu_F \\[4pt]
    &\;>\;
    (\sigma_E+\mu_E)(\sigma_L+\mu_L)(\sigma_P+\mu_P)
    \!\left(\frac{1}{\zeta}+\mu_F\right)\mu_F
    \;=\;
    \frac{r\phi\sigma_E\sigma_L\sigma_P}{\zeta\,\Rq},
\end{align*}
where the strict inequality uses $\Rq > 1$ and the final equality
follows from the definition~\eqref{eq:Rlm} of $\Rq$.  Hence,
$|g(\lambda)| > |q(\lambda) - g(\lambda)|$, and Rouché's theorem
implies $q(\lambda) \neq 0$ for all $\lambda$ with
$\mathrm{Re}(\lambda) \geq 0$.  Consequently, all eigenvalues of
$\widetilde{A}_+$, and therefore all eigenvalues of $\widehat{A}_+$,
have strictly negative real part.  By continuity of eigenvalues with
respect to matrix entries, there exist thresholds $K_E^{\ast\ast}
\geq K_E^{\ast}$ and $\delta_L^{\ast\ast} \leq \delta_L^{\ast}$
such that all eigenvalues of $A_+(K_E,\delta_L)$ have strictly
negative real part whenever $K_E > K_E^{\ast\ast}$ and $\delta_L <
\delta_L^{\ast\ast}$, establishing local asymptotic stability of
$X_+^{\ast\ast}$.
 
\medskip
\noindent\textit{Instability of $X_-^{\ast\ast}$.}
We will show that
\begin{equation}\label{eq:negative_determinant}
    \lim_{\substack{K_E\to\infty\\\delta_L\to 0}}\det A_-(K_E,\delta_L)<0
\end{equation}
Whenever $\Rq>1$. As the
determinant equals the product of all eight eigenvalues of
$\widehat{A}_-$, their product is negative.  The complex eigenvalues
of a real matrix arise in conjugate pairs, each pair contributing a
strictly positive product.  Therefore, the product of the real
eigenvalues alone is negative, which requires an odd number of
strictly positive real eigenvalues---and, hence, at least one
eigenvalue with strictly positive real part.  By continuity of
eigenvalues, $K_E^{\ast\ast}$ and $\delta_L^{\ast\ast}$ may be
chosen so that $A_-(K_E,\delta_L)$ retains at least one eigenvalue
with strictly positive real part whenever $K_E > K_E^{\ast\ast}$ and
$\delta_L < \delta_L^{\ast\ast}$, establishing the instability of
$X_-^{\ast\ast}$.

To see that \eqref{eq:negative_determinant} holds, we first observe that column 6 and row 8 of $A_-(\KE,\delta_L)$ contain only the diagonal elements $-\mu_F$ and $-\mu_M$ respectively. If $B_-(\KE,\delta_L)$ is the $6\times 6$ submatrix consisting of the remaining rows and columns, it follows that $\det A_-(\KE,\delta_L)=\mu_M\mu_F\det B_-(\KE,\delta_L)$ so it is sufficient to study the determinant of
\begin{equation*}
    B_-(\KE,\delta_L):=\begin{pmatrix}
-(\mu_E+\sigma_E) & 0 & 0 & 0 & \phi & 0 \\[2mm]
\sigma_E & -(\mu_L+\sigma_L+2\delta_L L_{-}^{\ast\ast}) & 0 & 0 & 0 & 0 \\[2mm]
0 & \sigma_L & -(\mu_P+\sigma_P) & 0 & 0 & 0 \\[2mm]
0 & 0 & r\sigma_P
&
-\dfrac{M_{w,-}^{\ast\ast}}{H}-\mu_F
& 0
&
-\dfrac{\gamma F_{w,-}^{\ast\ast}}{H^2}
\\[3mm]
0 & 0 & 0
&
\dfrac{M_{w,-}^{\ast\ast}}{H}
&
-\mu_F
&
\dfrac{\gamma F_{w,-}^{\ast\ast}}{H^2}
\\[3mm]
0 & 0 & (1-r)\sigma_P & 0 & 0 & -\mu_M
\end{pmatrix},
\end{equation*}
where $H=\gamma+\zeta M_w^{\ast\ast}$, and we have used $M_s^{\ast\ast}=0$. The determinant of $B_-(\KE,\delta_L)$ can be computed via cofactor expansion:
\begin{equation}\label{eq:determinant_of_B-}
\begin{split}
    \det B_-(\KE,\delta_L)=\frac{1}{H^2}&\Big(
H\mu_F(M^{\ast\ast}_{w,-}+H\mu_F)\mu_M
(\mu_E+\sigma_E)
(2L^{\ast\ast}_{-}\delta_L+\mu_L+\sigma_L)
(\mu_P+\sigma_P)\\
&
-(1-r)\gamma F^{\ast\ast}_{u,-}\mu_F\sigma_E\sigma_L\sigma_P\phi
-HM^{\ast\ast}_{w,-}r\mu_M\sigma_E\sigma_L\sigma_P\phi
\Big).
\end{split}
\end{equation}
In Appendix \ref{app:ext_proof}, we show that
\begin{equation}\label{eq:limit_of_determinant_of_B-}
    \lim_{\substack{K_E\to\infty\\\delta_L\to 0}}\det B_-(K_E,\delta_L)=-
\frac{
(\Rq-1)\mu_M\mu_F^2(1+\zeta\mu_F)
(\mu_E+\sigma_E)(\mu_L+\sigma_L)(\mu_P+\sigma_P)
}{
\Rq(1+\zeta\mu_F)-1
},
\end{equation}
which is negative provided $\Rq>1$. Since the determinants of $A_-$ and $B_-$ differ only by a positive factor, \eqref{eq:negative_determinant} follows.
\end{proof}
 
\noindent
Theorems~\ref{thm:positive_equilibria} and~\ref{thm:stable_equilibria}
together confirm that the dynamical structure suggested by
Theorems~\ref{thm:LAStrivial} and~\ref{thm:GAStrivial} is consistent
with a classical Allee effect.  In the parameter regime described
above, the MFE $\mathcal E_0$ is locally
asymptotically stable while the model simultaneously admits two
positive equilibria: a larger, locally asymptotically stable
equilibrium $X_+^{\ast\ast}$ and a smaller, unstable equilibrium
$X_-^{\ast\ast}$.  Consequently, sufficiently small mosquito
populations are attracted toward extinction, whereas sufficiently
large populations persist near the stable positive equilibrium.
 
The larger equilibrium $X_+^{\ast\ast}$ is referred to hereafter as
the \emph{natural equilibrium}, since it represents the stable,
persistent mosquito population expected to arise in the absence of
SIT or any other control intervention.  The smaller equilibrium
$X_-^{\ast\ast}$ is referred to as the \emph{Allee equilibrium}:
its instability, combined with the local asymptotic stability of
both $\mathcal E_0$ and $X_+^{\ast\ast}$, produces precisely the
threshold behavior characteristic of an Allee effect---initial
conditions below $X_-^{\ast\ast}$ are drawn toward extinction, while
those above it are drawn toward the natural equilibrium.  The
existence and stability structure of $\mathcal E_0$,
$X_-^{\ast\ast}$, and $X_+^{\ast\ast}$ therefore provide rigorous
confirmation that the SIT-free model exhibits mate-finding Allee
dynamics. Figure \ref{fig:R0q_bifurcation} shows the equilibria $X_\pm^{\ast\ast}$ and the bifurcation which gives rise to them.

\begin{figure}
    \centering
    \includegraphics[width=0.6\linewidth]{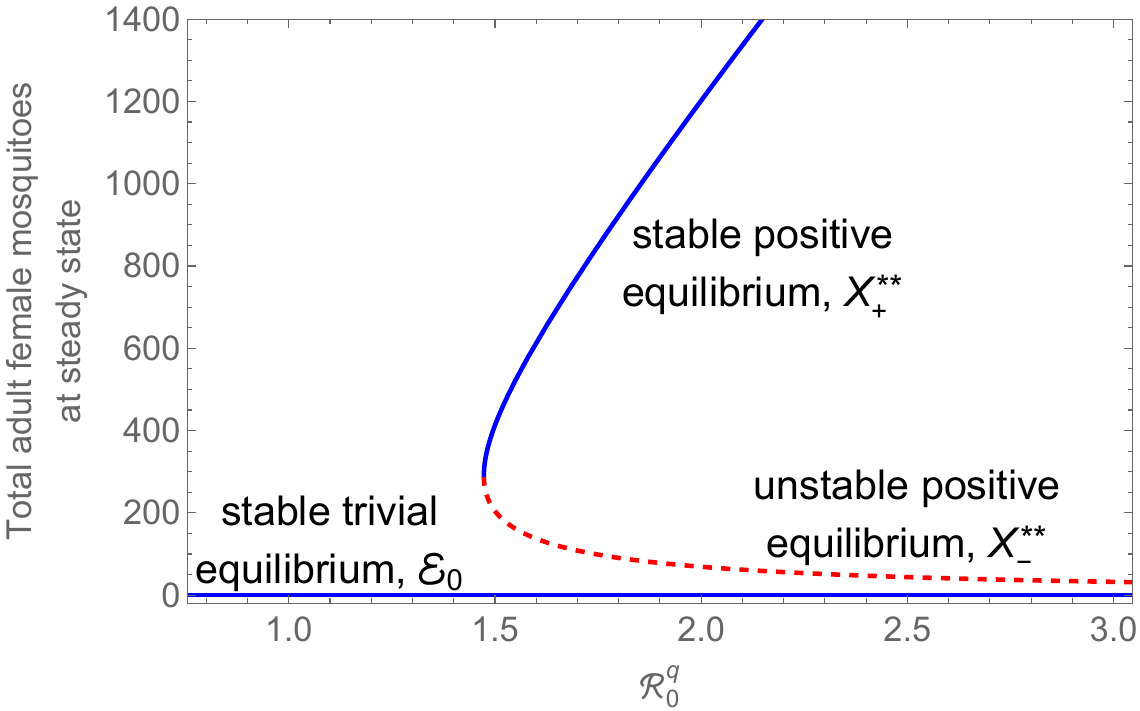}
    \caption{The three nonnegative equilibria of the model \eqref{eq:model_I} when $S(t)\equiv 0$, including the trivial MFE $\mathcal E_0$ which is always stable, the nontrivial Allee equilibrium $X_{-}^{\ast\ast}$, and the natural equilibrium $X_{+}^{\ast\ast}$. Note that while Theorem \ref{thm:positive_equilibria} shows that the nontrivial equilibria may exist when $\Rq>1$, that condition alone is not sufficient. Indeed, the bifurcation shown here occurs near $\Rq\approx 1.5$. This plot uses parameter values as in Table \ref{tab:parameter_values} except for $\phi$ whose value varies in this plot so the value of $\Rq$ is as indicated on the horizontal axis. Note that the adult female mosquito population size shown in this plot is far lower than would normally be expected in an area where mosquito-borne disease is endemic. This is because the expected value of $\Rq$ in an endemic area is far larger than it is near the bifurcation point, as in this plot.}
    \label{fig:R0q_bifurcation}
\end{figure}
 
The condition $\Rq > 1$ is biologically mild and expected to hold
under all realistic parameter values.  The maturation rates
$\sigma_E$, $\sigma_L$, and $\sigma_P$ typically exceed the
corresponding natural mortality rates $\mu_E$, $\mu_L$, and $\mu_P$;
the post-emergence refractory period $\zeta$ is short relative to
the expected adult female lifespan $1/\mu_F$; and the female sex
ratio $r$ is close to $1/2$~\cite{MazKidMyaKwe2019}.  Under these
conditions, each survival fraction $\sigma_x/(\sigma_x+\mu_x)$
exceeds $1/2$ and $(1+\zeta\mu_F)^{-1}$ is close to one, yielding
the rough lower bound
\begin{equation}\label{eq:Rq_lower_bound}
    \Rq \;\geq\; \frac{1}{32}\,\frac{\phi}{\mu_F}.
\end{equation}
Since $\phi/\mu_F$ represents the expected lifetime egg production
of a mated adult female---typically on the order of several hundred
eggs for the relevant mosquito
species~\cite{AfrGitYan2012,TakKloCha1998}---the inequality $\Rq
\gg 1$ holds for all biologically realistic parameter values.  The
SIT-free system therefore lies well within the parameter regime in
which positive mosquito equilibria exist and the Allee structure
described above is present.

\subsection{Global Asymptotic Stability of the natural $X_+^{\ast\ast}$ Equilibrium}
Consider the special case of the SIT-free model with $\zeta=\delta_L=0$,
given by
\begin{equation}\label{eq:reduced}
\left\{
\begin{array}{lcl}
\dot E      &=& \displaystyle \phi\left(1-\frac{E}{\KE}\right)F_{mw}-k_1E,\\[4pt]
\dot L      &=& \displaystyle \sigma_EE-k_2L,\\[4pt]
\dot P      &=& \displaystyle \sigma_LL-k_3P,\\[4pt]
\dot F_u    &=& \displaystyle r\sigma_PP-\frac{M_w}{\gamma}F_u-\mu_FF_u,\\[4pt]
\dot F_{mw} &=& \displaystyle \frac{M_w}{\gamma}F_u-\mu_FF_{mw},\\[4pt]
\dot M_w    &=& \displaystyle (1-r)\sigma_PP-\mu_M M_w,
\end{array}
\right.
\end{equation}
where $\zeta=0$ gives bilinear (mass-action) mating and $\delta_L=0$ removes
density-dependent larval mortality. This reduced model admits a tractable
Lyapunov-based basin analysis while preserving the bistable SIT-free structure
established for model~\eqref{eq:model_I} in
Theorems~\ref{thm:LAStrivial}--\ref{thm:stable_equilibria}. All remaining
biological parameters are positive, and we recall from
Table~\ref{tab:model_I_parameters} that the sex-ratio parameter satisfies
$0<r<1$. It is convenient to write $C(M_w):=M_w/\gamma$ and to define
$k_1:=\sigma_E+\mu_E$, $k_2:=\sigma_L+\mu_L$, and $k_3:=\sigma_P+\mu_P$.

With $\delta_L=0$, the $L$-component of any positive equilibrium
of~\eqref{eq:reduced} satisfies a quadratic equation, obtained by specializing
the equilibrium analysis of Theorem~\ref{thm:positive_equilibria}; this is the
$\delta_L\to0$ limiting quadratic $p_0$ of that proof. Consequently, when
$\Rq>1$ and $\KE>\KE^{\ast}$, where $\KE^{\ast}$ is the threshold of
Theorem~\ref{thm:positive_equilibria}, model~\eqref{eq:reduced} has exactly two
positive equilibria: the natural equilibrium $X_+^{**}$ and the Allee
equilibrium $X_-^{**}$. The second condition $\delta_L<\delta_L^{\ast}$ of
Theorem~\ref{thm:positive_equilibria} holds automatically here, since
$\delta_L=0$; the condition $\Rq>1$ alone is necessary but not sufficient, as it
fixes only the sign structure of $p_0$, whereas $\KE>\KE^{\ast}$ is required for
the two roots to be real and positive. For the baseline values in
Table~\ref{tab:parameter_values} this threshold is small, so the condition is
amply satisfied. Throughout this subsection we assume $\Rq>1$ and
$\KE>\KE^{\ast}$, so that both positive equilibria exist.

Let $(E_+^{\ast\ast},L_+^{\ast\ast},P_+^{\ast\ast},F_{u,+}^{\ast\ast},F_{mw,+}^{\ast\ast},M_{w,+}^{\ast\ast})$ denote the components
of the natural equilibrium $X_+^{**}$. These satisfy
\begin{equation}\label{eq:equil}
\begin{gathered}
 \phi(1-\rho)F_{mw,+}^{\ast\ast}=k_1E_+^{\ast\ast},\qquad
 \sigma_EE_+^{\ast\ast}=k_2L_+^{\ast\ast},\qquad
 \sigma_LL_+^{\ast\ast}=k_3P_+^{\ast\ast},\\[2pt]
 r\sigma_PP_+^{\ast\ast}=k_5F_{u,+}^{\ast\ast},\qquad
 k_4F_{u,+}^{\ast\ast}=\mu_FF_{mw,+}^{\ast\ast},\qquad
 (1-r)\sigma_PP_+^{\ast\ast}=\mu_M M_{w,+}^{\ast\ast},
\end{gathered}
\end{equation}
where it is further convenient to set $k_4:=M_{w,+}^{\ast\ast}/\gamma$,
$k_5:=k_4+\mu_F$, and $\rho:=E_+^{\ast\ast}/\KE\in(0,1)$.

Let
\[
 E_{\max}:=\KE,\qquad
 L_{\max}:=\frac{\sigma_E\KE}{k_2},\qquad
 P_{\max}:=\frac{\sigma_LL_{\max}}{k_3},
\]
\[
\begin{gathered}
 M_{w,\max}:=\frac{(1-r)\sigma_PP_{\max}}{\mu_M},\qquad
 C_{\max}:=\frac{M_{w,\max}}{\gamma},\\[2pt]
 F_{u,\max}:=\frac{r\sigma_PP_{\max}}{\mu_F},\qquad
 F_{mw,\max}:=\frac{C_{\max}F_{u,\max}}{\mu_F}.
\end{gathered}
\]
The assumption $0<r<1$ ensures that the bounds involving $r$ and $1-r$ are
positive. Model~\eqref{eq:reduced} is studied in the six-dimensional
biologically feasible region
\begin{equation}\label{eq:Omegahat}
\begin{split}
\widehat\Omega=\bigl\{(E,L,P,F_u,F_{mw},M_w)\in\Omega:
&\ E\le E_{\max},\ L\le L_{\max},\ P\le P_{\max},\\
&\ F_u\le F_{u,\max},\ F_{mw}\le F_{mw,\max},\ M_w\le M_{w,\max}\bigr\}.
\end{split}
\end{equation}

\begin{proposition}\label{prop:OmegaHat}
The region $\widehat\Omega$ is positively invariant with respect to
model~\eqref{eq:reduced}.
\end{proposition}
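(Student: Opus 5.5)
The plan is to verify invariance of $\widehat\Omega$ one coordinate at a time, following the same triangular dependency chain used in the proof of Proposition~\ref{prop:bounded_solns}, instead of a single boundary-flow (Nagumo-type) argument. This avoids a delicate tangency analysis on faces where the vector field is only weakly inward. Nonnegativity of solutions of~\eqref{eq:reduced} is obtained exactly as in Proposition~\ref{prop:positivity}: on each face where one variable vanishes, its derivative is nonnegative. So only the six upper bounds need to be checked, for a solution starting in $\widehat\Omega$. Global existence follows as in Corollary~\ref{cor:global existence} once boundedness is shown, and boundedness is exactly what the argument produces.

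The steps, in order, are as follows.

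First, the egg bound. If $E(t_0)=\KE$ then $\dot E(t_0)=-k_1\KE<0$, since the logistic factor vanishes. Hence $E$ can never cross $\KE$ upward, and $E(t)\le E_{\max}$ for all $t\ge0$.

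Second, the larval bound. Using $E\le\KE$ gives $\dot L\le \sigma_E\KE-k_2L$. The comparison fact~\eqref{eq:Gronwall_bound} then yields $L(t)\le\max\{L(0),\sigma_E\KE/k_2\}=L_{\max}$, because $L(0)\le L_{\max}$.

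Third, the pupal bound. Likewise $\dot P\le\sigma_LL_{\max}-k_3P$ gives $P\le P_{\max}$.

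Fourth, the adult bounds that depend only on $P$. From $\dot M_w\le(1-r)\sigma_PP_{\max}-\mu_MM_w$ we get $M_w\le M_{w,\max}$. For the unmated females we drop the nonpositive mating loss $-C(M_w)F_u$ and use $\dot F_u\le r\sigma_PP_{\max}-\mu_FF_u$, which gives $F_u\le F_{u,\max}$.

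Fifth, the mated-female bound. Since $M_w\le M_{w,\max}$ we have $C(M_w)=M_w/\gamma\le C_{\max}$, and together with $F_u\le F_{u,\max}$ this gives $\dot F_{mw}\le C_{\max}F_{u,\max}-\mu_FF_{mw}$. Hence $F_{mw}\le F_{mw,\max}$.

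I expect no genuine obstacle; the only point needing care is the order of the steps. The bound on $F_{mw}$ uses the already-established bounds on both $M_w$ and $F_u$, and each comparison inequality must hold on the whole existence interval. Taking the steps in the order above settles this. One minor point is the use of~\eqref{eq:Gronwall_bound} with $u(0)$ already below $a/b$, so that $\max\{u(0),a/b\}=a/b$ coincides with the defined maximum in each case. The positivity $0<r<1$ is needed only to ensure that the bounds are positive, so that $\widehat\Omega$ is a nondegenerate box.
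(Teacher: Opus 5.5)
Your proof is correct, but it reaches invariance by a different mechanism from the paper's.

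The paper's route is a boundary check. It verifies quasi-positivity on the coordinate hyperplanes and then evaluates the vector field on each upper face of the box, e.g. $\dot L|_{L=L_{\max}}\le\sigma_E\KE-k_2L_{\max}=0$ and $\dot F_{mw}|_{F_{mw}=F_{mw,\max}}\le C_{\max}F_{u,\max}-\mu_FF_{mw,\max}=0$. It then concludes directly that the field points into $\widehat\Omega$.

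Your route propagates bounds along the trajectory instead. You use the comparison fact~\eqref{eq:Gronwall_bound} in the order $E\to L\to P\to(M_w,F_u)\to F_{mw}$, exactly as in the proof of Proposition~\ref{prop:bounded_solns}.

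The inequalities are the same in both proofs. Each of the paper's face values is your differential inequality $\dot u\le a-bu$ evaluated at $u=a/b$. What differs is how these inequalities are turned into invariance.

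\textbf{The paper's face check.} It is shorter and does not need the coordinates to be ordered, since each face inequality only has to hold uniformly over the rest of the box. However, all but the egg inequality are non-strict; for instance, equality holds on $L=L_{\max}$ when $E=\KE$. A fully rigorous version therefore needs a tangent-cone (Nagumo-type) invariance theorem for locally Lipschitz fields.

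\textbf{Your Gr\"onwall chain.} It handles these weak tangencies on the upper faces with an elementary argument. It also yields transient estimates $u(t)\le a/b+(u(0)-a/b)e^{-bt}$. The price is that it depends on the feed-forward structure of the bounds. That structure is available here only because the logistic factor gives $\dot E|_{E=\KE}=-k_1\KE$, which makes $E\le\KE$ independent of $F_{mw}$ and breaks the cycle $E\to\cdots\to F_{mw}\to E$.

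\textbf{One overstatement.} Your nonnegativity step is the same weak face argument as the paper's; for example, $\dot L|_{L=0}=\sigma_EE$ can vanish. So the tangency issue is avoided only on the upper faces, not everywhere as you claim. This does not harm the proof, since invariance of the orthant under quasi-positive locally Lipschitz fields is standard.
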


\begin{proof}
The vector field associated with~\eqref{eq:reduced} is quasi-positive on the
coordinate hyperplanes, since
\[
\begin{gathered}
\dot E|_{E=0}=\phi F_{mw}\ge0,\qquad
\dot L|_{L=0}=\sigma_EE\ge0,\qquad
\dot P|_{P=0}=\sigma_LL\ge0,\\[2pt]
\dot F_u|_{F_u=0}=r\sigma_PP\ge0,\qquad
\dot F_{mw}|_{F_{mw}=0}=C(M_w)F_u\ge0,\\[2pt]
\dot M_w|_{M_w=0}=(1-r)\sigma_PP\ge0.
\end{gathered}
\]
Hence solutions that start in $\mathbb{R}_+^6$ remain in $\mathbb{R}_+^6$. On
the upper bounding faces of $\widehat\Omega$, using the bounds on the
preceding compartments together with $0\le C(M_w)=M_w/\gamma\le C_{\max}$, it
follows that
\begin{align*}
\dot E|_{E=\KE}&=-k_1\KE\le0,\\
\dot L|_{L=L_{\max}}&=\sigma_EE-k_2L_{\max}
 \le\sigma_E\KE-k_2L_{\max}=0,\\
\dot P|_{P=P_{\max}}&=\sigma_LL-k_3P_{\max}
 \le\sigma_LL_{\max}-k_3P_{\max}=0,\\
\dot M_w|_{M_w=M_{w,\max}}&=(1-r)\sigma_PP-\mu_M M_{w,\max}
 \le(1-r)\sigma_PP_{\max}-\mu_M M_{w,\max}=0,\\
\dot F_u|_{F_u=F_{u,\max}}&=r\sigma_PP-C(M_w)F_{u,\max}-\mu_FF_{u,\max}
   \le r\sigma_PP_{\max}-\mu_FF_{u,\max}=0,\\
\dot F_{mw}|_{F_{mw}=F_{mw,\max}}&=C(M_w)F_u-\mu_FF_{mw,\max}
   \le C_{\max}F_{u,\max}-\mu_FF_{mw,\max}=0.
\end{align*}
Thus the vector field points into $\widehat\Omega$ on each of its bounding
faces, and $\widehat\Omega$ is positively invariant. Since the vector field
in~\eqref{eq:reduced} is polynomial, it is locally Lipschitz on $\mathbb{R}^6$;
together with the boundedness of $\widehat\Omega$, this guarantees that every
solution with initial data in $\widehat\Omega$ exists and remains in
$\widehat\Omega$ for all $t\ge0$.
\end{proof}

For the stability analysis, all ratios are taken relative to the natural
equilibrium. Thus, for $X\in\widehat\Omega$, set
\[
\begin{gathered}
  \eta_E:=\frac{E}{E_+^{\ast\ast}},\qquad
  \eta_L:=\frac{L}{L_+^{\ast\ast}},\qquad
  \eta_P:=\frac{P}{P_+^{\ast\ast}},\\[2pt]
  \eta_{F_u}:=\frac{F_u}{F_{u,+}^{\ast\ast}},\qquad
  \eta_{F_{mw}}:=\frac{F_{mw}}{F_{mw,+}^{\ast\ast}},\qquad
  \eta_{M_w}:=\frac{M_w}{M_{w,+}^{\ast\ast}}.
\end{gathered}
\]
Define
\begin{equation}\label{eq:S5QCE}
\begin{aligned}
  S_5&:=5-\frac{\eta_{F_{mw}}}{\eta_E}-\frac{\eta_E}{\eta_L}
        -\frac{\eta_L}{\eta_P}-\frac{\eta_P}{\eta_{F_u}}
        -\frac{\eta_{F_u}}{\eta_{F_{mw}}},\\[2pt]
  Q&:=1-\frac{\eta_P}{\eta_{M_w}}-\frac{k_4}{k_5}(1-\eta_{F_u})
     +\eta_{F_u}\left(\frac{1}{\eta_{F_{mw}}}-1\right),\\[2pt]
  \mathcal{C}_E&:=\frac{\rho\,\eta_{F_{mw}}}{1-\rho}\,
  \frac{(\eta_E-1)^2}{\eta_E},\\[2pt]
  B&:=\mathcal{C}_E-S_5.
\end{aligned}
\end{equation}
Because the equilibrium transfer rates
\begin{equation}\label{eq:transfer}
\begin{gathered}
  f_1^{**}:=k_1E_+^{\ast\ast},\qquad
  f_2^{**}:=\sigma_EE_+^{\ast\ast},\qquad
  f_3^{**}:=\sigma_LL_+^{\ast\ast},\\
  f_4^{**}:=r\sigma_PP_+^{\ast\ast},\qquad
  f_5^{**}:=k_4F_{u,+}^{\ast\ast},\qquad
  f_6^{**}:=(1-r)\sigma_PP_+^{\ast\ast}
\end{gathered}
\end{equation}
are strictly positive (recall $0<r<1$), we may define the Goh--Volterra-type
Lyapunov function
\begin{equation}\label{eq:LyapF}
\begin{aligned}
  \mathcal{F}
  &=\frac{1}{f_1^{**}}\Bigl(E-E_+^{\ast\ast}-E_+^{\ast\ast}\ln\frac{E}{E_+^{\ast\ast}}\Bigr)
   +\frac{1}{f_2^{**}}\Bigl(L-L_+^{\ast\ast}-L_+^{\ast\ast}\ln\frac{L}{L_+^{\ast\ast}}\Bigr)\\
  &\quad
   +\frac{1}{f_3^{**}}\Bigl(P-P_+^{\ast\ast}-P_+^{\ast\ast}\ln\frac{P}{P_+^{\ast\ast}}\Bigr)
   +\frac{1}{f_4^{**}}\Bigl(F_u-F_{u,+}^{\ast\ast}-F_{u,+}^{\ast\ast}\ln\frac{F_u}{F_{u,+}^{\ast\ast}}\Bigr)\\
  &\quad
   +\frac{1}{f_5^{**}}\Bigl(F_{mw}-F_{mw,+}^{\ast\ast}
      -F_{mw,+}^{\ast\ast}\ln\frac{F_{mw}}{F_{mw,+}^{\ast\ast}}\Bigr)
   +\frac{1}{f_6^{**}}\Bigl(M_w-M_{w,+}^{\ast\ast}-M_{w,+}^{\ast\ast}\ln\frac{M_w}{M_{w,+}^{\ast\ast}}\Bigr).
\end{aligned}
\end{equation}
For $c>0$, define the associated sublevel set
\begin{equation}\label{eq:Dc}
  \mathcal{D}_c:=\{X\in\widehat\Omega:\mathcal{F}(X)\le c\}.
\end{equation}

The following two lemmas record the elementary properties of $\mathcal{F}$ and
the exact form of its derivative along solutions; both are used in the proof
of the main theorem.

\begin{lemma}\label{lem:Fcompact}
The function $\mathcal{F}$ is non-negative on $\widehat\Omega$ and vanishes
only at $X_+^{**}$. Moreover, for every $c>0$ the sublevel set
$\mathcal{D}_c$ is a compact subset of $\widehat\Omega$.
\end{lemma}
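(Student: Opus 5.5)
The plan is to reduce everything to the scalar function $g(x):=x-1-\ln x$ on $(0,\infty)$. Each summand of $\mathcal F$ in~\eqref{eq:LyapF} can be rewritten as
\[
\frac{1}{f_i^{**}}\Bigl(Z-Z^{\ast\ast}-Z^{\ast\ast}\ln\frac{Z}{Z^{\ast\ast}}\Bigr)
=\frac{Z^{\ast\ast}}{f_i^{**}}\,g\!\left(\frac{Z}{Z^{\ast\ast}}\right),
\]
where $Z$ ranges over $E,L,P,F_u,F_{mw},M_w$ and $Z^{\ast\ast}$ is the corresponding component of $X_+^{\ast\ast}$. The weights $Z^{\ast\ast}/f_i^{**}$ are strictly positive, because the components of $X_+^{\ast\ast}$ are positive (Theorem~\ref{thm:positive_equilibria}) and the transfer rates~\eqref{eq:transfer} are positive (recall $0<r<1$).

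\textbf{Properties of $g$.} Elementary calculus gives $g'(x)=1-1/x$, so $g$ is strictly decreasing on $(0,1)$ and strictly increasing on $(1,\infty)$. Hence $g\ge 0$, with equality only at $x=1$. Moreover $g(x)\to+\infty$ as $x\to0^+$.

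\textbf{Nonnegativity and the zero set.} Summing the weighted terms gives $\mathcal F\ge0$. Furthermore, $\mathcal F(X)=0$ forces every ratio $\eta_E,\dots,\eta_{M_w}$ to equal $1$, that is, $X=X_+^{\ast\ast}$.

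\textbf{Boundary convention.} On the part of $\widehat\Omega$ where some component vanishes, the logarithm is undefined. I would adopt the convention $\mathcal F=+\infty$ there, which is consistent with $g(0^+)=+\infty$. This makes $\mathcal F$ a lower semicontinuous map $\widehat\Omega\to[0,\infty]$, and such boundary points never lie in $\mathcal D_c$ for finite $c$.

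\textbf{Compactness of $\mathcal D_c$.} This is the one step needing care. Boundedness is immediate, since $\widehat\Omega$ in~\eqref{eq:Omegahat} is contained in a bounded box. For closedness, I would show that $\mathcal D_c$ stays a uniform distance away from the coordinate hyperplanes. If $X\in\mathcal D_c$, then each single term satisfies
\[
g\!\left(\frac{Z}{Z^{\ast\ast}}\right)\le \frac{c\,f_i^{**}}{Z^{\ast\ast}}.
\]
By the monotonicity of $g$ on $(0,1)$ and the blow-up $g(0^+)=\infty$, there is a number $m_i(c)\in(0,1)$ with $Z/Z^{\ast\ast}\ge m_i(c)$. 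Therefore $\mathcal D_c$ lies inside the compact box $K_c=\prod_i[m_i(c)Z^{\ast\ast},Z_{\max}]$, intersected with $\widehat\Omega$. This box is contained in the open positive orthant, where $\mathcal F$ is continuous. Consequently $\mathcal D_c=\mathcal F|_{K_c\cap\widehat\Omega}^{-1}([0,c])$ is a closed subset of the compact set $K_c\cap\widehat\Omega$, and hence compact.

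\textbf{Main obstacle.} The only subtle point is the logarithmic singularity at the boundary. Its handling is exactly the uniform lower bound $m_i(c)$ obtained from the blow-up of $g$ at $0^+$; the remaining steps are routine.
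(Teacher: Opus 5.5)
Your proof is correct and follows essentially the same route as the paper. Each entropy term is bounded individually by $c\,f_i^{**}$, the blow-up at $0^+$ gives a uniform positive lower bound on every coordinate, and compactness then follows from continuity of $\mathcal F$ on the open positive orthant together with boundedness of $\widehat\Omega$; your explicit convention $\mathcal F=+\infty$ on the coordinate faces of $\widehat\Omega$ usefully spells out a point the paper leaves implicit.
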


\begin{proof}
For each scalar $x>0$ the entropy term satisfies
\begin{equation}\label{eq:entropy}
  x-x^{**}-x^{**}\ln(x/x^{**})\ge0,
\end{equation}
with equality if and only if $x=x^{**}$; hence $\mathcal{F}\ge0$ on
$\widehat\Omega$, with equality only at $X_+^{**}$. Each entropy term is
also non-negative individually, so $\mathcal{F}(X)\le c$ forces
\[
  x-x^{**}-x^{**}\ln(x/x^{**})\le c\,f_i^{**}
\]
for the corresponding coordinate $x$ and rate $f_i^{**}$. Since
\[
  x-x^{**}-x^{**}\ln(x/x^{**})\to+\infty
  \qquad\text{as }x\to0^+,
\]
each coordinate is bounded below on $\mathcal{D}_c$ by a positive constant
depending only on $c$. Hence any convergent sequence in $\mathcal{D}_c$ has a
limit with all coordinates strictly positive; since $\widehat\Omega$ is closed
and $\mathcal{F}$ is continuous on the open positive box, the limit again lies
in $\mathcal{D}_c$. Therefore $\mathcal{D}_c$ is a closed and bounded subset of
$\mathbb{R}^6$ contained in $(0,\infty)^6$, hence a compact subset of
$\widehat\Omega$.
\end{proof}

\begin{lemma}\label{lem:Fdot}
Along solutions of model~\eqref{eq:reduced} in $\widehat\Omega$, the
derivative of $\mathcal{F}$ satisfies the exact identity
\begin{equation}\label{eq:Fdot}
  \dot{\mathcal{F}}=S_5-\mathcal{C}_E+(1-\eta_{M_w})Q.
\end{equation}
Consequently $S_5\le0$, with equality if and only if
\begin{equation}\label{eq:S5eq}
  \frac{\eta_{F_{mw}}}{\eta_E}
  =\frac{\eta_E}{\eta_L}
  =\frac{\eta_L}{\eta_P}
  =\frac{\eta_P}{\eta_{F_u}}
  =\frac{\eta_{F_u}}{\eta_{F_{mw}}}=1,
\end{equation}
$\mathcal{C}_E\ge0$, and therefore $B=\mathcal{C}_E-S_5\ge0$.
\end{lemma}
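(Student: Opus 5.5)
The plan is a direct computation. I would differentiate $\mathcal F$ term by term, rewrite every rate in terms of the ratios $\eta_\bullet$ using the equilibrium relations \eqref{eq:equil} and the transfer rates \eqref{eq:transfer}, and then regroup. Each entropy term contributes
\[
\frac{1}{f_i^{**}}\Bigl(1-\frac{x^{**}}{x}\Bigr)\dot x .
\]
The normalisations $1/f_i^{**}$ are chosen so that, after dividing by $f_i^{**}$, every inflow and outflow becomes a pure expression in the ratios.

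First, the linear stages. Using $f_2^{**}=\sigma_EE_+^{**}=k_2L_+^{**}$ and $f_3^{**}=\sigma_LL_+^{**}=k_3P_+^{**}$, the $L$- and $P$-terms become $(1-1/\eta_L)(\eta_E-\eta_L)$ and $(1-1/\eta_P)(\eta_L-\eta_P)$. The $M_w$-term becomes $(1-1/\eta_{M_w})(\eta_P-\eta_{M_w})$ via $f_6^{**}=\mu_MM_{w,+}^{**}$.

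Second, the egg equation. I would write $\phi(1-E/\KE)F_{mw}$ as
\[
f_1^{**}\,\eta_{F_{mw}}\,\frac{1-\rho\eta_E}{1-\rho},
\]
using $\phi(1-\rho)F_{mw,+}^{**}=f_1^{**}$, and then split
\[
\frac{1-\rho\eta_E}{1-\rho}=1-\frac{\rho(\eta_E-1)}{1-\rho}.
\]
This gives $(1-1/\eta_E)(\eta_{F_{mw}}-\eta_E)-\mathcal C_E$, which is exactly where the logistic correction $\mathcal C_E$ comes from.

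Third, the mating terms. Using $M_w/\gamma=k_4\eta_{M_w}$, $f_4^{**}=k_5F_{u,+}^{**}$ and $f_5^{**}=k_4F_{u,+}^{**}=\mu_FF_{mw,+}^{**}$, the $F_u$- and $F_{mw}$-terms become
\[
\Bigl(1-\tfrac{1}{\eta_{F_u}}\Bigr)\Bigl(\eta_P-\tfrac{k_4\eta_{M_w}+\mu_F}{k_5}\,\eta_{F_u}\Bigr)
\quad\text{and}\quad
\Bigl(1-\tfrac{1}{\eta_{F_{mw}}}\Bigr)\bigl(\eta_{M_w}\eta_{F_u}-\eta_{F_{mw}}\bigr).
\]

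Next I would separate out the part of the sum evaluated at $\eta_{M_w}=1$. At $\eta_{M_w}=1$ the $F_u$-term reduces to $(1-1/\eta_{F_u})(\eta_P-\eta_{F_u})$, since $(k_4+\mu_F)/k_5=1$. The five cyclic terms $E\to L\to P\to F_u\to F_{mw}\to E$ then expand, with the linear parts telescoping, to exactly $S_5$. All remaining contributions carry a factor $(1-\eta_{M_w})$:
\begin{itemize}
\item the difference between the $F_u$-term and its value at $\eta_{M_w}=1$, namely $-(k_4/k_5)(\eta_{M_w}-1)(\eta_{F_u}-1)$;
\item the corresponding $F_{mw}$ difference, $(\eta_{M_w}-1)\eta_{F_u}(1-1/\eta_{F_{mw}})$;
\item the $M_w$-term, $(\eta_{M_w}-1)(1-\eta_P/\eta_{M_w})$.
\end{itemize}
Factoring out $(1-\eta_{M_w})$ should reproduce $Q$ exactly. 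Carrying out this bookkeeping carefully is the step I expect to be the main obstacle. In particular, checking signs against the stated definition of $Q$ may require rearranging $\eta_{F_u}(1/\eta_{F_{mw}}-1)$ and the $k_4/k_5$ term. If there is a mismatch, I would recheck it at sample points such as $\eta_{F_u}=\eta_{F_{mw}}=1$.

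Finally, the sign claims. The five ratios appearing in $S_5$ have product exactly $1$. By the AM--GM inequality their sum is at least $5$, so $S_5\le0$, with equality iff each ratio equals $1$, which is \eqref{eq:S5eq}. Since $\rho\in(0,1)$ and all $\eta_\bullet>0$ on the interior of $\widehat\Omega$, we get $\mathcal C_E\ge0$. Hence $B=\mathcal C_E-S_5\ge0$.
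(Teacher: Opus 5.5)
Your approach is the paper's: substitute the reduced model into $\dot{\mathcal F}$, use \eqref{eq:equil} and the transfer rates to rewrite every term in the ratios, extract $\mathcal{C}_E$ from the logistic factor, and finish with AM--GM. Your step of subtracting the value at $\eta_{M_w}=1$ organizes the last part better than the paper does. The paper sums all six expanded terms, subtracts $S_5-\mathcal{C}_E$, and then asserts that the residual factors; your version shows the factor $(1-\eta_{M_w})$ term by term. Your claim that the five cyclic terms at $\eta_{M_w}=1$ sum to exactly $S_5$ is correct.

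There is, however, a sign error in your third bullet, and as written the identity would appear to fail. The $M_w$-term is
\[
\Bigl(1-\frac{1}{\eta_{M_w}}\Bigr)(\eta_P-\eta_{M_w})
=\frac{\eta_{M_w}-1}{\eta_{M_w}}\,(\eta_P-\eta_{M_w})
=(1-\eta_{M_w})\Bigl(1-\frac{\eta_P}{\eta_{M_w}}\Bigr),
\]
not $(\eta_{M_w}-1)(1-\eta_P/\eta_{M_w})$. For example, at $\eta_{M_w}=2$, $\eta_P=1$ the term equals $-\tfrac12$, while your expression gives $+\tfrac12$. With your sign, factoring out $(1-\eta_{M_w})$ produces $-1+\eta_P/\eta_{M_w}$ instead of the $1-\eta_P/\eta_{M_w}$ appearing in $Q$.

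Your other two bullets are correct and need no rearranging:
\[
-\tfrac{k_4}{k_5}(\eta_{M_w}-1)(\eta_{F_u}-1)=(1-\eta_{M_w})\bigl[-\tfrac{k_4}{k_5}(1-\eta_{F_u})\bigr],
\]
\[
(\eta_{M_w}-1)\,\eta_{F_u}(1-1/\eta_{F_{mw}})=(1-\eta_{M_w})\,\eta_{F_u}(1/\eta_{F_{mw}}-1).
\]
With the corrected $M_w$ piece, the three contributions sum to exactly $(1-\eta_{M_w})Q$, and your proof is complete. Your proposed check at $\eta_{F_u}=\eta_{F_{mw}}=1$ would catch this slip: there the first two bullets vanish and $Q$ reduces to $1-\eta_P/\eta_{M_w}$, so the sign of the $M_w$ piece is isolated. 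The sign claims ($S_5\le 0$ by AM--GM with the stated equality case, $\mathcal{C}_E\ge 0$, hence $B\ge 0$) are fine.
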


\begin{proof}
Using
\[
  \frac{d}{dt}\left(x-x^{**}-x^{**}\ln\frac{x}{x^{**}}\right)
  =\left(1-\frac{x^{**}}{x}\right)\dot x,
\]
substitution of the right-hand sides of~\eqref{eq:reduced} into the time
derivative of~\eqref{eq:LyapF}, followed by use of the equilibrium
identities~\eqref{eq:equil} and the relation
\[
  \frac{M_w/\gamma}{k_4}=\frac{M_w}{M_{w,+}^{\ast\ast}}=\eta_{M_w},
\]
gives
\begin{align}\label{eq:Fdot-substitution}
\dot{\mathcal{F}}
&=\left(1-\frac{1}{\eta_E}\right)
   \left[\frac{\eta_{F_{mw}}(1-\rho\eta_E)}{1-\rho}-\eta_E\right]
 +\left(1-\frac{1}{\eta_L}\right)(\eta_E-\eta_L) \notag\\
&\quad
 +\left(1-\frac{1}{\eta_P}\right)(\eta_L-\eta_P)
 +\left(1-\frac{1}{\eta_{F_u}}\right)
   \left[\eta_P-\frac{k_4\eta_{M_w}+\mu_F}{k_5}\eta_{F_u}\right] \notag\\
&\quad
 +\left(1-\frac{1}{\eta_{F_{mw}}}\right)
   (\eta_{M_w}\eta_{F_u}-\eta_{F_{mw}})
 +\left(1-\frac{1}{\eta_{M_w}}\right)(\eta_P-\eta_{M_w}).
\end{align}
The first term in~\eqref{eq:Fdot-substitution} can be written as
\begin{equation}\label{eq:Eterm}
\begin{aligned}
\left(1-\frac{1}{\eta_E}\right)
   \left[\frac{\eta_{F_{mw}}(1-\rho\eta_E)}{1-\rho}-\eta_E\right]
&=1+\eta_{F_{mw}}-\eta_E-\frac{\eta_{F_{mw}}}{\eta_E}
  -\frac{\rho\eta_{F_{mw}}}{1-\rho}\frac{(\eta_E-1)^2}{\eta_E}\\
&=1+\eta_{F_{mw}}-\eta_E-\frac{\eta_{F_{mw}}}{\eta_E}-\mathcal{C}_E.
\end{aligned}
\end{equation}
Similarly,
\begin{align*}
\left(1-\frac{1}{\eta_L}\right)(\eta_E-\eta_L)
&=1+\eta_E-\eta_L-\frac{\eta_E}{\eta_L},\\
\left(1-\frac{1}{\eta_P}\right)(\eta_L-\eta_P)
&=1+\eta_L-\eta_P-\frac{\eta_L}{\eta_P},\\
\left(1-\frac{1}{\eta_{F_u}}\right)
   \left[\eta_P-\frac{k_4\eta_{M_w}+\mu_F}{k_5}\eta_{F_u}\right]
&=1+\eta_P-\eta_{F_u}-\frac{\eta_P}{\eta_{F_u}}
  -\frac{k_4}{k_5}(\eta_{F_u}-1)(\eta_{M_w}-1),\\
\left(1-\frac{1}{\eta_{F_{mw}}}\right)(\eta_{M_w}\eta_{F_u}-\eta_{F_{mw}})
&=1+\eta_{M_w}\eta_{F_u}-\eta_{F_{mw}}
  -\frac{\eta_{M_w}\eta_{F_u}}{\eta_{F_{mw}}},\\
\left(1-\frac{1}{\eta_{M_w}}\right)(\eta_P-\eta_{M_w})
&=1+\eta_P-\eta_{M_w}-\frac{\eta_P}{\eta_{M_w}},
\end{align*}
where in the fourth line we have used
$\dfrac{k_4\eta_{M_w}+\mu_F}{k_5}=1+\dfrac{k_4}{k_5}(\eta_{M_w}-1)$, valid
since $k_5=k_4+\mu_F$. Summing the six contributions, the constant terms add to
$6$ and the linear terms telescope, leaving, after subtracting $S_5-\mathcal{C}_E$,
the residual
\[
  1+\eta_P-\eta_{F_u}+\eta_{M_w}\eta_{F_u}-\eta_{M_w}
  -\frac{\eta_{M_w}\eta_{F_u}}{\eta_{F_{mw}}}
  -\frac{\eta_P}{\eta_{M_w}}
  -\frac{k_4}{k_5}(\eta_{F_u}-1)(\eta_{M_w}-1)
  +\frac{\eta_{F_u}}{\eta_{F_{mw}}},
\]
which equals $(1-\eta_{M_w})Q$. This yields the exact
identity~\eqref{eq:Fdot}.

The five ratios in $S_5$ have product equal to one by telescoping; hence the
arithmetic--geometric mean inequality gives $S_5\le0$, with equality if and
only if~\eqref{eq:S5eq} holds. Finally $\mathcal{C}_E\ge0$ on
$\widehat\Omega$, being a non-negative multiple of the perfect square
$(\eta_E-1)^2$, and therefore $B=\mathcal{C}_E-S_5\ge0$.
\end{proof}

\begin{theorem}\label{thm:certified-basin}
Consider the reduced model~\eqref{eq:reduced} with $\zeta=\delta_L=0$,
$\Rq>1$, and $\KE>\KE^{\ast}$, and let $\mathcal{F}$ and $\mathcal{D}_c$ be
defined as in~\eqref{eq:LyapF}--\eqref{eq:Dc}. Suppose there exist a level
$c>0$ and a constant $\theta\in[0,1)$ such that
\begin{equation}\label{eq:cond}
  (1-\eta_{M_w})Q\le \theta\bigl(\mathcal{C}_E-S_5\bigr)
  \qquad\text{for all }X\in\mathcal{D}_c.
\end{equation}
Then the sublevel set $\mathcal{D}_c$ is positively invariant, the natural
equilibrium $X_+^{**}$ is asymptotically stable relative to $\mathcal{D}_c$,
and every solution with initial condition in $\mathcal{D}_c$ converges to
$X_+^{**}$. In particular, $\mathcal{D}_c$ is contained in the basin of
attraction of $X_+^{**}$.
\end{theorem}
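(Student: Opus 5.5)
The plan is a standard Lyapunov/LaSalle argument on the compact sublevel set, using Lemmas~\ref{lem:Fcompact} and~\ref{lem:Fdot}. By Lemma~\ref{lem:Fdot}, $\dot{\mathcal F}=-B+(1-\eta_{M_w})Q$ with $B=\mathcal C_E-S_5\ge 0$. Hypothesis~\eqref{eq:cond} then gives
\[
\dot{\mathcal F}\le -(1-\theta)B\le 0 \qquad\text{on } \mathcal D_c .
\]

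\textbf{Step 1: positive invariance.} Since $\mathcal D_c\subset(0,\infty)^6$ is compact (Lemma~\ref{lem:Fcompact}), $\mathcal F$ is $C^1$ on a neighbourhood of it. $\widehat\Omega$ is positively invariant by Proposition~\ref{prop:OmegaHat}. Take a trajectory starting in $\mathcal D_c$ and let $t^\ast$ be the supremum of times up to which it remains in $\mathcal D_c$. Suppose $t^\ast<\infty$. On $[0,t^\ast]$ we have $\dot{\mathcal F}\le0$, so $\mathcal F(X(t^\ast))\le \mathcal F(X(0))\le c$. The delicate point is that~\eqref{eq:cond} is assumed only on $\mathcal D_c$, so leaving the set just after $t^\ast$ is not excluded by a sign check at $t^\ast$ alone.

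I would handle this as follows. If $\mathcal F(X(0))<c$, the trajectory cannot reach level $c$, because $\mathcal F$ is nonincreasing while in $\mathcal D_c$, and continuity then gives invariance. If $\mathcal F(X(0))=c$, either $\dot{\mathcal F}<0$ at the start, which puts the trajectory immediately into the strict interior, or $B=0$ there. In the latter case I would use that the level set $\{\mathcal F=c\}$ does not contain $X_+^{\ast\ast}$, and argue via continuity of the flow and the closedness of the set where the inequality holds. Alternatively, and more cleanly, I would observe that~\eqref{eq:cond} is a closed condition, so the standard result applies: a closed set on which the derivative of $\mathcal F$ is nonpositive, and which is a sublevel set, is forward invariant. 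This follows because any exit would require $\mathcal F$ to exceed $c$ while $\mathcal F$ is nonincreasing on the closed set up to the exit time.

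\textbf{Step 2: LaSalle.} Every trajectory starting in $\mathcal D_c$ has a nonempty, compact, invariant $\omega$-limit set in $\mathcal D_c$. On this set $\mathcal F$ is constant, so $\dot{\mathcal F}=0$ there, which forces $B=0$. Both terms of $B$ are nonnegative, so $\mathcal C_E=0$ and $S_5=0$. By~\eqref{eq:S5eq}, $S_5=0$ gives $\eta_E=\eta_L=\eta_P=\eta_{F_u}=\eta_{F_{mw}}=:\lambda$. Next, $\mathcal C_E=0$ with $\eta_{F_{mw}}>0$ and $\rho\in(0,1)$ forces $\eta_E=1$, hence $\lambda=1$. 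Thus $E,L,P,F_u,F_{mw}$ are fixed at their equilibrium values along the invariant set.

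Invariance of the set then implies $\dot F_{mw}=0$, which gives $\eta_{M_w}\eta_{F_u}=\eta_{F_{mw}}$ and hence $\eta_{M_w}=1$. So the largest invariant subset is $\{X_+^{\ast\ast}\}$, and every solution starting in $\mathcal D_c$ converges to $X_+^{\ast\ast}$.

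\textbf{Step 3: stability relative to $\mathcal D_c$.} The sets $\mathcal D_{c'}$ with $c'\le c$ are forward invariant by Step~1, since~\eqref{eq:cond} holds on their subsets. These sublevel sets shrink to $X_+^{\ast\ast}$ as $c'\to0$, because $\mathcal F$ is positive definite there. This gives Lyapunov stability relative to $\mathcal D_c$, and combined with Step~2 yields asymptotic stability relative to $\mathcal D_c$ and the basin inclusion.

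The main obstacle is the boundary issue in Step~1: the hypothesis holds only on $\mathcal D_c$ itself. I will resolve it with the first-exit-time argument, which needs nothing outside the set.
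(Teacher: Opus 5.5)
\textbf{Overall route.} Your argument follows the paper's proof. You use the decrease estimate $\dot{\mathcal F}\le-(1-\theta)B$ from \eqref{eq:cond}, then a first-exit argument for invariance, then the LaSalle reduction via $B=0$ followed by $\dot F_{mw}=0$, then Lyapunov stability. Your shrinking sublevel sets are equivalent to the paper's $m_\varepsilon$ argument.

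\textbf{The gap is in Step~1.} You correctly flag invariance as the crux, but you do not close it. The ``standard result'' you commit to is false as stated. Take $\dot x=1$ on $\mathbb R$ and $\mathcal F(x)=1+\max\{x,0\}^2$. The sublevel set $\{\mathcal F\le 1\}=(-\infty,0]$ is closed and $\dot{\mathcal F}=0$ on it, yet every trajectory leaves. Knowing that $\mathcal F$ is nonincreasing up to the exit time $t^\ast$ only gives $\mathcal F(X(t^\ast))\le c$. It says nothing about $t>t^\ast$ when $\dot{\mathcal F}(X(t^\ast))=0$. The paper's one-line ``standard first-exit argument'' passes over the same point. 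Your first sketch, the case split, was the right one, but its $B=0$ subcase only gestures at ``continuity of the flow and closedness.'' That does not supply the sign you need.

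\textbf{The missing step.} Condition \eqref{eq:cond} makes $\dot{\mathcal F}$ strictly negative on $\mathcal D_c\setminus\{X_+^{\ast\ast}\}$.
Where $B>0$, this is $\dot{\mathcal F}\le-(1-\theta)B<0$.
Where $B=0$, the computation in Remark~\ref{rem:Rloc-singularity} gives $\eta_E=\eta_L=\eta_P=\eta_{F_u}=\eta_{F_{mw}}=1$ and $Q=1-1/\eta_{M_w}$. Hence $\dot{\mathcal F}=(1-\eta_{M_w})Q=-(\eta_{M_w}-1)^2/\eta_{M_w}$, which is negative unless $X=X_+^{\ast\ast}$.
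Since $\mathcal F(X_+^{\ast\ast})=0<c$, it follows that $\dot{\mathcal F}<0$ on the whole level set $\{\mathcal F=c\}\cap\widehat\Omega$.

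\textbf{Closing the exit argument.} At a putative first exit time $t^\ast$ you must have $\mathcal F(X(t^\ast))=c$. If it were $<c$, continuity would keep the orbit in $\mathcal D_c$ a little longer, using invariance of $\widehat\Omega$ and positivity of the coordinates near $X(t^\ast)\in(0,\infty)^6$. Then $\dot{\mathcal F}(X(t^\ast))<0$ forces $\mathcal F<c$ just after $t^\ast$, a contradiction.

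\textbf{Consequences for Steps 2 and 3.} The same strict inequality gives invariance of every $\mathcal D_{c'}$ with $0<c'\le c$, which your Step~3 relies on. It also makes $\mathcal F$ a strict Lyapunov function on $\mathcal D_c$, so convergence in Step~2 is immediate. Your LaSalle reduction there is also correct as written.
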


\begin{proof}
Let $X(t)$ be the solution with $X(0)\in\mathcal{D}_c$. We first establish
that the solution remains in $\widehat\Omega$ for as long as it exists.
Since $\mathcal{D}_c\subset\widehat\Omega\subset\widehat\Omega$,
Proposition~\ref{prop:OmegaHat} implies $X(t)\in\widehat\Omega$ while the
solution exists; in particular $0\le E(t)\le\KE$ and
$0\le M_w(t)\le M_{w,\max}$. The bound $E\le\KE$ ensures $1-E/\KE\ge0$, so the
production term in the $E$-equation is non-negative, and the bound
$M_w\le M_{w,\max}$ controls the mating loss in the $F_u$-equation.
Consequently the right-hand sides of~\eqref{eq:reduced} satisfy the
differential inequalities
\[
  \dot E\ge -k_1E,\quad \dot L\ge -k_2L,\quad \dot P\ge -k_3P,\quad
  \dot F_u\ge -\left(\frac{M_{w,\max}}{\gamma}+\mu_F\right)F_u,\quad
  \dot F_{mw}\ge -\mu_FF_{mw},\quad \dot M_w\ge -\mu_MM_w,
\]
so that, by comparison, positive initial data remain strictly positive.
Hence $X(t)\in\widehat\Omega$ for as long as the solution exists.

By Lemma~\ref{lem:Fdot} and the hypothesis~\eqref{eq:cond}, as long as
$X(t)\in\mathcal{D}_c$,
\begin{equation}\label{eq:dFbound}
\begin{aligned}
  \dot{\mathcal{F}}
  &=S_5-\mathcal{C}_E+(1-\eta_{M_w})Q
   \le S_5-\mathcal{C}_E+\theta(\mathcal{C}_E-S_5)\\
  &=-(1-\theta)(\mathcal{C}_E-S_5)
   =-(1-\theta)B\le0,
\end{aligned}
\end{equation}
since $\theta<1$ and $B\ge0$ by Lemma~\ref{lem:Fdot}. Thus $\mathcal{F}$
cannot increase while the solution lies in $\mathcal{D}_c$. A standard
first-exit argument now shows that $\mathcal{D}_c$ is positively invariant: if
the solution were to leave $\mathcal{D}_c$, then, up to its first exit time, it
would remain in $\mathcal{D}_c$, on which $\mathcal{F}$ is non-increasing;
hence $\mathcal{F}$ could not cross from a value at most $c$ to a value
exceeding $c$, a contradiction. Therefore $\mathcal{F}(X(t))\le c$ for all
$t\ge0$. Because $\mathcal{D}_c$ is compact by Lemma~\ref{lem:Fcompact}, the
solution remains in a compact set; together with the local Lipschitz
continuity of the (polynomial) vector field, this rules out finite-time
blow-up and guarantees existence for all $t\ge0$, with a nonempty
$\omega$-limit set contained in $\mathcal{D}_c$.

By the Lyapunov--LaSalle invariance principle, the solution approaches the
largest invariant subset of $\{X\in\mathcal{D}_c:\dot{\mathcal{F}}=0\}$.
From~\eqref{eq:dFbound}, $\dot{\mathcal{F}}=0$ forces $B=0$, that is,
$\mathcal{C}_E=0$ and $S_5=0$. By Lemma~\ref{lem:Fdot}, $S_5=0$ gives, via
\eqref{eq:S5eq},
\[
  \eta_E=\eta_L=\eta_P=\eta_{F_u}=\eta_{F_{mw}}=:\lambda,
\]
while $\mathcal{C}_E=0$ gives $\eta_E=1$, whence $\lambda=1$. Thus
\[
  \eta_E=\eta_L=\eta_P=\eta_{F_u}=\eta_{F_{mw}}=1.
\]
Along any complete trajectory contained in this equality set, $F_{mw}\equiv
F_{mw,+}^{\ast\ast}$ is constant, so $\dot F_{mw}=0$. Using the $F_{mw}$-equation
in~\eqref{eq:reduced} and the equilibrium identity
$k_4F_{u,+}^{\ast\ast}=\mu_FF_{mw,+}^{\ast\ast}$ gives
\[
  0=\dot F_{mw}
  =k_4\eta_{M_w}F_{u,+}^{\ast\ast}-\mu_FF_{mw,+}^{\ast\ast}
  =k_4F_{u,+}^{\ast\ast}(\eta_{M_w}-1).
\]
Since $k_4F_{u,+}^{\ast\ast}>0$, this forces $\eta_{M_w}=1$. The largest invariant
subset of $\{X\in\mathcal{D}_c:\dot{\mathcal{F}}=0\}$ is therefore the
singleton $\{X_+^{**}\}$, and LaSalle's invariance principle gives
$X(t)\to X_+^{**}$ for every solution with $X(0)\in\mathcal{D}_c$.

Finally, we verify Lyapunov stability of $X_+^{**}$ relative to
$\mathcal{D}_c$. Let $\varepsilon>0$. If
$\{X\in\mathcal{D}_c:\lvert X-X_+^{**}\rvert\ge\varepsilon\}=\varnothing$, then
every point of $\mathcal{D}_c$ already lies within $\varepsilon$ of $X_+^{**}$
and the stability requirement is met trivially. Otherwise, set
\[
  m_\varepsilon:=\min\bigl\{\mathcal{F}(X):X\in\mathcal{D}_c,\
    \lvert X-X_+^{**}\rvert\ge\varepsilon\bigr\}.
\]
The minimum is attained and is strictly positive, because the set is compact
(a closed subset of the compact set $\mathcal{D}_c$) and $\mathcal{F}$ is
continuous and vanishes only at $X_+^{**}$. By continuity of $\mathcal{F}$ and
$\mathcal{F}(X_+^{**})=0$, there is $\delta\in(0,\varepsilon)$ such that
$\lvert X(0)-X_+^{**}\rvert<\delta$ implies
$\mathcal{F}(X(0))<m_\varepsilon$. Since $\mathcal{F}$ is non-increasing along
solutions in $\mathcal{D}_c$, we have $\mathcal{F}(X(t))<m_\varepsilon$ for all
$t\ge0$, so the trajectory cannot reach
$\{\lvert X-X_+^{**}\rvert\ge\varepsilon\}$; hence
$\lvert X(t)-X_+^{**}\rvert<\varepsilon$ for all $t\ge0$. This establishes
Lyapunov stability, and together with the convergence above, asymptotic
stability of $X_+^{**}$ relative to $\mathcal{D}_c$. Consequently
$\mathcal{D}_c$ is contained in the basin of attraction of $X_+^{**}$.
\end{proof}

\begin{remark}\label{rem:Rloc-singularity}
The derivative identity~\eqref{eq:Fdot} suggests the formal local Lyapunov
ratio
\begin{equation}\label{eq:Rloc}
  \mathcal{R}_{\mathrm{loc}}(X):=
  \frac{(1-\eta_{M_w})Q}{B(X)}
  =\frac{(1-\eta_{M_w})Q}{\mathcal{C}_E-S_5},\qquad B(X)>0.
\end{equation}
At points where $B>0$, identity~\eqref{eq:Fdot} can be written as
\begin{equation}\label{eq:FdotR}
  \dot{\mathcal{F}}
  =B\bigl(\mathcal{R}_{\mathrm{loc}}-1\bigr)
  =\bigl(\mathcal{C}_E-S_5\bigr)\bigl(\mathcal{R}_{\mathrm{loc}}-1\bigr).
\end{equation}
The restriction $B>0$ is necessary because the denominator can vanish. In
particular, at $X_+^{**}$ all ratios equal one, so $S_5=0$, $\mathcal{C}_E=0$,
and $(1-\eta_{M_w})Q=0$; thus $\mathcal{R}_{\mathrm{loc}}$ has the indeterminate
form $0/0$ at $X_+^{**}$. Since $X_+^{**}$ belongs to every sublevel set
$\mathcal{D}_c$, the hypothesis of Theorem~\ref{thm:certified-basin} cannot be
stated as $\mathcal{R}_{\mathrm{loc}}\le\theta$ on all of $\mathcal{D}_c$; this
is why the theorem uses the quotient-free condition~\eqref{eq:cond}. On the
singular set $B=0$ the quotient-free condition is harmless: if $B=0$, then
$\mathcal{C}_E=0$ and $S_5=0$, hence
$\eta_E=\eta_L=\eta_P=\eta_{F_u}=\eta_{F_{mw}}=1$ while $\eta_{M_w}$ is free, so
that
\[
  Q=1-\frac{1}{\eta_{M_w}},\qquad
  (1-\eta_{M_w})Q=-\frac{(\eta_{M_w}-1)^2}{\eta_{M_w}}\le0,
\]
and~\eqref{eq:cond} holds automatically for every $\theta\ge0$. Equivalently,
away from $B=0$, condition~\eqref{eq:cond} is exactly
$\mathcal{R}_{\mathrm{loc}}\le\theta$.
\end{remark}

\begin{remark}\label{rem:theta-strict}
The strict inequality $\theta<1$ in~\eqref{eq:cond} is essential. First,
$B(X_-^{**})>0$. Indeed, the equilibrium equations for $L$, $P$, and $M_w$
give
\[
  L=\frac{\sigma_E}{k_2}E,
  \qquad
  P=\frac{\sigma_L}{k_3}L,
  \qquad
  M_w=\frac{(1-r)\sigma_P}{\mu_M}P,
\]
and those for $F_u$ and $F_{mw}$ give
\[
  F_u=\frac{r\sigma_PP}{M_w/\gamma+\mu_F},
  \qquad
  F_{mw}=\frac{(M_w/\gamma)F_u}{\mu_F},
\]
so a positive equilibrium is determined by its $E$-coordinate. Since
$X_-^{**}\ne X_+^{**}$, we have $E_-^{**}\ne E_+^{\ast\ast}$, hence
$\eta_E(X_-^{**})\ne1$ and
\[
  \mathcal{C}_E(X_-^{**})
  =\frac{\rho\,\eta_{F_{mw}}(X_-^{**})}{1-\rho}\,
  \frac{\bigl(\eta_E(X_-^{**})-1\bigr)^2}{\eta_E(X_-^{**})}>0.
\]
As $S_5\le0$, this gives
$B(X_-^{**})=\mathcal{C}_E(X_-^{**})-S_5(X_-^{**})>0$. At the Allee
equilibrium $\dot{\mathcal{F}}=0$, so~\eqref{eq:FdotR} yields
$\mathcal{R}_{\mathrm{loc}}(X_-^{**})=1$. Hence no condition with $\theta<1$
can hold on a sublevel set containing $X_-^{**}$; a non-strict condition
($\theta=1$) would fail to exclude the Allee equilibrium and would not force
the LaSalle invariant set to reduce to $X_+^{**}$.
\end{remark}

\begin{remark}\label{rem:self-certifying}
Condition~\eqref{eq:cond} is self-certifying. If it holds on $\mathcal{D}_c$,
then~\eqref{eq:dFbound} gives $\dot{\mathcal{F}}\le-(1-\theta)B\le0$, so
$\mathcal{D}_c$ is positively invariant and, by
Theorem~\ref{thm:certified-basin}, is contained in the basin of attraction of
$X_+^{**}$. Thus no separate exclusion of the extinction basin is required: if
an admissible $\mathcal{D}_c$ contained an initial condition whose solution
converged to the mosquito-free equilibrium, this would contradict the
conclusion of the theorem. Consequently the Allee equilibrium, its stable set
$\mathcal{W}^s(X_-^{**})$, and the extinction basin all lie outside every
sublevel set on which~\eqref{eq:cond} holds with $\theta<1$. The supremal
certified sublevel set---equivalently, the maximal such set when the supremum
of admissible levels is attained---provides a constructive, conservative inner
estimate of the basin of attraction of $X_+^{**}$.
\end{remark}

\begin{remark}\label{rem:control-interpretation}
Theorem~\ref{thm:certified-basin} has a direct interpretation for the control
of the wild mosquito population. Because the mosquito-free state is locally
stable under bilinear mating, the reduced system is bistable: from any initial
condition in an admissible $\mathcal{D}_c$ the population is drawn to the
natural equilibrium $X_+^{**}$, so a transient knock-down is followed by full
recovery. If the post-release state remains in $\mathcal{D}_c$, recovery to
$X_+^{**}$ is therefore guaranteed. Any strategy aiming at guaranteed
elimination from such an initial condition must consequently drive the state
outside this conservative persistence certificate. Exiting $\mathcal{D}_c$ is,
however, necessary but not sufficient for elimination: the state may leave
$\mathcal{D}_c$ and still lie in the basin of $X_+^{**}$, since $\mathcal{D}_c$
is only an inner estimate of that basin. A separate argument is needed to show
that the post-release state lies in the extinction basin. The supremal
admissible sublevel set, computed from the quotient-free bound~\eqref{eq:cond},
furnishes an explicit conservative inner estimate of the persistence basin and
hence a quantitative lower bound on the suppression target for any control
program.
\end{remark}

The quotient $\mathcal{R}_{\mathrm{loc}}$ is useful as a state-dependent sign
indicator for $\dot{\mathcal{F}}$, provided it is evaluated only where $B>0$.
Its numerator measures the sign-indefinite coupling associated with the
wild-male mating term, whereas its denominator $B=\mathcal{C}_E-S_5$ measures
the combined stabilizing effect of density-dependent egg regulation and the
Goh--Volterra-type dissipation. Equation~\eqref{eq:FdotR} shows that, at points
where $B>0$, the inequality $\mathcal{R}_{\mathrm{loc}}<1$ is exactly the
condition under which the Lyapunov functional is decreasing. This quantity
should not be confused with the quick-search reproduction number $\Rq$ defined
in~\eqref{eq:Rlm}: the latter is a parameter threshold associated with the
existence of the positive equilibria, whereas $\mathcal{R}_{\mathrm{loc}}$ is a
state-dependent sign indicator.

\begin{figure}[ht]
\centering
\includegraphics[width=0.90\textwidth]{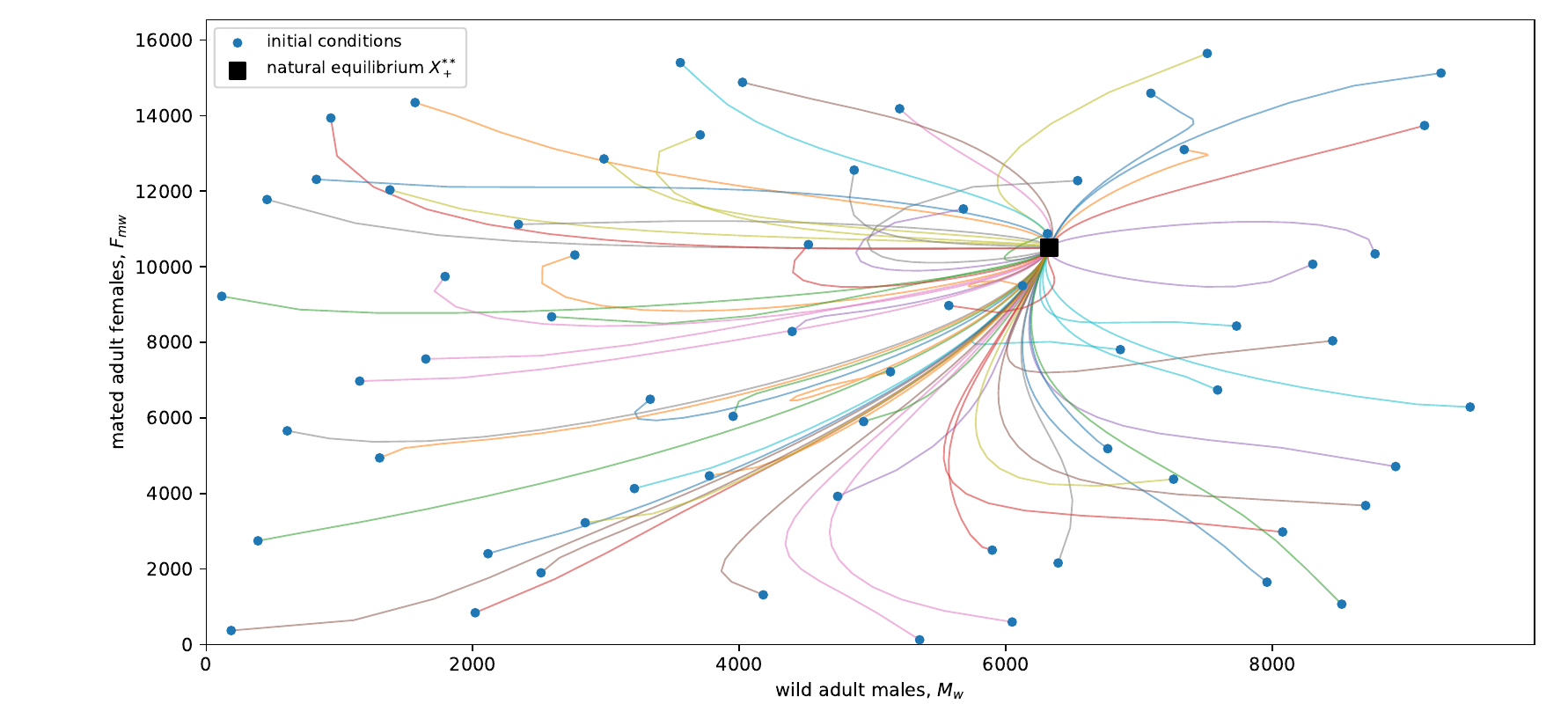}
\caption{Trajectories staring from 64 Sobol' sampled initial conditions all converging to $X_+^{\ast\ast}$ (projected into the $F_{mw}-M_w$ plane). Rigorous numerical calculation shows convergence to $X_+^{\ast\ast}$ for over $2^{19}$ initial conditions Sobol' sampled in the box with one vertex on $X_-^{\ast\ast}$ and the opposite vertex at $1.5\times X_+^{\ast\ast}$.}
\label{fig:Rloc}
\end{figure}

Using the baseline parameter values in Table~\ref{tab:parameter_values} and the
positive equilibria reported in Table~\ref{tab:equilibria}, a numerical
evaluation of the quotient-free condition~\eqref{eq:cond} on a selected
sublevel set $\mathcal{D}_{0.8c_0}$, with $c_0<\mathcal{F}(X_-^{**})\approx379.2$,
gives
\[
  \sup_{X\in\mathcal{D}_{0.8c_0}\cap\{A>0\}}
  \mathcal{R}_{\mathrm{loc}}(X)\approx0.99999999689<1,
\]
the singular set $B=0$ being covered separately as in
Remark~\ref{rem:Rloc-singularity}. Accordingly, if this numerical supremum is
certified as a uniform upper bound on $\mathcal{D}_{c_0}$---for example by
interval arithmetic, branch-and-bound, or another rigorous global-enclosure
method---then the hypothesis of Theorem~\ref{thm:certified-basin} is satisfied
on this sublevel set.

At the Allee equilibrium, we calculate $\mathcal{R}_{\mathrm{loc}}$ to be exactly $1$. While it is tempting to conclude from this that the set $\{\mathcal R_{\mathrm{loc}}<1\}$ coincides with $\mathcal D_{c_0}$, this is not the case; there are points in $\mathcal D_{c_0}$ where $\mathcal R_{\mathrm{loc}}>1$. It is therefore not possible to apply Theorem \ref{thm:certified-basin} on all of $\mathcal D_{c_0}$.


Theorem~\ref{thm:certified-basin} establishes convergence to the natural
equilibrium for the reduced SIT-free model on any Lyapunov sublevel set on
which the quotient-free condition~\eqref{eq:cond} holds. Numerical simulations
suggest that an analogous persistence-side stability statement remains true for
the SIT-free model~\eqref{eq:model_I} with the mate-search delay and larval
density-dependence retained. Such a statement cannot be posed on the whole
positive feasible region, because Theorem~\ref{thm:LAStrivial} shows that the
mosquito-free equilibrium is locally asymptotically stable; hence initial
conditions in the extinction basin must be excluded. Initial conditions on the
stable set of the Allee equilibrium must also be excluded, since they converge
to the Allee equilibrium rather than to the natural equilibrium.

Extensive numerical simulations suggest the following conjecture.

\begin{conjecture}\label{conj:SIT_free_GAS}
Consider the SIT-free model obtained from~\eqref{eq:model_I} by setting
$S(t)\equiv0$, with $\zeta>0$ and $\delta_L>0$. Suppose that $\Rq>1$ and
$\KE>\KE^{\ast}$, and let $X_-^{**}$ and $X_+^{**}$ denote, respectively, the
Allee equilibrium and the natural equilibrium of the SIT-free model. Define the
positive wild-population invariant subspace
\begin{equation*}
\Omega_{\mathrm{sf}}
=
\bigl\{(E,L,P,F_u,F_{mw},F_{ms},M_w,M_s)\in\Omega:
E,L,P,F_u,F_{mw},M_w>0,\ F_{ms}=M_s=0\bigr\}.
\end{equation*}
Let
\begin{equation*}
\mathcal{B}_0
:=
\bigl\{X_0\in\Omega_{\mathrm{sf}}:
\lim_{t\to\infty}X(t;X_0)=\mathcal E_0\bigr\}
\end{equation*}
denote the set of positive wild-population initial states whose solutions
converge to the mosquito-free equilibrium $\mathcal E_0$, and let
\begin{equation*}
\mathcal{W}^s_+(X_-^{**})
:=
\bigl\{X_0\in\Omega_{\mathrm{sf}}:
\lim_{t\to\infty}X(t;X_0)=X_-^{**}\bigr\}
\end{equation*}
denote the stable set of the Allee equilibrium in $\Omega_{\mathrm{sf}}$.
Finally, let $\mathcal{P}_+$ be the connected component containing $X_+^{**}$
of the forward-invariant set
\begin{equation*}
\Omega_{\mathrm{sf}}\setminus
\bigl(\mathcal{B}_0\cup\mathcal{W}^s_+(X_-^{**})\bigr).
\end{equation*}
Then the natural equilibrium $X_+^{**}$ is globally asymptotically stable
relative to $\mathcal{P}_+$; that is, $\mathcal{P}_+$ is positively invariant,
$X_+^{**}$ is Lyapunov stable relative to $\mathcal{P}_+$, and every solution
with initial condition in $\mathcal{P}_+$ satisfies
$\lim_{t\to\infty}X(t)=X_+^{**}$.
\end{conjecture}

\begin{remark}
The complement $\Omega_{\mathrm{sf}}\setminus(\mathcal{B}_0\cup
\mathcal{W}^s_+(X_-^{**}))$ is forward invariant: a trajectory that entered
$\mathcal{B}_0$ or $\mathcal{W}^s_+(X_-^{**})$ at some time would, by invariance
of those convergence sets, have had its initial condition in the same set. The
exclusion of $\mathcal{B}_0$ is necessary because $\mathcal E_0$ is locally
asymptotically stable by Theorem~\ref{thm:LAStrivial}, so its basin contains
positive initial conditions; the exclusion of $\mathcal{W}^s_+(X_-^{**})$ is
necessary because points on this stable set converge to the Allee equilibrium.
The conjecture therefore asserts convergence to $X_+^{**}$ on the persistence
side of the Allee threshold, not on the whole positive wild-population state
space. Equivalently, if $\mathcal{W}^s_+(X_-^{**})$ forms the boundary between
the extinction and persistence basins, then $\mathcal{P}_+$ is the persistence
component of $\Omega_{\mathrm{sf}}\setminus\mathcal{W}^s_+(X_-^{**})$
containing $X_+^{**}$.
\end{remark}

    \subsection{The effect of SIT on the equilibria of \eqref{eq:model_I}} 

\noindent 
We now consider the model~\eqref{eq:model_I} with a positive sterile
male release rate $S(t) \not\equiv 0$.  Two biologically natural
release strategies are considered: (i)~release sterile males at a
constant rate $S_0 > 0$, independent of the wild population; and
(ii)~release sterile males at a rate $S_1 M_w(t)$ proportional to
the current wild male population.  Strategy~(i) is operationally
simple but entails a fixed deployment or releases of sterile males even when the
wild population is small.  Strategy~(ii) allows the release effort
to scale down as the wild population declines, reducing unnecessary
expenditure or deployment; however, it requires continuous monitoring of the wild
male population to calibrate the release rate, and the weakening of
the intervention as the population decreases may prolong the time
required to achieve suppression.  To allow for a combination of both
strategies, we set
\begin{equation}\label{eq:release_rate}
    S(t) = S_0 + S_1(F_u(t)+F_{mw}(t)+F_{ms}(t)+M_w(t)), \qquad S_0,\, S_1 \geq 0.
\end{equation}
This form preserves the autonomous structure of model~\eqref{eq:model_I}
and introduces no additional nonlinearities, while accommodating an
adaptive release strategy that responds to fluctuations in the wild adult
mosquito population. Similar constant and proportional sterile male release rates are considered in other prior studies, such as those in \cite{AngDumLub2012, LiCaiLi2017}.  Other studies, such as those in \cite{DumTch2012,IboGumTay2020,LiAi2020}, considered impulsive sterile male releases; a feature or mechanism not considered in the current study.  A central objective of this work is to optimize the
release parameters $(S_0, S_1)$ within this framework so as to drive
the wild mosquito population below its natural Allee threshold while
minimizing the total number of sterile males deployed.  The present
section focuses on the stability and bifurcation structure of
model~\eqref{eq:model_I} under the release
rate~\eqref{eq:release_rate} for fixed, nonnegative $(S_0, S_1)$ not both zero.
 
When $S_0 > 0$, model~\eqref{eq:model_I} with release
rate~\eqref{eq:release_rate} admits a \emph{wild-mosquito-free
equilibrium} (WMFE), denoted $\mathcal{E}_s$, in which all wild
compartments are absent and sterile males are maintained solely by
the constant release:
\begin{equation}\label{eq:WMFE}
    \mathcal{E}_s
    = (0,\,0,\,0,\,0,\,0,\,0,\,0,\,S_0/\mu_M).
\end{equation}
When $S_0 = 0$, $\mathcal{E}_s$ reduces to the MFE $\mathcal E_0$.
 
\begin{theorem}\label{thm:LASsterile}
    Let $S(t) = S_0 + S_1 (F_u+F_{mw}+F_{ms}+M_w)$ in model~\eqref{eq:model_I} with
    $S_0, S_1 \geq 0$.  Then the WMFE $\mathcal{E}_s$ is locally
    asymptotically stable for all admissible parameter values.
\end{theorem}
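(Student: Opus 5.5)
The plan is to linearize model~\eqref{eq:model_I}, with release rate~\eqref{eq:release_rate}, at $\mathcal E_s=(0,\dots,0,M_s^0)$, where $M_s^0=S_0/\mu_M$. We then show that the Jacobian is block lower-triangular with explicitly negative diagonal entries, exactly as in the proof of Theorem~\ref{thm:LAStrivial}. The new feature relative to that proof is that $M_s^0$ may be positive. The mating terms are then no longer purely bilinear at the equilibrium, so their derivatives must be tracked carefully.

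\textbf{Step 1: derivatives of the mating terms.} Write $H=\gamma+\zeta(M_w+\eta M_s)$; at $\mathcal E_s$, $H_0=\gamma+\zeta\eta M_s^0>0$. Each mating term has the form $g(M_w,M_s)F_u/H$, with $g=M_w+\eta M_s$, $M_w$, or $\eta M_s$. Since $F_u=0$ at $\mathcal E_s$, all partial derivatives vanish except those with respect to $F_u$. The only surviving entries are therefore
\[
\partial_{F_u}\dot F_u=-\frac{\eta M_s^0}{H_0}-\mu_F,\qquad
\partial_{F_u}\dot F_{mw}=0,\qquad
\partial_{F_u}\dot F_{ms}=\frac{\eta M_s^0}{H_0}.
\]
The vanishing of $\partial_{F_u}\dot F_{mw}$ is the key point: its coefficient $M_w/H$ is zero at $\mathcal E_s$. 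Consequently the row of $\dot F_{mw}$ contains only $-\mu_F$ on the diagonal. The egg equation contributes $\phi$ in the $F_{mw}$ column and $-(\sigma_E+\mu_E)$ on the diagonal, since the term $-\phi F_{mw}E/K_E$ has zero derivatives at $\mathcal E_s$.

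\textbf{Step 2: the release term.} The equation $\dot M_s=S_0+S_1(F_u+F_{mw}+F_{ms}+M_w)-\mu_M M_s$ contributes the entries $S_1$ in the columns of $F_u$, $F_{mw}$, $F_{ms}$ and $M_w$, and $-\mu_M$ on the diagonal.

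\textbf{Step 3: ordering the variables.} We order the variables as $(F_{mw},E,L,P,F_u,M_w,F_{ms},M_s)$ and check that the Jacobian is lower triangular. The row of $F_{mw}$ has only its diagonal entry. The row of $E$ depends on $F_{mw}$ and $E$. The row of $L$ depends on $E$ and $L$, and the row of $P$ on $L$ and $P$. The row of $F_u$ depends on $P$ and $F_u$, and the row of $M_w$ on $P$ and $M_w$. The row of $F_{ms}$ depends on $F_u$ and $F_{ms}$. The row of $M_s$ depends on earlier variables and $M_s$. Every dependence points to an earlier variable, so the matrix is triangular in this ordering, and its eigenvalues are the diagonal entries
\[
-\mu_F,\quad -(\sigma_E+\mu_E),\quad -(\sigma_L+\mu_L),\quad -(\sigma_P+\mu_P),\quad -\Bigl(\mu_F+\tfrac{\eta M_s^0}{H_0}\Bigr),\quad -\mu_M,\quad -\mu_F,\quad -\mu_M.
\]
All of these are strictly negative for every admissible choice of $S_0,S_1\ge 0$. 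The linearization principle then gives local asymptotic stability of $\mathcal E_s$. When $S_0=0$ the result reduces to Theorem~\ref{thm:LAStrivial}.

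\textbf{Main obstacle.} The main risk is Step 1: verifying that no entry links $F_{mw}$ back to $E$, $L$, $P$ or $F_u$. In other words, the feedback loop $E\to L\to P\to F_u\to F_{mw}\to E$ must be broken at the linear level. It is broken because $M_w=0$ and $F_u=0$ at the equilibrium, so the term $M_wF_u/H$ is at least quadratic there, even when $M_s^0>0$. I would also double-check that the proportional release term $S_1(\cdot)$ only feeds into $M_s$. Since no other equation depends on $M_s$ linearly at $\mathcal E_s$ (every $M_s$ dependence is multiplied by $F_u=0$), the proportional release cannot spoil the triangular structure.
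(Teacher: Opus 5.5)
Your proposal is correct and follows the paper's own argument: linearize at $\mathcal E_s$, note that the Jacobian is triangular after reordering the variables, and read off the eigenvalues. The only change from Theorem~\ref{thm:LAStrivial} is the $F_u$ diagonal entry $-\bigl(\eta S_0/(\zeta\eta S_0+\gamma\mu_M)+\mu_F\bigr)$, which is exactly the extra eigenvalue~\eqref{eq:new_eigenvalue}; your explicit ordering $(F_{mw},E,L,P,F_u,M_w,F_{ms},M_s)$ and your check that the $S_1$ entries sit only in the $M_s$ row, whose column is otherwise empty, make precise what the paper leaves implicit.
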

 
\begin{proof}
The argument follows the proof of Theorem~\ref{thm:LAStrivial}
closely.  The Jacobian of~\eqref{eq:model_I} evaluated at
$\mathcal{E}_s$ differs from $A(\mathcal E_0)$ in~\eqref{eq:AI_0} in
exactly one respect: the eigenvalue $-\mu_F$ now has multiplicity two
(rather than three), and an additional eigenvalue
\begin{equation}\label{eq:new_eigenvalue}
    \lambda
    = -\frac{\eta S_0}{\zeta\eta S_0 + \gamma\mu_M} - \mu_F
    < 0
\end{equation}
appears, corresponding to the linearized dynamics of the
$F_{ms}$-compartment in the presence of a positive sterile male
population $S_0/\mu_M$.  Since this additional eigenvalue is
strictly negative and all other eigenvalues are unchanged from
Theorem~\ref{thm:LAStrivial}, the conclusion follows.
\end{proof}
 
\begin{theorem}\label{thm:GASsterile}
    Let $S(t) = S_0 + S_1 (F_u+F_{mw}+F_{ms}+M_w)$ in model~\eqref{eq:model_I} with
    $S_0, S_1 \geq 0$.  If $\Rq < 1$, then $\mathcal{E}_s$ is
    globally asymptotically stable in $\Omega$.
\end{theorem}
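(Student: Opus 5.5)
I will reuse the Lyapunov function $V$ from the proof of Theorem~\ref{thm:GAStrivial} verbatim. The key observation is that $V$ involves only the wild compartments $(E,L,P,F_u,F_{mw})$. The release rate $S(t)$ enters the right-hand sides of those equations only through the mating coefficient $C_m=(M_w+\eta M_s)/(\gamma+\zeta(M_w+\eta M_s))$ and through the split of mated females between $F_{mw}$ and $F_{ms}$.

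\textbf{Step 1: bound $\dot V$.} Repeating the computation of Appendix~\ref{app:ext_proof} with $M_s$ now driven by $S_0+S_1(F_u+F_{mw}+F_{ms}+M_w)$, the only change is in the $\dot F_{mw}$ term. It carries the factor $M_w/(\gamma+\zeta(M_w+\eta M_s))$, which is at most $C_m$ because $\eta M_s\ge 0$. Since $F_{mw}$ enters $V$ with a positive coefficient, $\dot V$ is bounded above by the same expression as in~\eqref{eq:Vdot_simplified}. The sterile-mated fraction simply drops out as an additional nonpositive contribution. The bound $0\le C_m\le 1/\zeta$ still holds. So for $\Rq<1$ the bracket is strictly negative, and
\[
\dot V\le -\kappa F_u-\delta_L\tfrac{\sigma_E+\mu_E}{\sigma_E}L^2\le 0
\]
for some $\kappa>0$.

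\textbf{Step 2: boundedness.} Before invoking LaSalle I need bounded trajectories. This is the step I expect to be the main obstacle, because $S$ is now state dependent and Proposition~\ref{prop:bounded_solns} assumed $S$ bounded a priori. I would handle it sequentially:
\begin{itemize}
\item $E\le\max\{E(0),K_E\}$ as before.
\item This gives bounds on $L$ and $P$, and then on $F_u$ and $M_w$.
\item The total female population $F_u+F_{mw}+F_{ms}$ satisfies $\frac{d}{dt}(F_u+F_{mw}+F_{ms})=r\sigma_PP-\mu_F(F_u+F_{mw}+F_{ms})$, independent of $M_s$, so it is bounded.
\end{itemize}
Hence $S(t)$ is bounded along every trajectory. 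Then $M_s$ is bounded and Corollary~\ref{cor:global existence} applies, so every orbit is precompact in $\Omega$.

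\textbf{Step 3: LaSalle.} On $\{\dot V=0\}$ we have $L\equiv F_u\equiv 0$. Invariance then forces $E\equiv0$ (from the $\dot L$ equation), then $F_{mw}\equiv0$ (from the $\dot E$ equation, since $\phi F_{mw}=0$), then $P\equiv0$ (from $\dot F_u=r\sigma_PP$). This leaves $F_{ms}$, $M_w$ and $M_s$ free. So by LaSalle's principle, $(E,L,P,F_u,F_{mw})\to0$.

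\textbf{Step 4: the remaining compartments.} On the limit set, $\dot M_w=-\mu_MM_w$ and $\dot F_{ms}=-\mu_FF_{ms}$, so invariance with boundedness (an entire bounded solution of a decaying linear equation is zero) forces $M_w=F_{ms}=0$. Then $\dot M_s=S_0-\mu_MM_s$, and the only bounded entire solution is $M_s\equiv S_0/\mu_M$. Thus the $\omega$-limit set is $\{\mathcal E_s\}$.

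Alternatively, $M_s(t)\to S_0/\mu_M$ follows directly from the asymptotically autonomous equation $\dot M_s=S_0+S_1(\text{terms}\to0)-\mu_MM_s$.

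Combined with local asymptotic stability from Theorem~\ref{thm:LASsterile}, this gives global asymptotic stability of $\mathcal E_s$ in $\Omega$. When $S_0=0$, the same argument recovers $\mathcal E_0$.
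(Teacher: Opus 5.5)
Your proposal is correct and follows the paper's proof. It uses the same Lyapunov function $V$ and the same bound \eqref{eq:Vdot_simplified}, which holds verbatim because $V$ does not involve $M_s$; it then applies LaSalle to the wild compartments and concludes $M_s\to S_0/\mu_M$. Your differences are refinements the paper leaves implicit: Step~2 uses the identity $\frac{d}{dt}(F_u+F_{mw}+F_{ms})=r\sigma_PP-\mu_F(F_u+F_{mw}+F_{ms})$ to bound the state-dependent release rate a priori (Proposition~\ref{prop:bounded_solns} assumes $S$ bounded in advance), and you handle $M_w$, $F_{ms}$, $M_s$ through complete bounded orbits in the $\omega$-limit set, where the paper instead uses $F_u\to 0$ and the variation-of-constants formula.
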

 
\begin{proof}
The proof is analogous to that of Theorem~\ref{thm:GAStrivial}, with
one modification: the $M_s$ term is omitted from the Lyapunov
function, since $M_s$ no longer tends to zero along trajectories
when $S_0 > 0$.  Define
\begin{equation}\label{eq:SIT_Lyapunov_function}
    \begin{split}
    V(E,L,P,F_u,F_{mw},F_{ms},M_w,M_s)
    &= E
     + \left(\frac{\sigma_E+\mu_E}{\sigma_E}\right)\,L
     + \left[\frac{(\sigma_E+\mu_E)(\sigma_L+\mu_L)}{\sigma_E\sigma_L}\right]\,P \\
    &\quad
     + \left[\frac{(\sigma_E+\mu_E)(\sigma_L+\mu_L)(\sigma_P+\mu_P)}
            {r\sigma_E\sigma_L\sigma_P}\right]\,F_u
     + \left(\frac{\phi}{\mu_F}\right)\,F_{mw}.
\end{split}
\end{equation}

The same LaSalle invariance argument as in the proof of
Theorem~\ref{thm:GAStrivial} shows that for any solution
$X(t) = (E(t), L(t), P(t), F_u(t), F_{mw}(t), F_{ms}(t),
M_w(t), M_s(t))$ in $\Omega$,
\begin{equation*}
    \lim_{t\to\infty}
    \bigl(E(t),\,L(t),\,P(t),\,F_u(t),\,F_{mw}(t),\,M_w(t)\bigr)
    = (0,\,0,\,0,\,0,\,0,\,0).
\end{equation*}
Since $F_{ms}$ satisfies $\dot{F}_{ms} =
\frac{\eta M_s}{\gamma+\zeta(M_w+\eta M_s)}F_u - \mu_F F_{ms}$
and $F_u(t) \to 0$, it follows that $F_{ms}(t) \to 0$ as well.
The variation-of-constants formula applied to the $M_s$-equation
gives
\begin{equation*}
    M_s(t)
    = e^{-\mu_M t}M_s(0)
      + \int_0^t e^{-\mu_M(t-s)}\bigl(S_0 + S_1 [F_u(s)+F_{mw}(s)+F_{ms}(s)+M_w(s)]\bigr)\,ds.
\end{equation*}
Since $F_u(s)+F_{mw}(s)+F_{ms}(s)+M_w(s)\to 0$ as $t \to \infty$, an application of
l'H\^{o}pital's rule (or a standard dominated convergence argument)
yields
\begin{equation*}
    \lim_{t\to\infty} M_s(t) = \frac{S_0}{\mu_M}.
\end{equation*}
Hence, $\dss\lim_{t\to\infty} X(t) = \mathcal{E}_s$, establishing
global asymptotic stability.
\end{proof}
 \noindent 
The proof of Theorem~\ref{thm:positive_equilibria} reveals the bifurcation mechanism by which positive equilibria appear in the
SIT-free system: a saddle--node bifurcation occurring at $\Rq +
O(\delta_L, 1/K_E) = 1$.  When sterile males are released, the same
qualitative bifurcation persists, but the release parameters $(S_0,
S_1)$ act as additional bifurcation parameters.
Theorem~\ref{thm:SIT_bifurcation} below establishes that for
sufficiently large release rates, the two positive equilibria
collide and disappear through a second saddle--node bifurcation,
leaving $\mathcal{E}_s$ as the only nonnegative equilibrium.  The
proof relies on the following lemma, which characterizes how positive
roots of a cubic polynomial are affected by the addition of a
nonnegative perturbation which is the effect of the addition of nonzero SIT terms in the model on the cubic polynomial $p(s)$.

\begin{lemma}\label{lem:cubic_roots}
    Let $p(x)$ be a cubic polynomial with positive leading
    coefficient and exactly two distinct positive roots.  Let
    $\ell(x)$ be a nonzero polynomial of degree at most two with
    nonnegative coefficients, and define $q_Z(x) = p(x) +
    Z\ell(x)$ for $Z \geq 0$.  Set
    \begin{equation}\label{eq:Zstar_def}
        Z^{\ast} := -\min_{x > 0}\frac{p(x)}{\ell(x)}.
    \end{equation}
    Then, $Z^{\ast}$ is well-defined and strictly positive. Moreover, $q_Z$ has exactly two distinct positive roots if
    $0 \leq Z < Z^{\ast}$, and no positive roots if $Z > Z^{\ast}$.
\end{lemma}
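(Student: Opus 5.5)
The plan is to reduce the root counting for $q_Z$ to a level-set question for the single function
\[
h(x):=\frac{p(x)}{\ell(x)},\qquad x>0,
\]
and then to control the number of positive roots using the fact that $q_Z$ is a cubic. The argument rests on two sign assumptions taken from the intended application. The first is that the two positive roots $0<x_-<x_+$ of $p$ are simple, so $p<0$ on $(x_-,x_+)$. This is needed: if one of them were a double root, then $p\ge0$ on $(0,\infty)$ and $Z^\ast$ would be $0$. The second is that $p(0)>0$, which holds for the cubic in Theorem~\ref{thm:positive_equilibria}, where $p(0)=d>0$. With positive leading coefficient, these force the third root of $p$ to be negative.

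\textbf{Step 1: $Z^\ast$ is well defined and positive.}
Since $\ell\not\equiv0$ has nonnegative coefficients, $\ell(x)>0$ for every $x>0$, so $h$ is continuous on $(0,\infty)$. Because $\deg p=3>\deg\ell$ and the leading coefficient of $p$ is positive, $h(x)\to+\infty$ as $x\to\infty$. As $x\to0^+$, $p(x)\to p(0)>0$. Hence $h$ either tends to the positive number $p(0)/\ell(0)$ (if $\ell(0)>0$) or to $+\infty$ (if $\ell(0)=0$). Either way, $h$ is positive near both ends of $(0,\infty)$ and negative on $(x_-,x_+)$. It therefore attains a minimum on a compact subinterval, and that minimum is strictly negative. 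Consequently $Z^\ast=-\min_{x>0}h(x)$ exists and $Z^\ast>0$. I expect this boundary behaviour at $0$, together with the sign hypotheses that drive it, to be the only delicate point.

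\textbf{Step 2: reformulation.}
For $x>0$, $q_Z(x)=\ell(x)\bigl(h(x)+Z\bigr)$ with $\ell(x)>0$. So the positive roots of $q_Z$ are exactly the solutions of $h(x)=-Z$.

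\emph{Case $Z>Z^\ast$.} Here $h(x)\ge-Z^\ast>-Z$ for all $x>0$, so $q_Z$ has no positive roots.

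\emph{Case $0\le Z<Z^\ast$.} Let $x_m$ be a minimizer of $h$, so $h(x_m)=-Z^\ast<-Z$. Since $h$ is positive near $0$ and near $\infty$, the intermediate value theorem gives a root of $h+Z$ in $(0,x_m)$ and another in $(x_m,\infty)$. This yields at least two distinct positive roots of $q_Z$.

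\textbf{Step 3: at most two positive roots.}
The polynomial $q_Z$ is still a cubic with the same positive leading coefficient, because $\deg\ell\le2$. It satisfies $q_Z(0)=p(0)+Z\ell(0)>0$, while $q_Z(x)\to-\infty$ as $x\to-\infty$, so $q_Z$ has a negative root. A cubic has at most three roots, so at most two of them are positive. Combined with Step 2, $q_Z$ has exactly two distinct positive roots when $0\le Z<Z^\ast$ and none when $Z>Z^\ast$, which proves the lemma.
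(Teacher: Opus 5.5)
Your proof is correct and follows the paper's argument. Both reduce to the sign of $p/\ell$, use the intermediate value theorem around the minimizer to get two positive roots, cap the count via the negative root forced by $q_Z(0)>0$, and use the bound $q_Z\ge\ell\,(Z-Z^\ast)>0$ for $Z>Z^\ast$. The only difference is that the paper anchors the intermediate value theorem at $r_1,r_2$ via $q_Z(r_i)=Z\ell(r_i)>0$, and it uses $p(0)>0$ only implicitly, when it asserts $p>0$ on $[0,r_1)$; you rightly state this hypothesis explicitly.

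One correction to your aside on double roots. A double positive root forces $p(0)<0$, so $p$ is negative near $0$, not $p\ge0$ on $(0,\infty)$. Conversely, $p(0)>0$ on its own already makes both positive roots simple, so your first assumption is redundant.
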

 
\begin{proof}
Let $r_1 < r_2$ denote the two positive roots of $p$.  Since the
leading coefficient of $p$ is positive, $p(x) < 0$ for $x \in
(r_1, r_2)$ and $p(x) > 0$ for $x \in [0, r_1) \cup (r_2,
+\infty)$.  Since all coefficients of $\ell$ are nonnegative and
$x > 0$, we have $\ell(x) > 0$ for all $x > 0$.  Therefore
$p(x)/\ell(x) < 0$ on $(r_1, r_2)$, and the continuous function
$p(x)/\ell(x)$ attains a strictly negative minimum $-Z^{\ast}$ at
some point $x^{\ast} \in (r_1, r_2)$.  In particular, $Z^{\ast} >
0$.
 
When $Z = 0$, $q_0 = p$ has exactly two positive roots by
hypothesis.  Suppose $0 < Z < Z^{\ast}$.  Then
\begin{equation*}
    q_Z(r_i) = p(r_i) + Z\ell(r_i) = Z\ell(r_i) > 0,
    \quad i = 1, 2,
\end{equation*}
and,
\begin{equation*}
    q_Z(x^{\ast})
    = \ell(x^{\ast})\!\left(\frac{p(x^{\ast})}{\ell(x^{\ast})} + Z\right)
    = \ell(x^{\ast})(Z - Z^{\ast}) < 0.
\end{equation*}
Since $q_Z(r_1) > 0$, $q_Z(x^{\ast}) < 0$, and $q_Z(r_2) > 0$,
the intermediate value theorem gives at least one root of $q_Z$ in
each of $(r_1, x^{\ast})$ and $(x^{\ast}, r_2)$.  Since the leading
coefficient of $q_Z$ is positive and $q_Z(0) = p(0) + Z\ell(0) \geq
0$, the cubic $q_Z$ must also have a negative root (as $q_Z(x) \to
-\infty$ as $x \to -\infty$ and $q_Z(0) \geq 0$).  This accounts
for all three roots of $q_Z$, confirming exactly two positive roots.
 
If $Z > Z^{\ast}$, then for all $x > 0$,
\begin{equation*}
    q_Z(x)
    = \ell(x)\!\left(\frac{p(x)}{\ell(x)} + Z\right)
    \geq \ell(x)(Z - Z^{\ast}) > 0,
\end{equation*}
so $q_Z$ has no positive roots.
\end{proof}

Theorem \ref{thm:SIT_bifurcation} shows that, for sufficiently small sterile male mosquito releases, the equilibria $X_\pm^{\ast\ast}$ are perturbed, but not eliminated. However, our last result in this section is that with a high enough release rate of sterile male mosquitoes, we may drive the wild population to extinction.
    
\begin{theorem}\label{thm:SIT_bifurcation}
    Suppose $\Rq > 1$ and $S(t) = S_0 + S_1 (F_u(t)+F_{mw}(t)+F_{ms}(t)+M_w(t))$
    in~\eqref{eq:model_I}.  There exist constants $K_E^{\ast},
    \delta_L^{\ast}, S_1^{\ast} > 0$ and a continuous function
    $S_0^{\ast} : [0, S_1^{\ast}] \to \mathbb{R}$ with the
    following properties.  Whenever $K_E > K_E^{\ast}$ and
    $\delta_L < \delta_L^{\ast}$:
    \begin{enumerate}
        \item[\textup{(i)}] If $0 \leq S_1 \leq S_1^{\ast}$ and
            $0 \leq S_0 < S_0^{\ast}(S_1)$, then
            model~\eqref{eq:model_I} has exactly two positive
            equilibria (in addition to $\mathcal E_0$ and
            $\mathcal{E}_s$).
        \item[\textup{(ii)}] If $S_1 > S_1^{\ast}$ or $S_0 >
            S_0^{\ast}(S_1)$, then the only nonneg\-ative equilibrium
            of~\eqref{eq:model_I} is $\mathcal{E}_s$.
    \end{enumerate}
    Moreover, in the limit $\delta_L \to 0$, $S_1^{\ast}$ and
    $S_0^{\ast}(S_1)$ have the asymptotic forms
    \begin{align}
        S_1^{\ast}
        &= 
\frac{(1-r)\mu_F\mu_M}
{\eta\bigl[(1-r)\mu_F+r\mu_M\bigr]}
(\Rq-1)
-
\frac{2\mu_F\mu_M}
{\eta\bigl[(1-r)\mu_F+r\mu_M\bigr]}
\sqrt{
\frac{
\Rq(1-r)\gamma\mu_F\mu_M(\sigma_L+\mu_L)(\sigma_P+\mu_P)
}{
K_E\sigma_E\sigma_L\sigma_P(1+\zeta\mu_F)
}
}
+
O(\delta_L),
        \label{eq:S1star}\\[6pt]
        S_0^{\ast}(S_1)
        &= \frac{\Big(\eta S_1[(1-r)\mu_F+r\mu_M]-(1-r)(\Rq-1)\mu_F\mu_M\Big)^2\sigma_E\sigma_L\sigma_P}
            {4\eta\Rq\mu_M^2\mu_F^2(1-r)
             \!\left(\sigma_L+\mu_L\right)
             \!\left(\sigma_P+\mu_P\right)}
       K_E
       - \frac{\gamma\mu_F\mu_M}{\eta(1+\zeta\mu_F)}
       + O(\delta_L).
        \label{eq:S0star}
    \end{align}
\end{theorem}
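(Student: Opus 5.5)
We reduce the equilibrium problem with release rate \eqref{eq:release_rate} to a polynomial equation in $L^{\ast\ast}$, in the same way as in the proof of Theorem~\ref{thm:positive_equilibria}, and then apply Lemma~\ref{lem:cubic_roots}.

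\emph{Reduction to a cubic.} At an equilibrium, the $M_s$-equation gives $M_s^{\ast\ast}=(S_0+S_1 N^{\ast\ast})/\mu_M$, where $N^{\ast\ast}=F_u^{\ast\ast}+F_{mw}^{\ast\ast}+F_{ms}^{\ast\ast}+M_w^{\ast\ast}$. Summing the three female equations gives $F_u+F_{mw}+F_{ms}=r\sigma_PP/\mu_F$, so $N^{\ast\ast}$ is linear in $P^{\ast\ast}$ and hence linear in $L^{\ast\ast}$:
\[
N^{\ast\ast}=\frac{\sigma_L\sigma_P}{\sigma_P+\mu_P}\Bigl(\frac{r}{\mu_F}+\frac{1-r}{\mu_M}\Bigr)L^{\ast\ast}.
\]
Therefore $\eta M_s^{\ast\ast}=\eta S_0/\mu_M+\kappa S_1L^{\ast\ast}$ for an explicit constant $\kappa>0$. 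The factor $(1-r)\mu_F+r\mu_M$ in \eqref{eq:S1star} comes from this step.

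Next we solve for $F_u^{\ast\ast}$ and $F_{mw}^{\ast\ast}$ as in \eqref{eq:equil_components}. The only change is that $\gamma$ is replaced by $\gamma+\zeta\eta M_s^{\ast\ast}$ in the denominator, and the total mating rate now contains $M_w+\eta M_s$. Substituting into the $E$- and $L$-equations and clearing denominators gives $L\,\tilde p(L)=0$. We expect $\tilde p(s)=p(s)+S_0\ell_0(s)+S_1\ell_1(s)$, where $p$ is the SIT-free cubic from Theorem~\ref{thm:positive_equilibria}. The coefficients of $\ell_0$ and $\ell_1$ should be nonnegative and their degrees at most two. The reason is that the sterile terms only enlarge the loss term $(M_w+\eta M_s)F_u$ and never add reproduction, so after clearing denominators they contribute terms with the same sign as $d$ and $c$'s positive part.

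The bookkeeping must be checked so that no cubic term appears. Should $S_1$ produce an $s^3$ term, we would absorb it by noting that it also carries the nonnegative factor $\delta_L$ or $1/K_E$, and redo Lemma~\ref{lem:cubic_roots} with a positive-leading-coefficient cubic perturbation. The lemma's proof still works provided $\ell(x)>0$ on $x>0$.

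\emph{Applying the lemma.} We fix $K_E^\ast,\delta_L^\ast$ as in Theorem~\ref{thm:positive_equilibria}, so that $p$ has exactly two distinct positive roots. For a fixed direction $(S_0,S_1)$ we set $\ell=S_0\ell_0+S_1\ell_1$. Lemma~\ref{lem:cubic_roots} then shows that the number of positive roots along each ray drops from two to zero at a single threshold $Z^\ast$. Because the coefficients are monotone in each of $S_0$ and $S_1$, we can define
\[
S_0^\ast(S_1)=-\min_{x>0}\frac{p(x)+S_1\ell_1(x)}{\ell_0(x)}.
\]
This applies the lemma to the cubic $p+S_1\ell_1$ with $Z=S_0$. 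It is valid for $S_1$ below the value $S_1^\ast$ at which $p+S_1\ell_1$ itself loses its positive roots, which is again given by Lemma~\ref{lem:cubic_roots} with $\ell=\ell_1$. Continuity of $S_0^\ast$ follows because a minimum of a family of functions depending continuously on a parameter is continuous, since the minimizer stays in a compact subinterval of the root interval. Each positive root $s$ gives a positive equilibrium, because all components are positive increasing functions of $s$. This proves (i) and (ii). For (ii) with $S_1>S_1^\ast$, we also use that $p+S_1\ell_1+S_0\ell_0\ge p+S_1\ell_1>0$ on $x>0$.

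\emph{Asymptotics.} Setting $\delta_L=0$ makes every polynomial quadratic. The threshold is then where the discriminant vanishes, $c(S_1)^2=4b\,d(S_0,S_1)$, with $b=O(1/K_E)$. For $S_1^\ast$ we solve $c_0-S_1(\cdot)=-2\sqrt{b_0d_0}$ for $S_1$, which gives the $K_E^{-1/2}$ correction in \eqref{eq:S1star}. For $S_0^\ast$ we solve $d_0+S_0\ell_0(0)=c^2/(4b)$ for $S_0$, which gives the $K_E$-linear term minus $\gamma\mu_F\mu_M/(\eta(1+\zeta\mu_F))$ in \eqref{eq:S0star}. The $O(\delta_L)$ errors follow from the implicit function theorem, as in Theorem~\ref{thm:positive_equilibria}.

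\emph{Main obstacle.} The main difficulty is the explicit elimination that produces $\tilde p$ and the check that $\ell_0$ and $\ell_1$ have nonnegative coefficients, since the $\zeta\eta M_s$ term appears in both the numerator and the denominator of the mating fraction. We would carry out this elimination first, by computer algebra if needed.
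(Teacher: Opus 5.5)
Your plan follows the paper's proof step for step: reduce to $L\,q(L)=0$ with $q=p+(\tilde S_0+\tilde S_1L)\bigl(1+\frac{\delta_L}{\sigma_L+\mu_L}L\bigr)$, apply Lemma~\ref{lem:cubic_roots} first in $S_1$ (with $S_0=0$) and then in $S_0$ for fixed $S_1<S_1^\ast$, and read off \eqref{eq:S1star}--\eqref{eq:S0star} from the $\delta_L=0$ quadratic minimization. The elimination you flag as the main obstacle is harmless: $H$ cancels, and $\eta M_s^{\ast\ast}$ (affine in $L$ by your female-sum identity) enters only through the denominator $\gamma\mu_F+(1+\zeta\mu_F)(M_w^{\ast\ast}+\eta M_s^{\ast\ast})$ of $F_{mw}^{\ast\ast}$, which gives exactly the nonnegative degree-$\le 2$ perturbation you predicted; just correct the sign in your $S_1^\ast$ condition to $c_0+\tilde S_1=-2\sqrt{b_0d_0}$.
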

 
\begin{proof}
Let $K_E^{\ast}$ and $\delta_L^{\ast}$ be as in
Theorem~\ref{thm:positive_equilibria}.  Following the same
equilibrium analysis as in the proof of
Theorem~\ref{thm:positive_equilibria}, the $L$-component of any
equilibrium of~\eqref{eq:model_I} satisfies
\begin{equation}\label{eq:SIT_L_eqn}
    L\,q(L) = 0,
\end{equation}
where $q(L)$ is the cubic polynomial
\begin{equation}\label{eq:q_cubic}
    q(L) = p(L)
           + \bigl(\tilde{S}_0 + \tilde{S}_1 L\bigr)
             \!\left(1 + \frac{\delta_L}{\sigma_L+\mu_L}\,L\right),
\end{equation}
$p(L)$ is the cubic from the proof of
Theorem~\ref{thm:positive_equilibria}, and $\tilde{S}_0$,
$\tilde{S}_1$ are re-scalings of $S_0$, $S_1$ defined by
\begin{align}
    \tilde{S}_0
    &= 
S_0\,\eta\mu_F(1+\zeta\mu_F)
(\sigma_E+\mu_E)(\sigma_L+\mu_L)(\sigma_P+\mu_P)^2,\label{eq:S0tilde}\\
    \tilde{S}_1
       &=S_1\,
\frac{
\eta\sigma_L\sigma_P\bigl[(1-r)\mu_F+r\mu_M\bigr]
(1+\zeta\mu_F)
(\sigma_E+\mu_E)(\sigma_L+\mu_L)(\sigma_P+\mu_P)
}{\mu_M}.
    \label{eq:S1tilde}
\end{align}
As in Theorem~\ref{thm:positive_equilibria}, positive equilibria
correspond to positive roots of $q(L)$, and the equilibrium values
of all compartments other than $M_s$ are positive increasing
functions of $L$ whenever $L > 0$.
 
\medskip
\noindent\textit{Case $S_0 = 0$.}
Setting $S_0 = 0$, equation~\eqref{eq:q_cubic} becomes $q(L) =
p(L) + \tilde{S}_1\,\ell_1(L)$, where
\begin{equation*}
    \ell_1(L) = L\!\left(1 + \frac{\delta_L}{\sigma_L+\mu_L}\,L\right)
\end{equation*}
is a quadratic polynomial with nonnegative coefficients.  Since
$p(L)$ has positive leading coefficient and exactly two distinct
positive roots (by Theorem~\ref{thm:positive_equilibria}), and
$\ell_1(L) > 0$ for $L > 0$, Lemma~\ref{lem:cubic_roots} applies
with
\begin{equation*}
    \tilde{S}_1^{\ast}
    := -\min_{L > 0}\frac{p(L)}{\ell_1(L)}.
\end{equation*}
Hence $q(L)$ has exactly two positive roots for $0 \leq \tilde{S}_1
< \tilde{S}_1^{\ast}$ and no positive roots for $\tilde{S}_1 >
\tilde{S}_1^{\ast}$.  In the asymptotic limit $\delta_L \to 0$, the
minimization problem reduces to a quadratic minimization, providing the asymptotic solution:
\begin{align*}
    S_1^{\ast}
    &= \frac{\mu_M\tilde{S}_1^{\ast}}
            {\eta\sigma_L\sigma_P\left[(1-r)\mu_F+r\mu_M\right](1+\zeta\mu_F)
             (\sigma_E+\mu_E)(\sigma_L+\mu_L)(\sigma_P+\mu_P)}\\
    &=
\frac{(1-r)\mu_F\mu_M}
{\eta\bigl[(1-r)\mu_F+r\mu_M\bigr]}
(\Rq-1)
-
\frac{2\mu_F\mu_M}
{\eta\bigl[(1-r)\mu_F+r\mu_M\bigr]}
\sqrt{
\frac{
\Rq(1-r)\gamma\mu_F\mu_M(\sigma_L+\mu_L)(\sigma_P+\mu_P)
}{
K_E\sigma_E\sigma_L\sigma_P(1+\zeta\mu_F)
}
}
+
O(\delta_L).
\end{align*}
 
\medskip
\noindent\textit{Case $S_0 > 0$.}
For fixed $0 \leq S_1 < S_1^{\ast}$, define
$\tilde{p}(L) := p(L) + \tilde{S}_1\,\ell_1(L)$.  By the case
$S_0 = 0$ just established, $\tilde{p}(L)$ is a cubic with positive
leading coefficient and exactly two distinct positive roots.  Setting
$\ell_2(L) := 1 + \frac{\delta_L}{\sigma_L+\mu_L}\,L$ (a linear
polynomial with positive coefficients), equation~\eqref{eq:q_cubic}
becomes $q(L) = \tilde{p}(L) + \tilde{S}_0\,\ell_2(L)$.
Lemma~\ref{lem:cubic_roots} applies again with
\begin{equation*}
    \tilde{S}_0^{\ast}(S_1)
    := -\min_{L > 0}\frac{\tilde{p}(L)}{\ell_2(L)},
\end{equation*}
giving exactly two positive roots of $q(L)$ when $0 \leq \tilde{S}_0
< \tilde{S}_0^{\ast}(S_1)$ and none when $\tilde{S}_0 >
\tilde{S}_0^{\ast}(S_1)$.  By the implicit function theorem,
$\tilde{S}_0^{\ast}$ is a continuous function of $S_1$.  In the
asymptotic limit $\delta_L \to 0$,
\begin{align*}
    S_0^{\ast}(S_1)
    &= \frac{\tilde{S}_0^{\ast}(S_1)}
            {\eta\mu_F(1+\zeta\mu_F)
             (\sigma_E+\mu_E)(\sigma_L+\mu_L)(\sigma_P+\mu_P)^2}\\
    &= \frac{\Big(\eta S_1[(1-r)\mu_F+r\mu_M]-(1-r)(\Rq-1)\mu_F\mu_M\Big)^2\sigma_E\sigma_L\sigma_P}
            {4\eta\Rq\mu_M^2\mu_F^2(1-r)
             \!\left(\sigma_L+\mu_L\right)
             \!\left(\sigma_P+\mu_P\right)}
       K_E
       - \frac{\gamma\mu_F\mu_M}{\eta(1+\zeta\mu_F)}
       + O(\delta_L).
\end{align*}
This completes the proof of both~\eqref{eq:S1star}
and~\eqref{eq:S0star}.
\end{proof}

\begin{theorem}\label{thm:finite_cost_constant_control}
    For each $X_0\in\Omega$, there exists a constant release
    $S(t)\equiv S_0>0$ sufficiently large (depending on $X_0$) such that solution
    \begin{equation*}
        X(t)=(E(t),L(t),P(t),F_u(t),F_{mw}(t),F_{ms}(t),M_w(t),M_s(t))
    \end{equation*}
    to the model \eqref{eq:model_I} with $X(0)=X_0$ satisfies $\dss\lim_{t\to\infty} X(t)=\mathcal E_s$.
\end{theorem}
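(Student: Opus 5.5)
The plan is to show that a large constant release drives the mating coefficient for wild males so low that the wild population cannot replace itself. We then compare the trajectory with a linear system whose only equilibrium is zero. With $S\equiv S_0$, the $M_s$-equation is linear and decoupled, giving
\[
M_s(t)=e^{-\mu_M t}M_s(0)+\frac{S_0}{\mu_M}\bigl(1-e^{-\mu_M t}\bigr)\ \ge\ \min\Bigl\{M_s(0),\tfrac{S_0}{\mu_M}\Bigr\}(1-e^{-\mu_M t})+\ldots
\]
In particular $M_s(t)\ge \frac{S_0}{2\mu_M}$ for all $t\ge T_0$, with $T_0=\mu_M^{-1}\ln 2$ independent of $S_0$. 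Meanwhile Proposition~\ref{prop:bounded_solns} supplies bounds on the wild compartments that do not depend on $S_0$. Indeed, the chain $E_{\max}, L_{\max}, P_{\max}, F_{u,\max}, M_{w,\max}$ and the bound on $F_{mw}$ all come from inequalities in which $M_s$ enters only through the mating fraction $M_w/(\gamma+\zeta(M_w+\eta M_s))\le M_w/\gamma$. So there is a constant $\bar M$, depending on $X_0$ but not on $S_0$, with $M_w(t)\le \bar M$ for all $t$.

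Next I will bound the wild mating rate for $t\ge T_0$:
\[
\frac{M_w}{\gamma+\zeta(M_w+\eta M_s)}\le \frac{\bar M}{\gamma+\zeta\eta S_0/(2\mu_M)}=:\varepsilon(S_0)\xrightarrow[S_0\to\infty]{}0.
\]
For $t\ge T_0$ the wild subsystem is then dominated by the linear cooperative system
\[
\dot E\le \phi F_{mw}-(\sigma_E+\mu_E)E,\quad \dot L\le\sigma_EE-(\sigma_L+\mu_L)L,\quad \dot P\le \sigma_L L-(\sigma_P+\mu_P)P,
\]
\[
\dot F_u\le r\sigma_P P-\mu_F F_u,\quad \dot F_{mw}\le \varepsilon F_u-\mu_F F_{mw}.
\]
I will mimic the Lyapunov function of Theorem~\ref{thm:GAStrivial}, taking
\[
V=E+\tfrac{\sigma_E+\mu_E}{\sigma_E}L+\tfrac{(\sigma_E+\mu_E)(\sigma_L+\mu_L)}{\sigma_E\sigma_L}P+c_4F_u+\tfrac{\phi}{\mu_F}F_{mw}.
\]
Here $c_4=\frac{(\sigma_E+\mu_E)(\sigma_L+\mu_L)(\sigma_P+\mu_P)}{r\sigma_E\sigma_L\sigma_P}$. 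Differentiating along the comparison inequalities, all terms telescope except the $F_u$ term, whose coefficient is $-c_4\mu_F+\phi\varepsilon/\mu_F$. This is strictly negative once $\varepsilon<c_4\mu_F^2/\phi$, which holds for $S_0$ large. Equivalently, the next-generation number of the comparison system, $\phi\varepsilon/(c_4\mu_F^2)$, is below one.

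With $S_0$ chosen this way, $\dot V\le -\kappa F_u-\delta_L(\cdot)L^2\le 0$ for $t\ge T_0$. I will then argue convergence of $(E,L,P,F_u,F_{mw},M_w)\to 0$. The cleanest route is comparison: the linear system above is cooperative with a Hurwitz (Metzler) matrix when the reproduction number is below one, so by the Kamke comparison principle the wild components are bounded by solutions of that linear system and decay exponentially. The mating term is nonnegative, so the true dynamics of $F_u$ only lose mass, which justifies the inequality for $F_u$. $M_w$ then decays because $P\to0$, and $F_{ms}\to0$ because $F_u\to0$. Finally $M_s\to S_0/\mu_M$ by the explicit formula, giving $X(t)\to\mathcal E_s$.

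The main obstacle is making sure $\bar M$, and hence $\varepsilon$, is genuinely independent of $S_0$. The logic would be circular if the wild bounds depended on the release. I will verify this by rechecking the dependency chain of Proposition~\ref{prop:bounded_solns}, noting that only $M_{s,\max}$ and $F_{ms,\max}$ involve $S_0$. A second subtlety is that the comparison only starts at $T_0$. This causes no difficulty: I apply it with the initial state $X(T_0)$, whose wild components are bounded independently of $S_0$.
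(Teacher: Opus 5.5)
Your proof is correct, and its core matches the paper's. You use the $S_0$-independent bound $M_{w,\max}$ from Proposition~\ref{prop:bounded_solns} and the same weights in $V$. The threshold is also the same: $\frac{\phi}{\mu_F}$ times the wild-mating fraction must fall below $c_4\mu_F=\phi/\bigl(\Rq(1+\zeta\mu_F)\bigr)$. You bound that fraction a little more crudely, by $\varepsilon$, which only raises the required $S_0$.

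The difference is the closing step. The paper waits until a time $t_1$ after which $M_s>S_0^\ast/\mu_M$ and shows $\dot V\le -Q(t)F_u-\delta_L(\cdot)L^2$ with $Q(t)>0$. It then runs the LaSalle argument of Theorems~\ref{thm:GAStrivial} and~\ref{thm:GASsterile}, identifying the largest invariant set in $\{\dot V=0\}$. You instead dominate $(E,L,P,F_u,F_{mw})$ on $[T_0,\infty)$ by the linear system $\dot y=Ay$, where $A$ is a cyclic Metzler matrix. Its characteristic polynomial is $(\lambda+\sigma_E+\mu_E)(\lambda+\sigma_L+\mu_L)(\lambda+\sigma_P+\mu_P)(\lambda+\mu_F)^2-r\phi\sigma_E\sigma_L\sigma_P\varepsilon$. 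A modulus estimate on $\mathrm{Re}\,\lambda\ge0$ shows $A$ is Hurwitz exactly when $\varepsilon<c_4\mu_F^2/\phi$, and $e^{At}\ge0$ gives the componentwise comparison.

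What each route buys:
\begin{itemize}
\item Your route gives exponential decay with an explicit rate and avoids the invariant-set identification.
\item It does not rely on autonomy, so the wild compartments still vanish for any continuous, bounded release with $S(t)\ge S_0$. Only the final limit $M_s\to S_0/\mu_M$ needs $S$ constant.
\item The paper's route reuses machinery already in place. It also gives a somewhat sharper explicit $S_0^\ast$: there is no factor $2$ from your half-way time $T_0$, and $\zeta M_{w,\max}$ stays in the denominator. This is irrelevant for an existence statement.
\end{itemize}

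Two small repairs are needed. First, your opening display for $M_s(t)$ is garbled; drop the nonnegative term $M_s(0)e^{-\mu_M t}$ to get $M_s(t)\ge\frac{S_0}{\mu_M}(1-e^{-\mu_M t})$. Second, state the Hurwitz justification explicitly. Either use the determinant and modulus argument above, or use Perron--Frobenius: your weight vector $v$ is positive, $A$ is irreducible, and $v^\top A=-\kappa e_4^\top$.
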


\begin{proof}
Suppose $X(t)$ is a solution to \eqref{eq:model_I} with $X(0)=X_0$ and $S(t)\equiv S_0$, a constant. By Proposition \ref{prop:bounded_solns}, each component of $X(t)$ is bounded.  Moreover, the bounds on all components
except $M_s$ may be chosen independently of the control $S(t)$.
In particular, there exists $M_{w,\max}>0$ such that
\begin{equation*}
    0 \leq M_w(t) \leq M_{w,\max}
    \qquad \text{for all } t\geq 0.
\end{equation*}

Choose $S_0>S_0^\ast$, where
\begin{equation*}
    S_0^\ast
    =
    \max\left\{
    0,\,
    \frac{\mu_M}{\eta}
    \left[
    -\frac{\gamma}{\zeta}
    + M_{w,\max}
    \left((\Rq-1)+\frac{\Rq}{\zeta\mu_F}
    \right)
    \right]
    \right\}.
\end{equation*}
For the constant release rate $S(t)\equiv S_0$, the sterile male
component is given explicitly by
\begin{equation*}
    M_s(t)
    =
    M_s(0)e^{-\mu_M t}
    + \frac{S_0}{\mu_M}\left(1-e^{-\mu_M t}\right).
\end{equation*}
Hence $M_s(t)\to S_0/\mu_M$ as $t\to\infty$.  Since
$S_0>S_0^\ast$, there exists $t_1>0$ such that
\begin{equation}\label{eq:Ms_with_constant_release}
    M_s(t)>\frac{S_0^\ast}{\mu_M}
    \qquad \text{for all } t>t_1.
\end{equation}

Consider the candidate Lyapunov function $V:\Omega\to\mathbb R$ given by \eqref{eq:SIT_Lyapunov_function}.
In Theorem \ref{thm:GASsterile}, we show that $\dot V\leq 0$ along trajectories of~\eqref{eq:model_I} provided $\Rq<1$. Now we show that even if $\Rq\geq1$, $\dot V(t)\leq 0$ whenever \eqref{eq:Ms_with_constant_release} holds. Indeed, we may show (with details in Appendix \ref{app:ext_proof}) that
\begin{equation}\label{eq:constant_control_Lyapunov_derivative}
\begin{split}
    \dot V
    &\leq
    -Q(t)F_u
    -\delta_L\left(\frac{\sigma_E+\mu_E}{\sigma_E}\right)L^2,
\end{split}
\end{equation}
where
\begin{equation*}
    Q(t)
    =
    \frac{\phi}
         {\Rq(1+\zeta\mu_F)}
    -
    \frac{\phi}{\mu_F}
    \frac{M_{w,\max}}
         {\gamma+\zeta(M_{w,\max}+\eta M_s(t))}.
\end{equation*}
We claim that $Q(t)>0$ for all $t>t_1$.  Indeed, for $t>t_1$,
\begin{equation*}
    M_s(t)
    >
    \frac{S_0^\ast}{\mu_M}
    \geq
    \frac{1}{\eta}
    \left[
    -\frac{\gamma}{\zeta}
    + M_{w,\max}
    \left(
    (\Rq-1)+\frac{\Rq}{\zeta\mu_F}
    \right)
    \right].
\end{equation*}
Rearranging this inequality gives
\begin{equation*}
    \frac{\phi}{\mu_F}
    \frac{M_{w,\max}}
         {\gamma+\zeta(M_{w,\max}+\eta M_s(t))}
    <
    \frac{\phi}
         {\Rq(1+\zeta\mu_F)},
\end{equation*}
and hence $Q(t)>0$ for all $t>t_1$.  Therefore
\eqref{eq:constant_control_Lyapunov_derivative} implies
$\dot V\leq 0$ for all $t>t_1$. We conclude that $V$ becomes a Lyapunov function for $t>t_1$.

By Proposition~\ref{prop:bounded_solns}, all components are uniformly bounded
for all $t\geq 0$, so $\{X(t):t\geq t_{1}\}$ lies in the compact
set $[0,E_{\max}]\times\cdots\subset\Omega$. Combined with
Proposition~\ref{prop:positivity} (positive invariance of $\Omega$), this provides
the compact positively invariant region required by
LaSalle~\cite{LaS1976}, and the argument that all components of $X(t)$ other than $M_s(t)$ approach $0$ as $t\to \infty$ proceeds as in
Theorems~\ref{thm:GAStrivial} and~\ref{thm:GASsterile}. Finally, from \eqref{eq:Ms_with_constant_release} we see that $M_s(t)\to S_0/\mu_M$ as $t\to\infty$. We conclude that
\begin{equation*}
    \lim_{t\to\infty} X(t)=\mathcal E_s.
\end{equation*}
\end{proof}
\noindent


\section{Numerical simulations: assessment of SIT release strategies}
\label{sec:numerics}

The model~\eqref{eq:model_I} is now simulated to assess the
population-level impact of sterile-male mosquito releases on the local
abundance of wild \emph{Anopheles} mosquitoes.  Specifically, the
simulations are used to: (i) illustrate the bistable structure of the
SIT-free model and the role of the Allee equilibrium as a threshold
between persistence and extinction; (ii) assess how sterile-male
releases alter the equilibrium structure of the model; (iii) determine
release strategies that drive the wild mosquito population below the
Allee threshold while minimizing the cumulative number of sterile males
released; and (iv) identify the biological parameters to which this
cumulative release requirement is most sensitive.

Unless otherwise stated, the simulations are carried out using the
baseline parameter values tabulated in Table~\ref{tab:parameter_values}.
These parameter values are adapted from published empirical estimates
and from the spatial scaling arguments described in
Appendix~\ref{app:spatial}; they are not obtained by fitting
model~\eqref{eq:model_I} to a site-specific field data set.  Hence,
the numerical results should be interpreted as biologically motivated
illustrations of the qualitative and semi-quantitative implications of
the model, rather than as site-specific operational forecasts.  Unless
otherwise stated, the initial condition for SIT simulations is the
natural equilibrium $X_{+}^{\ast\ast}$ of the SIT-free model,
representing the expected state of an established wild mosquito
population at the onset of control.  All time-dependent simulations
were generated using Mathematica's \texttt{NDSolve} routine.  Equilibria
were computed by solving the corresponding algebraic steady-state
equations, and local stability was assessed by evaluating the
eigenvalues of the Jacobian matrix at each computed equilibrium.

For notational convenience, define the total wild adult mosquito
population by
\begin{equation}\label{eq:Aw_definition}
    A_w(t)=F_u(t)+F_{mw}(t)+F_{ms}(t)+M_w(t).
\end{equation}
The sterile-male release function considered in the simulations is
therefore
\begin{equation}\label{eq:release_numerics}
    S(t)=S_0+S_1 A_w(t), \qquad S_0\geq 0,\quad S_1\geq 0.
\end{equation}
Here, $S_0$ represents a constant background release rate, whereas
$S_1A_w(t)$ represents a population-responsive release component scaled
to the current abundance of wild adult mosquitoes.

Most parameters in model~\eqref{eq:model_I} are intrinsic life-history
parameters and are therefore not expected to vary substantially with the
spatial scale of the population under consideration.  Three parameters,
however, are scale-dependent: the egg carrying capacity $\KE$, the
density-dependent larval mortality coefficient $\delta_L$, and the
mate-encounter coefficient $\gamma$.  In particular, $\KE$ increases
with the amount of available oviposition habitat, $\delta_L$ decreases
as the larval habitat expands, and $\gamma$ increases with the
characteristic search area because mate-finding becomes more difficult
in larger domains.  For the baseline parameter values in
Table~\ref{tab:parameter_values}, the reference domain is interpreted
as a neighborhood surrounding a small village, with area approximately
$2\,\mathrm{km}^2$; details of the spatial scaling are given in
Appendix~\ref{app:spatial}.

Since model~\eqref{eq:model_I} is deterministic and continuous-valued,
state variables below one should be interpreted as expected abundances,
or equivalently as densities after rescaling to the reference domain,
rather than as literal fractional mosquitoes.  This interpretation is
especially important near the Allee threshold, where the unstable
equilibrium may correspond to a very small expected population size.

\subsection{Assessment of the SIT-free dynamics and the Allee threshold}

The SIT-free model, obtained by setting $S(t)\equiv 0$, is first
simulated to illustrate the intrinsic population dynamics of the wild
mosquito population.  For the baseline parameter values in
Table~\ref{tab:parameter_values}, the model admits two positive
equilibria, denoted by $X_-^{\ast\ast}$ and $X_+^{\ast\ast}$, in
addition to the mosquito-free equilibrium $\mathcal E_0$.  The larger
equilibrium $X_+^{\ast\ast}$ is locally asymptotically stable and
represents the naturally maintained wild mosquito population, whereas
the smaller equilibrium $X_-^{\ast\ast}$ is unstable and represents the
Allee equilibrium.  The numerically computed values of these two
equilibria are given in Table~\ref{tab:equilibria}.

\begin{table}[!htbp]
\centering
\renewcommand{\arraystretch}{1.25}
\begin{subtable}{0.45\textwidth}
    \centering
    \caption{Natural equilibrium $X_{+}^{\ast\ast}$}
    \begin{tabular}{|l|c|}
        \hline\hline
        Component & Approx. value\\
        \hline\hline
        $E_{+}^{\ast\ast}$      & $8.66\times 10^4$ \\ \hline
        $L_{+}^{\ast\ast}$      & $2.39\times 10^4$ \\ \hline
        $P_{+}^{\ast\ast}$      & $5.13\times 10^3$ \\ \hline
        $F_{u,+}^{\ast\ast}$    & $933$ \\ \hline
        $F_{mw,+}^{\ast\ast}$   & $1.05\times 10^4$ \\ \hline
        $F_{ms,+}^{\ast\ast}$   & $0$ \\ \hline
        $M_{w,+}^{\ast\ast}$    & $6.33\times 10^3$ \\ \hline
        $M_{s,+}^{\ast\ast}$    & $0$ \\
        \hline\hline
    \end{tabular}
\end{subtable}
\hspace{0.04\textwidth}
\begin{subtable}{0.45\textwidth}
    \centering
    \caption{Allee equilibrium $X_{-}^{\ast\ast}$}
    \begin{tabular}{|l|c|}
        \hline\hline
        Component & Approx. value\\
        \hline\hline
        $E_{-}^{\ast\ast}$      & $0.751$ \\ \hline
        $L_{-}^{\ast\ast}$      & $1.91$ \\ \hline
        $P_{-}^{\ast\ast}$      & $0.411$ \\ \hline
        $F_{u,-}^{\ast\ast}$    & $0.903$ \\ \hline
        $F_{mw,-}^{\ast\ast}$   & $1.22\times 10^{-2}$ \\ \hline
        $F_{ms,-}^{\ast\ast}$   & $0$ \\ \hline
        $M_{w,-}^{\ast\ast}$    & $0.507$ \\ \hline
        $M_{s,-}^{\ast\ast}$    & $0$ \\
        \hline\hline
    \end{tabular}
\end{subtable}
\caption{Numerically computed values of the positive natural and Allee
equilibria of the SIT-free model, using the baseline parameter values
in Table~\ref{tab:parameter_values}.}
\label{tab:equilibria}
\end{table}

The Jacobian of the SIT-free model evaluated at $X_-^{\ast\ast}$ has
one eigenvalue with positive real part and seven eigenvalues with
negative real parts.  Hence, the stable manifold of $X_-^{\ast\ast}$,
denoted by $\mathcal M_A$, is seven-dimensional and acts as the Allee
threshold for the SIT-free system.  Initial conditions on one side of
$\mathcal M_A$ converge to $\mathcal E_0$, whereas initial conditions
on the other side converge to $X_+^{\ast\ast}$.

The results in Figure~\ref{fig:noSIT} illustrate this threshold
behavior.  Starting with $E(0)=1.4$ and all other compartments equal to
zero, the solution approaches the mosquito-free equilibrium.  In
contrast, starting with $E(0)=100$ and all other compartments equal to
zero, the solution crosses to the persistence side of the Allee
threshold and converges to $X_+^{\ast\ast}$.  Thus, a relatively small
change in the initial egg abundance can determine whether the wild
population collapses or recovers to its natural equilibrium.

\begin{figure}[h]
    \centering
    \begin{subfigure}{0.48\textwidth}
        \centering
        \includegraphics[width=\linewidth]{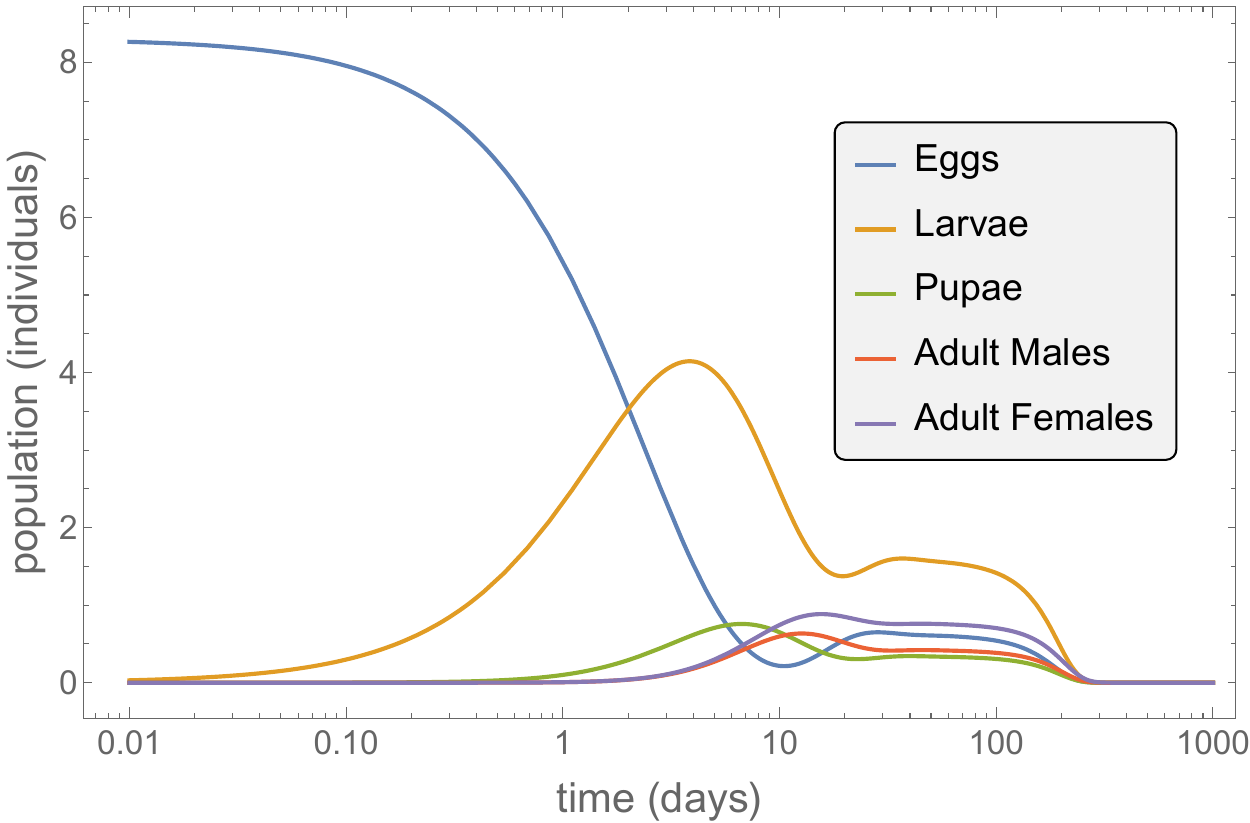}
        \caption{Convergence to the mosquito-free equilibrium.}
        \label{fig:trivial}
    \end{subfigure}\hfill
    \begin{subfigure}{0.48\textwidth}
        \centering
        \includegraphics[width=\linewidth]{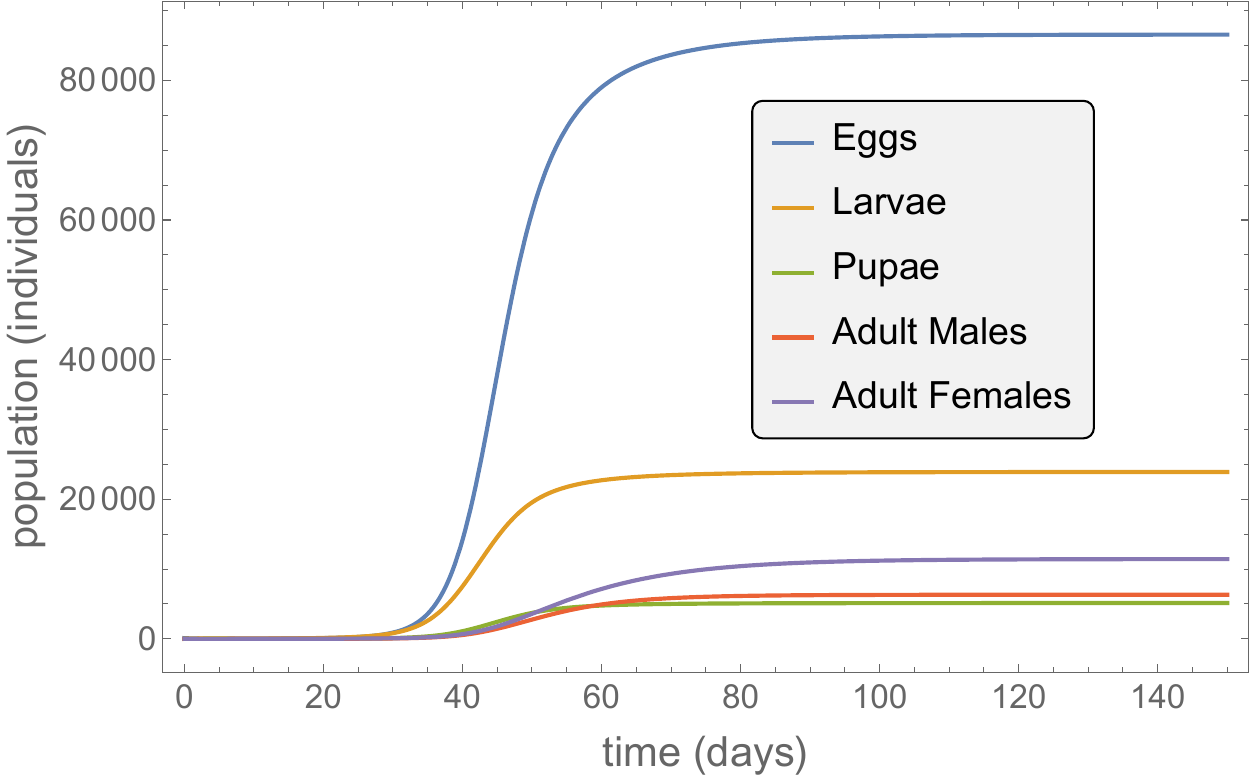}
        \caption{Convergence to the natural equilibrium $X_{+}^{\ast\ast}$.}
        \label{fig:natural}
    \end{subfigure}
    \caption{Numerical solutions of model~\eqref{eq:model_I} in the
    absence of SIT.  (a) The solution with $E(0)=1.4$ and all other
    compartments initially zero converges to the mosquito-free
    equilibrium.  (b) The solution with $E(0)=100$ and all other
    compartments initially zero converges to the natural equilibrium
    $X_{+}^{\ast\ast}$.  Parameter values used are as given in
    Table~\ref{tab:parameter_values}.}
    \label{fig:noSIT}
\end{figure}

Figure~\ref{fig:M_A slices} shows two-dimensional slices of
$\mathcal M_A$ in the $(F_u,M_w)$-plane, with
$L=P=F_{mw}=F_{ms}=M_s=0$, for several fixed values of $E$.  The
threshold curves are unbounded: when few unmated females are present,
sufficiently many wild males are needed for mating to occur before the
adult population dies out.  Initial conditions above and to the right
of the curves converge to $X_+^{\ast\ast}$, whereas those below and to
the left converge to $\mathcal E_0$.  This confirms numerically that
the mate-finding Allee effect can be exploited by SIT: releases need
only push the wild population below the Allee threshold, after which
the natural dynamics complete the decline to extinction.

\begin{figure}[h]
    \centering
    \includegraphics[width=0.6\linewidth]{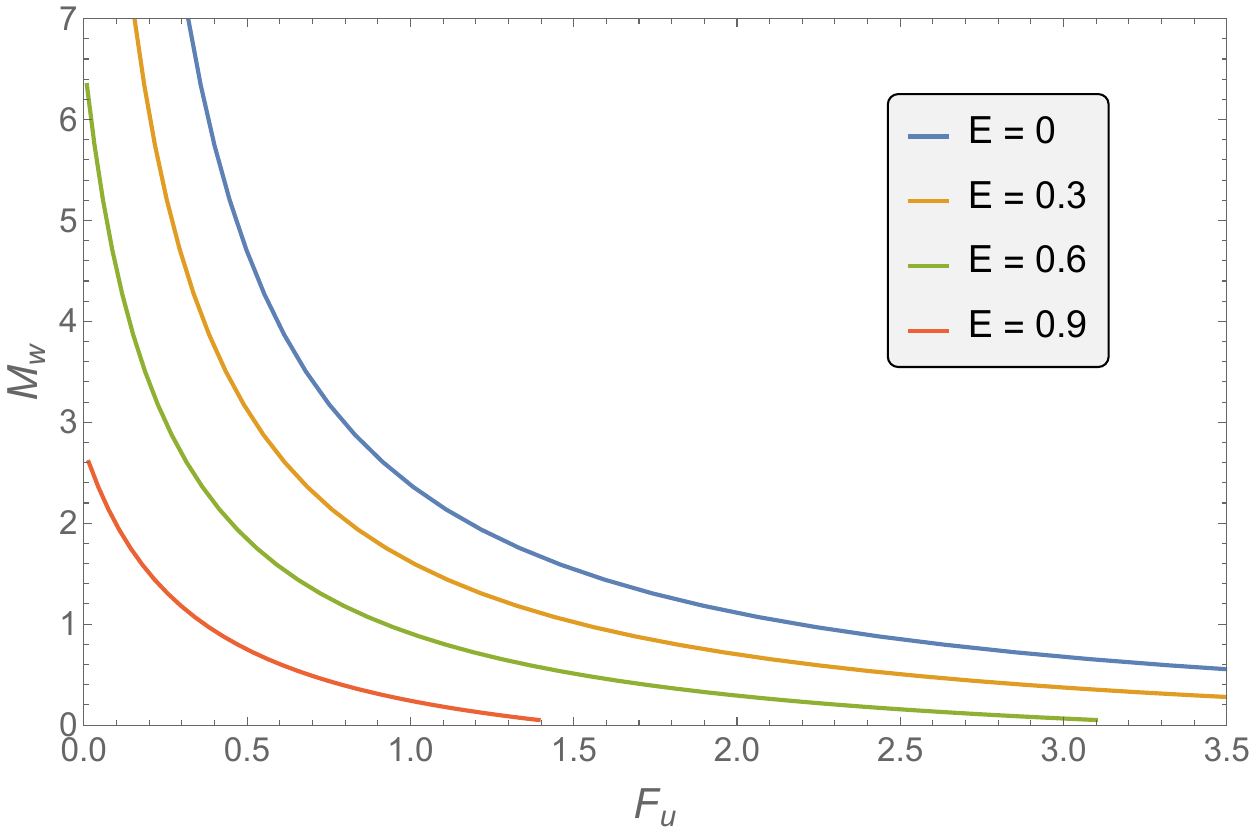}
    \caption{Slices of the Allee threshold manifold $\mathcal M_A$ in
    the $(F_u,M_w)$-plane, with $L=P=F_{mw}=F_{ms}=M_s=0$, for several
    fixed values of $E$.  Initial conditions above and to the right of
    each curve converge to $X_{+}^{\ast\ast}$, whereas initial
    conditions below and to the left converge to the mosquito-free
    equilibrium.}
    \label{fig:M_A slices}
\end{figure}

\subsection{Assessment of the impact of sterile-male releases}

The full model~\eqref{eq:model_I} is next simulated to assess the
impact of sterile-male releases.  Two representative strategies are
compared in Figure~\ref{fig:SIT}: constant release at $S_0=5000$
sterile males per day and population-responsive release at
$S(t)=0.457A_w(t)$.  The population-responsive strategy produces a
faster initial reduction in the wild population, because the release
rate is largest when the wild adult population is large.  However, this
advantage diminishes as the wild population declines, since the release
rate also declines.  Thus, proportional release is effective for early
suppression, but may become too weak near the Allee threshold.

\begin{figure}[h]
    \centering
    \begin{subfigure}{0.48\textwidth}
        \centering
        \includegraphics[width=\linewidth]{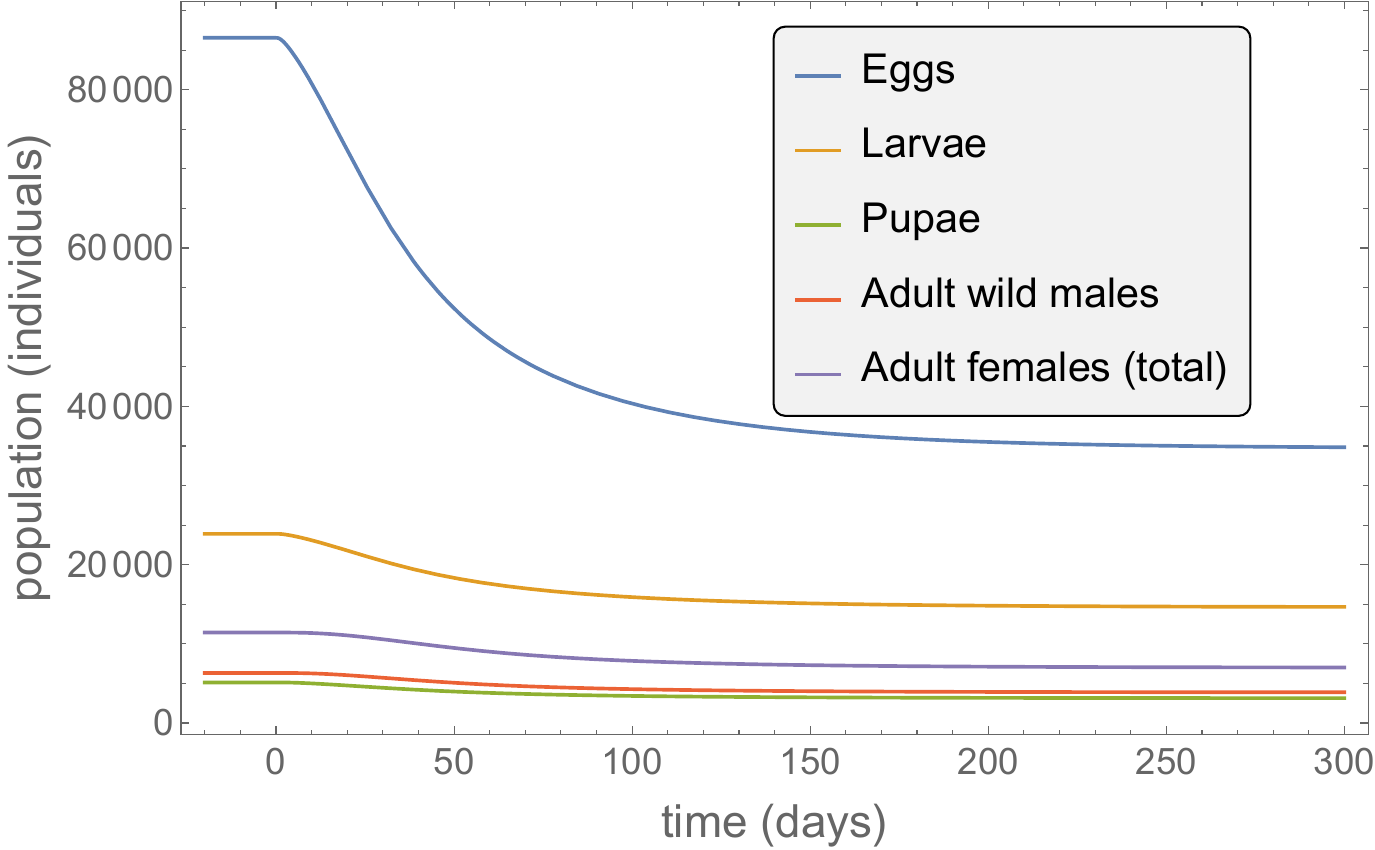}
        \caption{Constant release at $S_0=5000$ sterile males per day.}
        \label{fig:SIT_constant}
    \end{subfigure}\hfill
    \begin{subfigure}{0.48\textwidth}
        \centering
        \includegraphics[width=\linewidth]{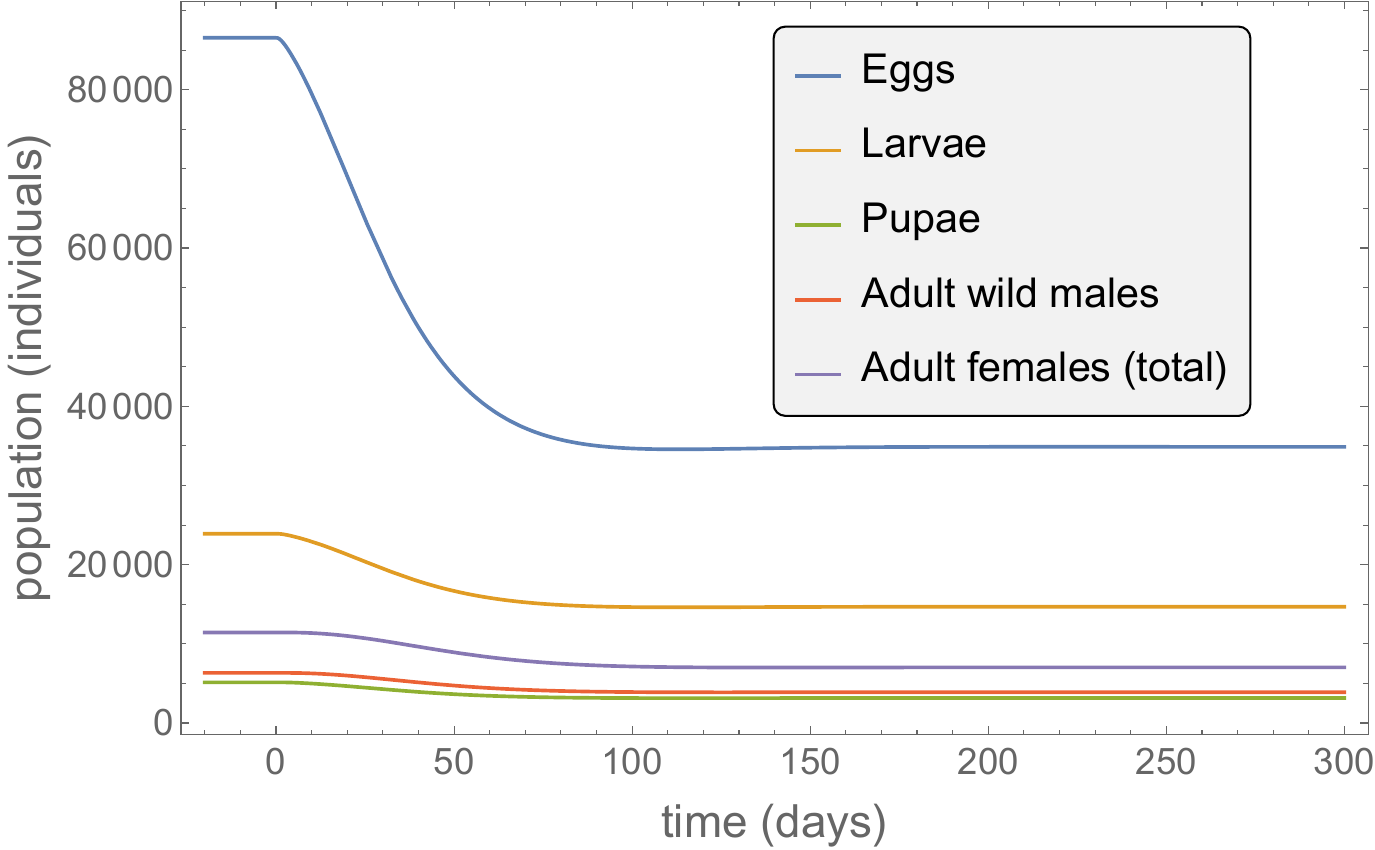}
        \caption{Population-responsive release at $S(t)=0.457A_w(t)$.}
        \label{fig:SIT_proportional}
    \end{subfigure}
    \caption{Numerical solutions of model~\eqref{eq:model_I} under
    sterile-male release.  Parameter values used are as given in
    Table~\ref{tab:parameter_values}, with initial condition
    $X_+^{\ast\ast}$.}
    \label{fig:SIT}
\end{figure}

The bifurcation results in Figures~\ref{fig:extintion/persistence}
and~\ref{fig:S0_S1_bifurcations} show how SIT changes the equilibrium
structure of the model.  For small values of $S_0$ and $S_1$, the
model retains a stable positive wild-mosquito equilibrium and an
unstable Allee-type equilibrium.  As either release parameter is
increased, these two equilibria coalesce in a saddle--node bifurcation
and disappear.  Above the bifurcation curve in the $(S_0,S_1)$-plane,
the wild mosquito compartments converge to zero; when $S_0>0$, the
sterile-male compartment approaches $S_0/\mu_M$.

\begin{figure}[h]
    \centering
    \includegraphics[width=.5\linewidth]{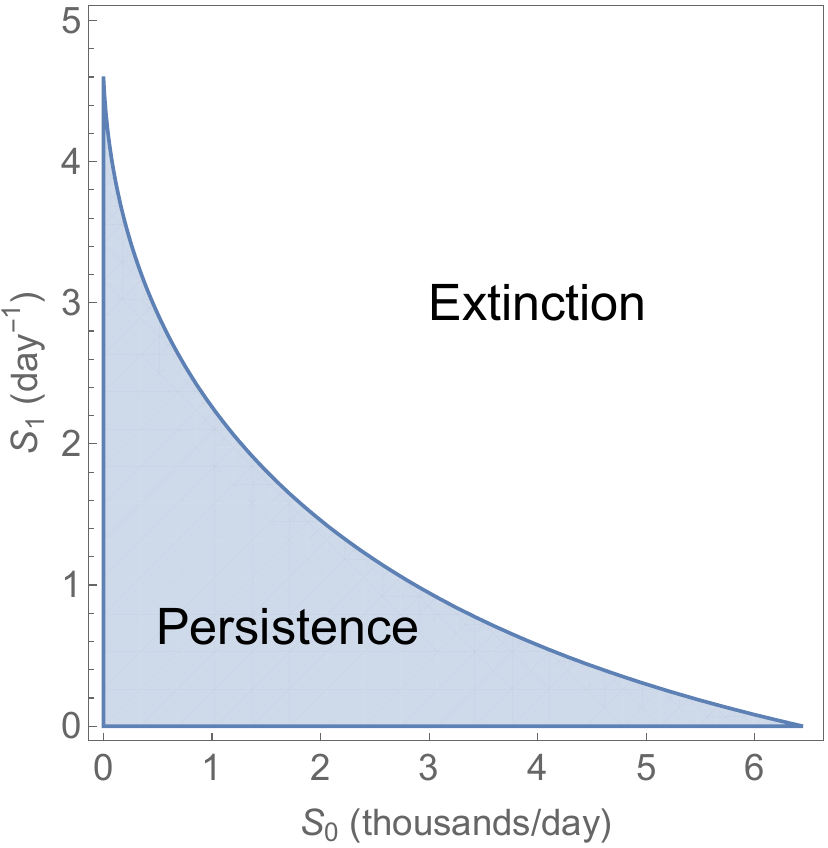}
    \caption{Numerically computed partition of the $(S_0,S_1)$-plane.
    In the persistence region, model~\eqref{eq:model_I} admits positive
    wild-mosquito equilibria.  In the extinction region, the positive
    equilibria have been removed through a saddle--node bifurcation,
    and the wild mosquito compartments decay toward zero.}
    \label{fig:extintion/persistence}
\end{figure}

Taking $S_0$ as the bifurcation parameter, Figure~\ref{fig:S0_bifurcation}
shows the collision of the stable and unstable positive equilibria for
several fixed values of $S_1$.  Similarly, Figure~\ref{fig:S1_bifurcation}
shows the corresponding bifurcation in $S_1$ for several fixed values
of $S_0$.  Increasing $S_0$ shifts both the persistence equilibrium and
the Allee equilibrium, whereas increasing $S_1$ mainly suppresses the
larger persistence equilibrium.  Hence, the proportional component is
most useful when the wild population is large, while the constant
component is more important near elimination.  This provides the
mechanistic explanation for considering a hybrid release strategy.

\begin{figure}[h]
    \centering
    \begin{subfigure}{0.5\textwidth}
        \centering
        \includegraphics[width=\linewidth]{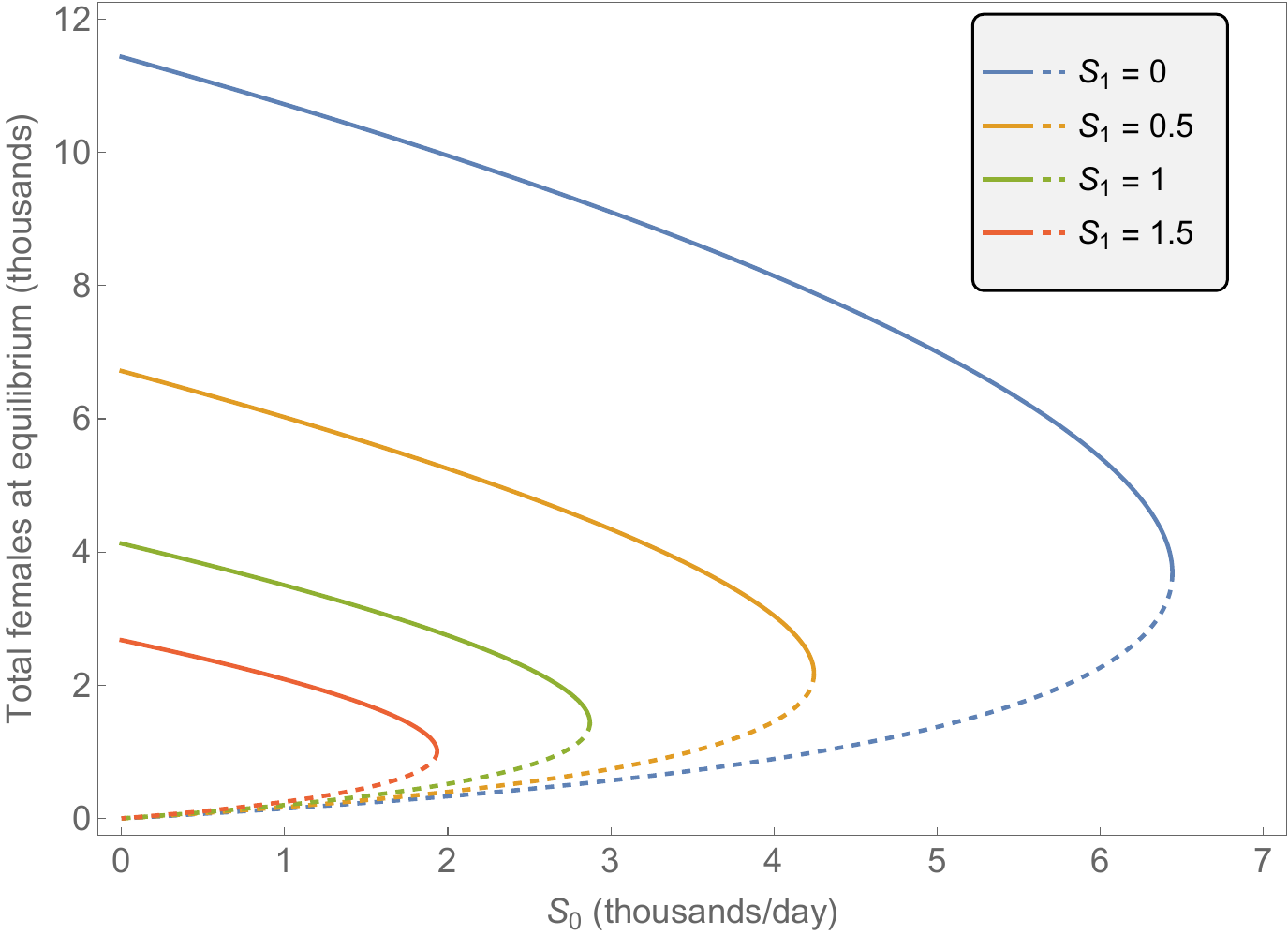}
        \caption{Saddle--node bifurcation in $S_0$.}
        \label{fig:S0_bifurcation}
    \end{subfigure}\hfill
    \begin{subfigure}{0.5\textwidth}
        \centering
        \includegraphics[width=\linewidth]{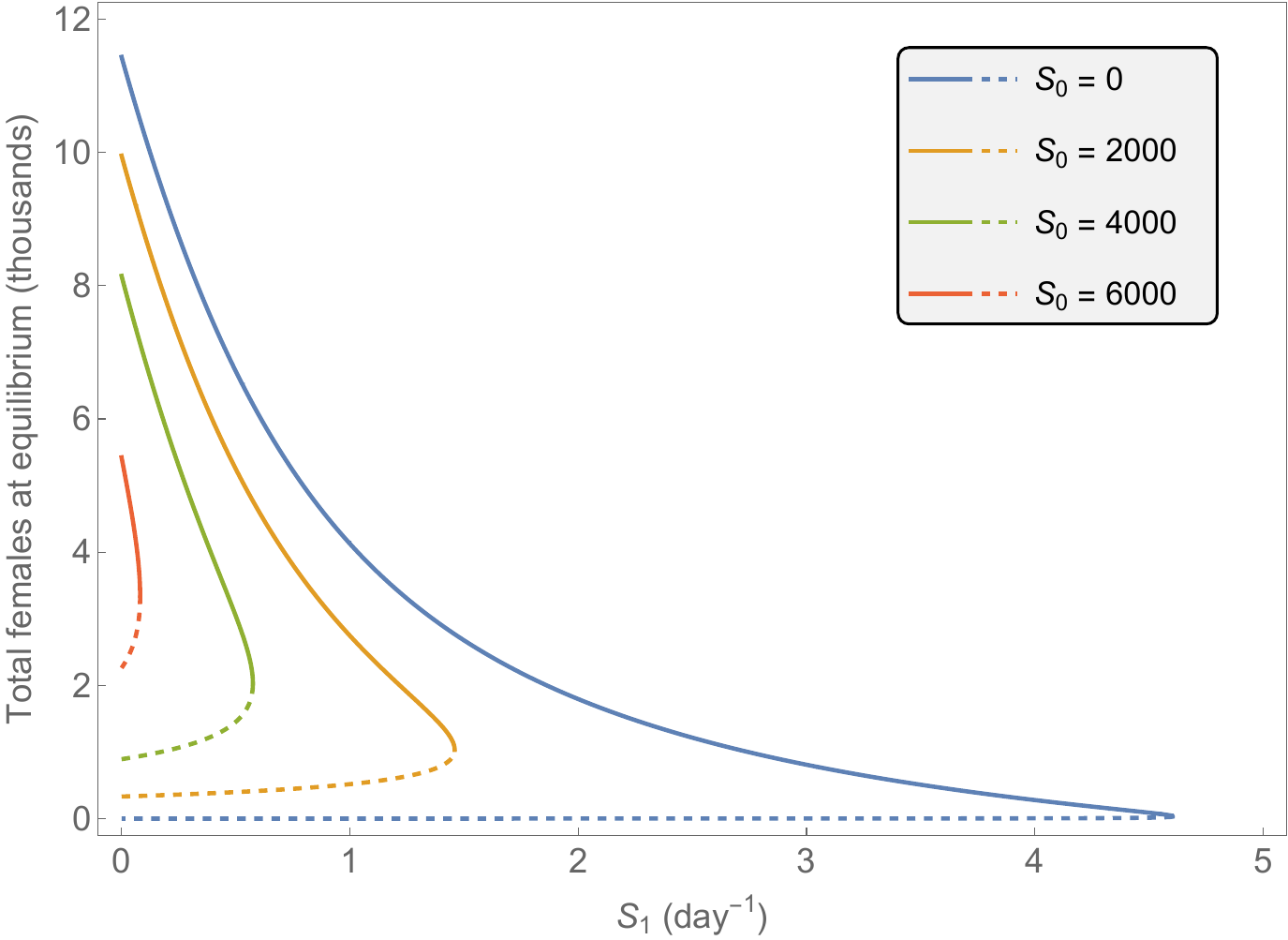}
        \caption{Saddle--node bifurcation in $S_1$.}
        \label{fig:S1_bifurcation}
    \end{subfigure}
    \caption{Bifurcation diagrams for the sterile-male release
    parameters $S_0$ and $S_1$.  Solid curves represent stable positive
    equilibria, whereas dashed curves represent unstable Allee
    equilibria.  In each case, the two positive equilibria meet and
    annihilate through a saddle--node bifurcation.}
    \label{fig:S0_S1_bifurcations}
\end{figure}

\subsection{Assessment of optimized release strategies}

We now determine the release parameters $(S_0,S_1)$ that drive the wild
mosquito population below the numerical Allee threshold while
minimizing the cumulative number of sterile males released.  Since
solutions approach extinction asymptotically, a finite-time operational
stopping criterion is required.  The criterion used here is the first
time the controlled trajectory crosses the extinction side of the
tangent approximation $T_A$ to the Allee threshold $\mathcal M_A$ at
$X_-^{\ast\ast}$.

Let $J_-$ denote the Jacobian matrix of the SIT-free model evaluated at
$X_-^{\ast\ast}$.  Since $X_-^{\ast\ast}$ has a one-dimensional
unstable manifold and a seven-dimensional stable manifold, the tangent
space to $\mathcal M_A$ at $X_-^{\ast\ast}$ is spanned by the
eigenvectors of $J_-$ corresponding to eigenvalues with negative real
parts.  Denote the corresponding affine tangent hyperplane by $T_A$.
This criterion should be viewed as a local computational approximation
of the true Allee threshold.  As a consistency check, trajectories that
crossed $T_A$ were subsequently integrated with $S(t)=0$ and were
observed to converge to the mosquito-free equilibrium.

For each admissible pair $(S_0,S_1)$, define the crossing time
\begin{equation}\label{eq:tau_definition}
    \tau(S_0,S_1)
    =
    \inf\{t>0:\; X(t;S_0,S_1)\ \hbox{has crossed the extinction side of }T_A\},
\end{equation}
where $X(t;S_0,S_1)$ denotes the solution of model~\eqref{eq:model_I}
with release rate~\eqref{eq:release_numerics} and initial condition
$X_+^{\ast\ast}$.  The cumulative release requirement is
\begin{equation}\label{eq:N_definition}
    N(S_0,S_1)
    =
    \int_0^{\tau(S_0,S_1)}
    \bigl[S_0+S_1 A_w(t)\bigr]dt .
\end{equation}
Thus, the numerical optimization problem is
\begin{equation}\label{eq:optimization_problem}
    \min_{S_0,S_1\geq 0} N(S_0,S_1),
    \qquad
    \hbox{subject to } \tau(S_0,S_1)<\infty .
\end{equation}

The efficiency map in Figure~\ref{fig:efficieny} shows a clear
low-cost region at intermediate release intensities.  Release rates
chosen only slightly beyond the saddle--node bifurcation curve require
many days of control, whereas very large release rates produce
diminishing returns.  Hence, the optimal strategy is not the largest
possible release rate, but a balanced release rate that suppresses the
population efficiently while allowing the Allee effect to contribute to
elimination.

\begin{figure}
    \centering
    \includegraphics[width=0.7\linewidth]{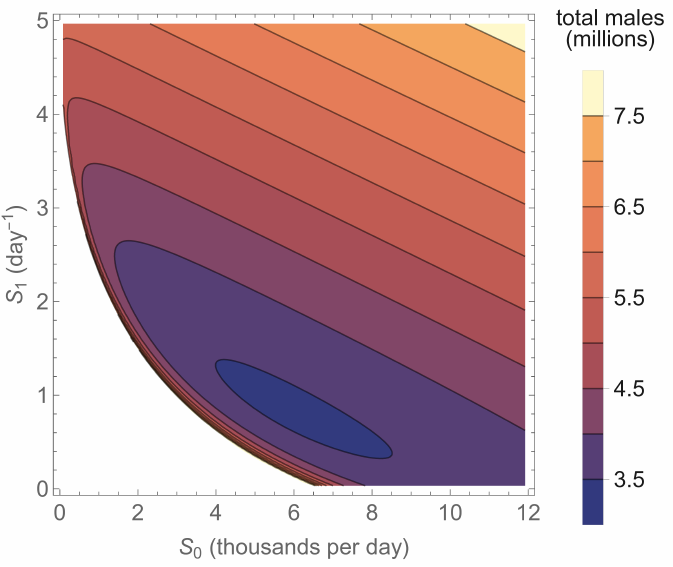}
    \caption{Contour plot of the cumulative number of sterile males
    required to drive the wild mosquito population from
    $X_{+}^{\ast\ast}$ below the numerical Allee threshold, as a
    function of the release parameters $S_0$ and $S_1$.  Parameter
    values used are as given in Table~\ref{tab:parameter_values}.}
    \label{fig:efficieny}
\end{figure}

Table~\ref{tab:sterile-male-minima} compares optimized values for three
cases: constant-only release ($S_1=0$), population-responsive release
only ($S_0=0$), and unconstrained hybrid release ($S_0,S_1>0$).  The
constant-only strategy requires $3.62\times 10^6$ sterile males over
357 days.  The population-responsive-only strategy requires
$5.65\times 10^6$ sterile males over 1150 days, despite its much larger
peak release rate ($8.56\times 10^4$ sterile males per day).  The
hybrid strategy performs best, requiring $3.44\times 10^6$ sterile
males over 355 days, with a peak release rate of $2.06\times 10^4$
sterile males per day.  Hence, the hybrid strategy reduces the
cumulative release requirement by about $5\%$ relative to the best
constant-only strategy and by about $39\%$ relative to the best
population-responsive-only strategy.

\begin{table}[h]
    \centering
    \renewcommand{\arraystretch}{1.25}
    \resizebox{\textwidth}{!}{%
    \begin{tabular}{|l|l|c|c|c|c|c|}
    \hline\hline
    Strategy
    & $S(t)$
    & \makecell{$S_0$ \\ \textnormal{(sterile males/day)}}
    & \makecell{$S_1$ \\ \textnormal{(day$^{-1}$)}}
    & \makecell{Cumulative males \\ \textnormal{(sterile males)}}
    & \makecell{Time required \\ \textnormal{(days)}}
    & \makecell{Peak release rate \\ \textnormal{(sterile males/day)}} \\
    \hline\hline
    $S_0$-only
    & $S_0$
    & $1.01\times 10^{4}$
    & $0$
    & $3.62\times 10^{6}$
    & $357$
    & $1.01\times 10^{4}$\\
    \hline
    $S_1$-only
    & $S_1 A_w(t)$
    & $0$
    & $4.82$
    & $5.65\times 10^{6}$
    & $1150$
    & $8.56\times 10^{4}$\\
    \hline
    Hybrid
    & $S_0+S_1 A_w(t)$
    & $5.98\times 10^{3}$
    & $0.826$
    & $3.44\times 10^{6}$
    & $355$
    & $2.06\times 10^{4}$\\
    \hline\hline
    \end{tabular}%
    }
    \caption{Comparison of optimized sterile-male release strategies.
    The cumulative number of sterile males is computed using
    \eqref{eq:N_definition}, with initial condition $X_+^{\ast\ast}$
    and the stopping criterion defined by the tangent approximation to
    the Allee threshold.}
    \label{tab:sterile-male-minima}
\end{table}

The hybrid optimum has a clear biological interpretation.  The
population-responsive term provides an elevated initial release when
the wild adult population is large, accelerating the early reduction in
reproductive output.  The constant term then maintains sufficient
sterile-male pressure as the wild population becomes small, preventing
the control effort from decaying too rapidly near the Allee threshold.
Thus, releases need not drive every wild compartment arbitrarily close
to zero; they need only push the trajectory across the Allee threshold,
after which the natural mate-finding Allee effect completes the decline
to extinction.

This result is consistent with previous SIT studies showing that
effective suppression generally requires sufficiently large and
sustained sterile-male releases~\cite{DumTch2012,IboGumTay2020,ZheEtAl2019}.
It also complements the findings of Iboi, Gumel and Taylor~\cite{IboGumTay2020},
who showed that larval density-dependent mortality and seasonal forcing
can strongly affect the number and frequency of releases needed for
mosquito elimination.  The new feature of the present study is not the
conclusion that large releases are needed, but rather the identification,
within a continuous two-parameter release family, of a hybrid strategy
that balances an initial population-responsive release with a sustained
constant release until the Allee threshold is crossed.  This appears to
be a relatively less-explored feature of the SIT modeling literature,
where many optimization studies are posed over fixed implementation
horizons~\cite{BliCarDumVas2024,HuaYouLiuSon2021,ThoYanEst2010}.

For the reference area of approximately $2\,\mathrm{km}^2$, the optimal
hybrid strategy requires a peak release of approximately
$2.06\times 10^4$ sterile males per day and a cumulative release of
approximately $3.44\times 10^6$ sterile males.  These values should be
interpreted as illustrative rather than site-specific, but they are
within the range of localized pilot-scale SIT implementation.  

Standard SIT protocol indicates that sterile males should be released such that they outnumber wild males at a ratio of at least $10{:}1$ \cite{CarEtAl2022,OliEtAl2021,StrBosDum2019}. Then the wild population is flooded with wild males. Taking the peak sterile male release rates $S$ from Table \ref{tab:sterile-male-minima}, we may calculate quasi-stationary sterile male population $S/\mu_M$ and compute a sterile-to-wild male ratio using the population of wild males at equilibrium. These ratios are shown in Table \ref{tab:ratios}. This table shows that the constant release strategy agrees with the $10{:}1$ rule of thumb, while the hybrid release strategy shows that it is slightly more efficient to maintain a higher ratio, at least early in the SIT program.

\begin{table}[h]
    \centering
    \renewcommand{\arraystretch}{1.25}
    \begin{tabular}{|l|l|}
    \hline\hline
    Strategy
    &  Sterile-to-wild male ratio at peak release\\
    \hline\hline
    $S_0$-only
    & 10.7\\
    \hline
    $S_1$-only
    & 90.3\\
    \hline
    Hybrid
    & 21.8\\
    \hline\hline
    \end{tabular}
    \caption{The sterile-to-wild male mosquito population ratios at the peak release rate for each strategy. That is for each strategy, this table reports the ratio of sterile male population $S/\mu_M$ to the wild male population at the start of the program.}
    \label{tab:ratios}
\end{table}



\subsection{Sensitivity analysis of the sterile-male requirement}

Finally, the model is simulated to assess the sensitivity of the
optimized cumulative release requirement to uncertainty in the
biological parameters.  For each parameter $p$, the normalized
sensitivity index
\begin{equation}\label{eq:sensitivity_index}
    \Upsilon_p^N
    =
    \frac{p}{N}\frac{\partial N}{\partial p}
\end{equation}
is computed at the baseline parameter set, where $N$ is the optimized
number of sterile males required to drive the wild population below the
numerical Allee threshold.

The results in Figure~\ref{fig:sensitivity} show that $N$ is most
sensitive to the adult female mortality rate $\mu_F$, the larval
maturation rate $\sigma_L$, and the oviposition rate $\phi$.
Biologically, this is expected.  Increasing $\phi$ increases egg
production and therefore increases the release requirement, whereas
increasing $\mu_F$ shortens the adult female lifespan and lowers the
release requirement.  The sensitivity to $\sigma_L$ reflects the role
of density-dependent larval mortality: faster larval maturation reduces
the time larvae spend exposed to density-dependent mortality, thereby
increasing adult recruitment and raising the SIT effort required. Among all model parameters, $\eta$ is the only one tied directly to the
sterilization of released males, and its normalized sensitivity index is exactly
$-1$. This value reflects an exact scaling symmetry of
model~\eqref{eq:model_I} under the sterile-male release
rate~\eqref{eq:release_numerics}. Because the sterile males influence the wild
compartments solely through the product $\eta M_s$, and because $M_s$ is slaved
to the release rate via $\dot M_s = S - \mu_M M_s$, the transformation
\[
  (\eta,\,S_0,\,S_1,\,M_s)\longmapsto
  (\alpha\eta,\,S_0/\alpha,\,S_1/\alpha,\,M_s/\alpha)
\]
leaves every wild compartment---and hence the Allee-threshold crossing
time $\tau$---unchanged, while scaling the cumulative sterile-male
requirement~$N$ by $1/\alpha$. The optimum therefore transforms covariantly,
so the optimized requirement obeys $N^{\star}(\alpha\eta)=N^{\star}(\eta)/\alpha$
exactly, yielding $\Upsilon_\eta^N=-1$. Consequently, if a fully competitive
sterile male could be produced with no fitness cost ($\eta=1$), every
sterile-male quantity in Table~\ref{tab:sterile-male-minima} would fall to
three-quarters of its baseline value---a reduction of $25\%$---relative to the
baseline $\eta = 0.75$.

By contrast, the optimized release requirement is relatively insensitive
to the adult male mortality rate $\mu_M$ and to the mating-time
parameters $\zeta$ and $\gamma$ at the baseline parameter set.  This
does not imply that mating biology is unimportant; rather, it indicates
that, near the baseline values, uncertainty in female survival, larval
development, and oviposition has the greatest effect on the predicted
number of sterile males required.  This agrees with prior SIT modeling
work emphasizing that larval density-dependence can substantially change
the release threshold for elimination~\cite{IboGumTay2020}.

\begin{figure}
    \centering
    \includegraphics[width=0.9\linewidth]{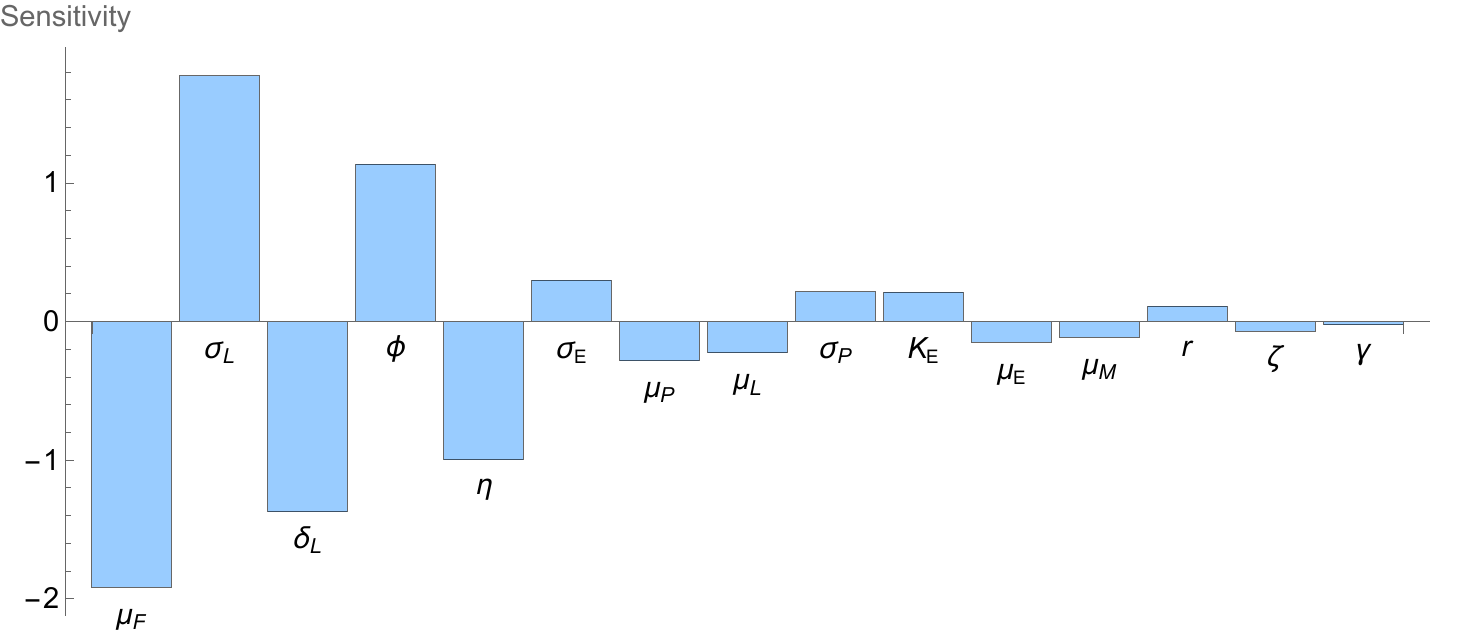}
    \caption{Normalized sensitivity indices for the optimized
    cumulative sterile-male requirement $N$.  For each parameter $p$,
    the plotted quantity is
    $\Upsilon_p^N=(p/N)(\partial N/\partial p)$, evaluated at the
    baseline parameter values in Table~\ref{tab:parameter_values}.}
    \label{fig:sensitivity}
\end{figure}

In summary, the numerical simulations support the analytical results of
Section~\ref{sec:analysis}.  The SIT-free model exhibits bistability,
with the stable manifold of the Allee equilibrium separating
persistence from extinction.  Sterile-male releases can eliminate the
positive wild-mosquito equilibria through a saddle--node bifurcation.
Among the strategies considered, the most efficient is a hybrid release
strategy, which uses a population-responsive component for rapid early
suppression and a constant component to maintain pressure near the
Allee threshold.  This finding is consistent with the broader SIT
literature showing that sustained release pressure is needed for
elimination, while adding the more specific conclusion that, under an
Allee-threshold stopping criterion, combining constant and
population-responsive releases can reduce the cumulative number of
sterile males required for local elimination.

    \section{Discussion}\label{sec:discussion}
\ \\ \noindent This study developed and rigorously analyzed a sex- and stage-structured
model for the population dynamics of malaria-vectoring \emph{Anopheles}
mosquitoes under sterile insect technique (SIT) intervention. The model
incorporates the aquatic immature stages (eggs, larvae, and pupae) and the two
adult sexes; distinguishes unmated females from females mated with wild or
sterile males; includes density-dependent regulation through a finite egg-laying
carrying capacity and density-dependent larval mortality; and uses a
mechanistic mating response in which the pre-mating interval is decomposed into
a post-emergence refractory phase and a density-dependent mate-search phase.
This mating formulation generates a mate-finding Allee effect. Sterile males
are released continuously in time according to a rate that combines a constant
component with a component proportional to the wild adult abundance. The model
is adapted from the framework of Iboi, Gumel and Taylor~\cite{IboGumTay2020},
from which it differs principally by replacing impulsive, periodic releases with
a continuous non-constant release rate and by allowing a population-responsive
release component. These choices preserve the autonomous structure of the
system while permitting, within a single tractable framework, a rigorous
treatment of well-posedness, equilibrium existence, asymptotic stability,
bifurcation structure, and release optimization. The model is mathematically
and ecologically well posed: solutions exist globally, are unique, remain
non-negative, and are uniformly bounded for every admissible initial condition
and release schedule.

A central qualitative result is the mate-finding Allee effect generated by the
strictly positive mate-search parameter~$\gamma$. Because the mating term is
quadratic in the relevant state variables near extinction---a Holling type-II
response that reduces to a mass-action term at low density---the linearized
reproductive pathway is inactive at the mosquito-free equilibrium. Consequently,
the mosquito-free equilibrium is locally asymptotically stable for all
admissible parameter values, a low-density ``superstability'' phenomenon also
identified in ecological models with Holling type-II interactions~\cite{BouBer2009}.
The basic reproduction number of the limiting quick mate-search model, denoted
$\mathcal{R}_0^q$, provides a conservative sufficient condition for global
extinction: the mosquito-free equilibrium is globally asymptotically stable
whenever $\mathcal{R}_0^q<1$. Since the quick mate-search limit overestimates
reproductive output relative to the full mate-search model, this threshold is
sufficient but not, in general, necessary for extinction. Thus, low-density
extinction is not imposed by a control assumption; it emerges from the intrinsic
mating dynamics of the vector population. This mechanism contrasts with the
Esteva--Yang model~\cite{EstYan2005}, whose effective mate-search coefficient
vanishes when sterile males are absent or fully competitive and therefore does
not generate a low-density mate-finding Allee effect. When
$\mathcal{R}_0^q>1$, and when the egg carrying capacity is sufficiently large
and larval competition sufficiently weak, the SIT-free system exhibits the
bistable structure characteristic of mate-finding Allee dynamics: a locally
asymptotically stable natural equilibrium coexists with an unstable Allee
equilibrium whose stable set separates extinction from persistence. For the
reduced SIT-free model, a Goh--Volterra Lyapunov functional gives a certified
inner estimate of the persistence basin. The corresponding persistence-side
global stability statement for the full model remains a conjecture supported by
numerical simulations. This two-positive-equilibrium bistable picture recovers,
in a more detailed setting and with explicit asymptotic thresholds, the
qualitative structure reported for related mate-finding models~\cite{AngDumYatIvr2020,DumYat2022}.

Sterile-male releases modify this bistable structure in a way that is especially
favorable for control. A positive constant release creates a
wild-mosquito-free equilibrium with a sustained sterile-male population. This
equilibrium is locally asymptotically stable for all admissible parameter values
and globally asymptotically stable when $\mathcal{R}_0^q<1$. In the biologically
important regime $\mathcal{R}_0^q>1$, where the wild population would otherwise
persist, sufficiently large releases eliminate the two positive wild-mosquito
equilibria through a saddle--node bifurcation. The analysis also gives explicit
small-$\delta_L$ asymptotic expressions for the critical proportional and
constant release rates. Complementing this equilibrium-level result, a
sufficiently large constant release drives the system to the wild-mosquito-free
equilibrium from every admissible initial condition, even when
$\mathcal{R}_0^q\ge1$. Consequently, SIT does not need to suppress every wild
compartment directly to an arbitrarily small level. It is enough to move the
wild population across the Allee separatrix; after that crossing, the natural
mate-finding Allee dynamics complete the decline to local elimination. The
analysis therefore provides a rigorous mechanism for local elimination of the
wild population, rather than only a numerical observation of suppression.

The numerical optimization quantifies this threshold-crossing mechanism. Using a
free-horizon criterion in which releases continue only until the controlled
trajectory crosses a tangent approximation to the Allee separatrix, a hybrid
release strategy was the most efficient among the strategies considered. It
required approximately $3.44\times10^{6}$ sterile males over $355$~days,
compared with approximately $3.62\times10^{6}$ sterile males over $357$~days for
the best constant-only strategy and approximately $5.65\times10^{6}$ sterile
males over $1150$~days for the best population-responsive-only strategy
(Table~\ref{tab:sterile-male-minima}). Thus, the hybrid strategy reduced the
cumulative release requirement by about $1.8\times10^{5}$ sterile males
(approximately $5\,\%$) relative to the best constant-only strategy and by about
$2.2\times10^{6}$ sterile males (approximately $39\,\%$) relative to the best
population-responsive-only strategy. The biological interpretation is direct:
the population-responsive component supplies strong early pressure when the wild
adult population is large, whereas the constant component prevents the release
effort from decaying too rapidly near the Allee separatrix, where a purely
population-responsive release rate would otherwise weaken.

For the reference domain of approximately $2\,\mathrm{km}^2$, the peak release
rate of the optimal hybrid strategy corresponds, under the quasi-steady
approximation $M_s \approx S_{\max}/\mu_M$, to a peak sterile-to-wild male
abundance ratio of the same order of magnitude as the customary $10{:}1$ SIT
overflooding rule of thumb. The optimization indicates, moreover, that
maintaining a somewhat higher ratio---closer to $20{:}1$---during the early
phase of the release program can reduce the total number of sterile males
ultimately required, a refinement consistent with the values reported in
Table~\ref{tab:ratios}.

Placed within the broader SIT modeling literature, the contribution of this
work is both analytical and strategic. Relatively few SIT models combine
detailed stage and sex structure, nonlinear mate-finding dynamics, rigorous
equilibrium and stability analysis, bifurcation theory, and release optimization
within a single tractable framework. The present study establishes the existence
and local asymptotic stability of the mosquito-free and wild-mosquito-free
equilibria; gives sufficient conditions for their global asymptotic stability;
characterizes the coexistence of a stable natural equilibrium with an unstable
Allee equilibrium; derives asymptotic expressions for the critical release
thresholds; and constructs, for a reduced SIT-free model, a Lyapunov-based inner
estimate of the persistence basin. Whereas optimal-control formulations of SIT
have been posed mostly over fixed implementation horizons~\cite{BliCarDumVas2024,HuaYouLiuSon2021,ThoYanEst2010},
and much of the suppression literature concerns impulsive or constant
releases~\cite{DumTch2012,IboGumTay2020,LiCaiLi2017,ZheEtAl2019}, the present
free-horizon formulation with an Allee-threshold stopping criterion identifies a
hybrid schedule that lowers the cumulative sterile-male requirement relative to
either pure strategy among the strategies compared. The unifying theme is that
the control strategy does not work against the nonlinear ecology of the vector
population but exploits it: the Allee effect becomes a resource for elimination,
with SIT supplying the push across the threshold.

The study has several limitations.  For instance, the model is deterministic and
assumes a well-mixed, closed, temporally constant habitat. Demographic
stochasticity---most consequential near the Allee threshold, where abundances
are small---together with spatial heterogeneity, immigration, and seasonal or
climate-driven variation are therefore not explicitly accounted for.  Furthermore, our analysis assumes
perfect sex sorting and complete sterility, and the baseline parameters are
biologically motivated rather than fitted to a specific field site. Thus, the
reported release quantities should be interpreted as illustrative rather than as
site-specific operational forecasts. In addition, the global stability of the
natural equilibrium is established only conditionally, on certified sublevel sets
for the reduced model, while the corresponding full-model statement remains a
conjecture; the numerical elimination criterion is also a local tangent
approximation of the true Allee separatrix. Future work will accordingly extend
the framework to stochastic and spatial settings---including
reaction--advection--diffusion formulations and SIT-barrier strategies designed
to prevent re-invasion~\cite{AlmEstVau2022,Bit2019,EvaBis2014,MaLiCai2026}---incorporate
immigration and seasonal forcing, introduce interspecific competition between a
high-vectorial-capacity target species and a low-vectorial-capacity
competitor~\cite{HasAfrAfrIsl2024,HayEtAl2002}, couple the vector dynamics to
\emph{Plasmodium} transmission, and examine combined SIT--insecticide strategies
under realistic resource constraints.

Taken together, these results place the mate-finding Allee effect on a rigorous
footing as a control lever rather than an obstacle: SIT need only drive the wild
population across the Allee separatrix, after which the population's own
low-density mate-finding failure completes the decline to local elimination. The
focus of the present work is local control of the vector---the sustained
suppression and, ultimately, local elimination of wild \emph{Anopheles}
populations. Such local vector control is one component of the broader malaria-control
portfolio needed to pursue the WHO Global Technical Strategy targets for
2030~\cite{WHO2015} and longer-term eradication goals~\cite{Gat2019,WilHam2018}.
Whether the mechanism analyzed here can be realized in the field depends on
factors outside the present model, including the capacity to mass-rear and
release competitive sterile males at the required scale, the retention of mating
competitiveness and survival after sterilization, and the spatial, stochastic,
and seasonal complexities of operational settings. Subject to these caveats, and
once the vector dynamics are coupled to \emph{Plasmodium} transmission, the
Allee-exploiting release strategy analyzed here---integrated with established
vector-control and case-management programs---offers a mechanistically
grounded contribution to local vector control and to the integrated strategies
through which malaria-control goals are pursued. The analysis presented here is
intended as a step toward placing such strategies on a solid quantitative, rigorous
mechanistic foundation.\ \\ \ \\ \noindent 

    \section*{Acknowledgments}

ABG acknowledges the support, in part, of the National Science Foundation (DMS-2052363;
transferred to DMS-2330801), University of Maryland Institute for Health Computing and the Brin and the State of Maryland Endowed E-Nnovate Chair Program. {\bf Generative AI Usage Statement:} During the preparation of this manuscript, the authors used the AI tools ChatGPT (OpenAI), Claude (Anthropic), and Gemini (Google) for proofreading and language editing, for assistance in checking the correctness of proofs, and for writing and checking code. The authors have reviewed and verified all AI-assisted contributions and take full responsibility for the integrity, accuracy, and originality of the model formulation, rigorous analysis, simulations, and interpretation of all results in this manuscript.


    \newpage
    \appendix

    \section{Estimation of spatial parameters}\label{app:spatial}

    As described in Section \ref{sec:numerics}, of all the parameters of the model \eqref{eq:model_I}, three are highly dependent on the spatial scale of the problem: the carrying capacity for eggs $\KE$, the larval density-dependent mortality coefficient $\delta_L$, and the mean-first-encounter time coefficient for mate finding $\gamma$. We claim that the parameter values $\KE=10^5$, $\delta_L=5\times 10^{-5}$, and $\gamma=450$ correspond to an area of about $2\,\text{km}^2$. The values of $\KE$ and $\delta_L$ are taken directly from \cite{IboGumTay2020}, so to estimate the area $A$, we need only determine the amount of territory needed to support $10^5$ mosquito eggs. Of course, the amount of land needed to support a given number of eggs varies wildly depending on the suitability of the habitat in question, but a rough estimate leads to a characteristic area of $A\approx 2\,\text{km}^2$ based on expected size and frequency of puddles in typical \emph{Anopheles} territory. In this appendix, we show how the value of $\gamma$ is estimated.

    In the absence of SIT-based interventions, the model \eqref{eq:model_I} uses the mating rate given by
    \begin{equation}
        R_{\text{mating}}=\frac{M_w F_u}{\gamma+\zeta M_w}.
    \end{equation}
    Thus, $\gamma$ is interpreted as the expected time required for a female to encounter a mate if there is exactly one male. If we assume that motion of mosquitoes is essentially Brownian with diffusion rate $D$, and an \emph{encounter distance} $r_e$ at which the two mosquitoes sense each other, then this expected time is
    \begin{equation}
        \gamma=\frac{A}{8\pi D}\log\left(\frac{L}{r_e}\right),
    \end{equation}
    where $L$ is the diameter of the domain, which we take to be $L=2\sqrt{A/\pi}$. This formulation is derived via Smoluchowski mean-first-passage-time theory \cite{LeVYusAbaDen2020,SinSchHol2006,SzaSchSch1980}. We estimate the diffusion coefficient of adult mosquitoes as $900\,m^2/\text{day}$ \cite{BeeEtAl2022}. Male and female mosquitoes are known to detect one another at a distance of about $r_e=10\,m$ \cite{MenEtAl2019}. With $A\approx 2\,\text{km}^2$, we compute $\gamma \approx 450$.

    \section{Proof Details}\label{app:ext_proof}

    \subsection{Detailed proof of Proposition \ref{prop:bounded_solns}}
    
    \begin{proof}
        Proposition \ref{prop:positivity} shows that $0$ is a lower bound for all state variables. We show that components of the solution are bounded above one-by-one. First, we observe that if $E(t)\geq \KE$ for any some $t$, then
        \begin{equation*}
            \dot E=\phi F_{mw}\left(1-\frac{E}{\KE}\right)-(\sigma_E+\mu_E)E\leq -(\sigma_E+\mu_E)\KE<0,
        \end{equation*}
        So solutions can not pass upward through $E=\KE$, and whenever $E(t)>\KE$ in a solution, the component $E(t)$ is decreasing. Thus, $E(t)$ is bounded:
        \begin{equation}
            0\leq E(t)\leq E_{\max}:=\max\{E(0),\KE\}.
        \end{equation}
        
        We will now prove the following generic fact: if $u(t)$ solves
        \begin{equation}
            \dot u\leq a-bu
        \end{equation}
        for $a,b>0$ and $u(0)\geq 0$ for all $t\in[0,T]$, then
        \begin{equation}\label{eq:u_bound}
            u(t)\leq u_{\text{max}}:=\max\{u(0),a/b\}.
        \end{equation}
        To see this, let $v=-(a-bu)$. Then
        \begin{equation}
            \dot v=b\dot u\leq b(a-bu)=-bv.
        \end{equation}
        Thus, by Gr\"onwall's inequality $v(t)\leq v(0)e^{-bt}$, from which we derive
        \begin{equation}
            u(t)\leq \frac{a}{b}+\left(u(0)-\frac{a}{b}\right)e^{-bt}.
        \end{equation}
        Therefore, if $u(0)\leq a/b$, then $u(t)<a/b$ for all $t$. On the other hand, if $u(0)>a/b$, then $u(t)\leq u(0)$ for all $t$. In either case, \eqref{eq:u_bound} holds.

        Now we can apply this fact repeatedly to the state variables of the model \eqref{eq:model_I}, starting with $L$: we have
        \begin{equation}
            \dot L=\sigma_EE-(\sigma_L+\mu_L)L-\delta_LL^2\leq \sigma_E E_{\text{max}}-(\sigma_L+\mu_L)L,
        \end{equation}
        so
        \begin{equation}
            0\leq L(t)\leq L_{\text{max}}:=\max\left\{L(0),\frac{\sigma_E E_{\text{max}}}{\sigma_L+\mu_L}\right\}.
        \end{equation}
        Next, since $0\leq S(t)\leq S_{\text{max}}$, we have
        \begin{equation}
            0\leq M_s(t)\leq M_{s,\text{max}}:=\max\left\{M_s(0),\frac{S_{\text{max}}}{\mu_M}\right\}.
        \end{equation}
        Since $L$ is bounded, we have $\dot P\leq \sigma_L L_{\text{max}}-(\sigma_P+\mu_P)P$. Therefore,
        \begin{equation}
            0\leq P(t)\leq P_{\text{max}}:=\max\left\{P(0),\frac{\sigma_L L_{\text{max}}}{\sigma_P+\mu_P}\right\}.
        \end{equation}
        Since $P$ is bounded, we have
        \begin{align*}
            \dot F_u&\leq r\sigma_PP_{\text{max}}-\mu_F F_u\\
            \dot M_w&\leq (1-r)\sigma_PP_{\text{max}}-\mu_M M_w.
        \end{align*}
        Thus,
        \begin{align}
            0&\leq F_u(t)\leq F_{u,\text{max}}:=\max\left\{F_u(0),\frac{r\sigma_PP_{\text{max}}}{\mu_F}\right\}\\
            0&\leq M_w(t)\leq M_{w,\text{max}}:=\max\left\{M_w(0),\frac{(1-r)\sigma_PP_{\text{max}}}{\mu_M}\right\}
        \end{align}
        Finally, we have
        \begin{align*}
            \dot F_{mw}&\leq \frac{M_{w,\text{max}}}{\gamma}F_{u,\text{max}}-\mu_F F_{mw}\\
            \dot F_{ms}&\leq \frac{\eta M_{s,\text{max}}}{\gamma}F_{u,\text{max}}-\mu_F F_{ms},
        \end{align*}
        so
        \begin{align}
            0&\leq F_{mw}(t)\leq F_{mw,\text{max}}:=\max\left\{F_{mw}(0),\frac{M_{w,\text{max}}F_{u,\text{max}}}{\gamma\mu_F}\right\}\\
            0&\leq F_{ms}(t)\leq F_{ms,\text{max}}:=\max\left\{F_{ms}(0),\frac{\eta M_{s,\text{max}}F_{u,\text{max}}}{\gamma\mu_F}\right\}
        \end{align}
        Thus, all state variables are bounded.
    \end{proof}

    \subsection{Derivation of \eqref{eq:Vdot_simplified}}

    We give the details leading from \eqref{eq:Lyapunov_derivative} to
\eqref{eq:Vdot_simplified}. Recall that
\[
\dot V
=
\dot E
+
\left(\frac{\sigma_E+\mu_E}{\sigma_E}\right)\dot L
+
\left(\frac{(\sigma_E+\mu_E)(\sigma_L+\mu_L)}{\sigma_E\sigma_L}\right)\dot P
+
\left(\frac{(\sigma_E+\mu_E)(\sigma_L+\mu_L)(\sigma_P+\mu_P)}
{r\sigma_E\sigma_L\sigma_P}\right)\dot F_u
+
\left(\frac{\phi}{\mu_F}\right)\dot F_{mw}.
\tag{\ref{eq:Lyapunov_derivative}}
\]
For notational convenience, define
\[
A_L=\frac{\sigma_E+\mu_E}{\sigma_E},\qquad
A_P=\frac{(\sigma_E+\mu_E)(\sigma_L+\mu_L)}{\sigma_E\sigma_L},
\]
and
\[
A_F=
\frac{(\sigma_E+\mu_E)(\sigma_L+\mu_L)(\sigma_P+\mu_P)}
{r\sigma_E\sigma_L\sigma_P}.
\]
Then
\[
\dot V=\dot E+A_L\dot L+A_P\dot P+A_F\dot F_u+\frac{\phi}{\mu_F}\dot F_{mw}.
\]
Substituting the right-hand sides of \eqref{eq:model_I}, with \(S(t)\equiv 0\), gives
\[
\begin{aligned}
\dot V
={}&
\phi\left(1-\frac{E}{K_E}\right)F_{mw}
-(\sigma_E+\mu_E)E  \\
&+A_L\left[\sigma_EE-(\sigma_L+\mu_L+\delta_LL)L\right] \\
&+A_P\left[\sigma_LL-(\sigma_P+\mu_P)P\right] \\
&+A_F\left[
r\sigma_PP
-\frac{M_w+\eta M_s}{\gamma+\zeta(M_w+\eta M_s)}F_u
-\mu_FF_u
\right]  \\
&+\frac{\phi}{\mu_F}\left[
\frac{M_w}{\gamma+\zeta(M_w+\eta M_s)}F_u
-\mu_FF_{mw}
\right].
\end{aligned}
\]
The choice of coefficients in \(V\) gives the cancellations
\[
A_L\sigma_E=\sigma_E+\mu_E,
\]
\[
A_P\sigma_L=A_L(\sigma_L+\mu_L),
\]
and
\[
A_Fr\sigma_P=A_P(\sigma_P+\mu_P).
\]
Therefore all linear \(E\), \(L\), and \(P\) terms cancel. The \(F_{mw}\) terms also simplify as
\[
\phi\left(1-\frac{E}{K_E}\right)F_{mw}-\phi F_{mw}
=
-\frac{\phi E}{K_E}F_{mw}.
\]
Thus
\[
\begin{aligned}
\dot V
={}&
-\frac{\phi E}{K_E}F_{mw}
-\delta_LA_LL^2  \\
&+
\left[
\frac{\phi}{\mu_F}
\frac{M_w}{\gamma+\zeta(M_w+\eta M_s)}
-
A_F\left(
\frac{M_w+\eta M_s}{\gamma+\zeta(M_w+\eta M_s)}
+\mu_F
\right)
\right]F_u .
\end{aligned}
\]
Now set
\[
C_m=
\frac{M_w+\eta M_s}{\gamma+\zeta(M_w+\eta M_s)}.
\]
Since \(M_w,M_s\ge 0\) and \(\eta\ge 0\),
\[
\frac{M_w}{\gamma+\zeta(M_w+\eta M_s)}
\le
\frac{M_w+\eta M_s}{\gamma+\zeta(M_w+\eta M_s)}
=C_m.
\]
Moreover,
\[
-\frac{\phi E}{K_E}F_{mw}\le 0.
\]
Hence
\[
\begin{aligned}
\dot V
&\le
-\delta_LA_LL^2
+
\left[
\frac{\phi}{\mu_F}C_m
-
A_F(C_m+\mu_F)
\right]F_u  \\
&=
-\delta_LA_LL^2
+
A_F\left[
\left(\frac{\phi}{\mu_FA_F}-1\right)C_m-\mu_F
\right]F_u .
\end{aligned}
\]
Using the definition of $\Rq$ \eqref{eq:Rlm}, we have
\[
\frac{\phi}{\mu_F}
=
A_F\mathcal R_0^q(1+\zeta\mu_F).
\]
Therefore
\[
\begin{aligned}
\dot V
&\le
-\delta_LA_LL^2
+
A_F\left[
\mathcal R_0^q(1+\zeta\mu_F)C_m
-C_m-\mu_F
\right]F_u  \\
&=
-\delta_LA_LL^2
+
A_F\left[
(\mathcal R_0^q-1)C_m
+
(\mathcal R_0^q C_m\zeta-1)\mu_F
\right]F_u .
\end{aligned}
\]
Returning to the definitions of \(A_L\) and \(A_F\), we obtain
\[
\dot V
\le
\frac{(\sigma_E+\mu_E)(\sigma_L+\mu_L)(\sigma_P+\mu_P)}
{r\sigma_E\sigma_L\sigma_P}
\left[
(\mathcal R_0^q-1)C_m
+
(\mathcal R_0^q C_m\zeta-1)\mu_F
\right]F_u
-
\delta_L
\left(\frac{\sigma_E+\mu_E}{\sigma_E}\right)L^2 .
\]

    \subsection{Derivation of \eqref{eq:Lstar} and \eqref{eq:Lstarstar}}
\label{app:Lstar_derivation}

We derive the asymptotic formulas for the two positive roots
\(L^{\ast\ast}_{+}\) and \(L^{\ast\ast}_{-}\) appearing in
\eqref{eq:Lstar} and \eqref{eq:Lstarstar}. Recall from the proof of
Theorem~\ref{thm:positive_equilibria} that, after setting
\(\delta_L=0\), the cubic polynomial \(p_{\delta_L}(s)\) reduces to the
quadratic
\[
p_0(s)=b_0s^2+c_0s+d_0,
\]
where
\[
b_0=
\frac{\mathcal R_0^q(1-r)\mu_F\sigma_L\sigma_P(1+\zeta\mu_F)
(\sigma_E+\mu_E)(\sigma_P+\mu_P)(\sigma_L+\mu_L)^2}
{K_E\sigma_E},
\]
\[
c_0=
-(\mathcal R_0^q-1)(1-r)\sigma_L\sigma_P\mu_F
(\sigma_E+\mu_E)(\sigma_L+\mu_L)(\sigma_P+\mu_P)(1+\zeta\mu_F),
\]
and
\[
d_0=
\gamma\mu_F^2\mu_M(\sigma_E+\mu_E)(\sigma_L+\mu_L)(\sigma_P+\mu_P)^2.
\]
Since \(\mathcal R_0^q>1\), we have \(b_0>0\), \(c_0<0\), and
\(d_0>0\). It is useful to write
\[
C=-c_0>0,\qquad D=d_0>0,\qquad \varepsilon=b_0>0.
\]
Then
\[
p_0(s)=\varepsilon s^2-Cs+D.
\]
The two roots of \(p_0\) are
\[
s_{\pm,0}
=
\frac{C\pm\sqrt{C^2-4\varepsilon D}}{2\varepsilon}.
\]
Here \(\varepsilon=b_0=O(K_E^{-1})\), while \(C\) and \(D\) are
independent of \(K_E\). For \(K_E\) sufficiently large, the discriminant
is positive. Expanding the square root gives
\[
\sqrt{C^2-4\varepsilon D}
=
C\sqrt{1-\frac{4\varepsilon D}{C^2}}
=
C-\frac{2\varepsilon D}{C}+O(\varepsilon^2).
\]
Therefore
\[
s_{+,0}
=
\frac{C+\sqrt{C^2-4\varepsilon D}}{2\varepsilon}
=
\frac{C}{\varepsilon}-\frac{D}{C}+O(\varepsilon),
\]
whereas
\[
s_{-,0}
=
\frac{C-\sqrt{C^2-4\varepsilon D}}{2\varepsilon}
=
\frac{D}{C}+O(\varepsilon).
\]
We now compute \(C/\varepsilon\) and \(D/C\). First, after canceling common factors, we find
\[
\frac{C}{\varepsilon}
=
\frac{K_E(\mathcal R_0^q-1)\sigma_E}
{\mathcal R_0^q(\sigma_L+\mu_L)}.
\]
Similarly,
\[
\frac{D}{C}
=
\frac{
\gamma\mu_F\mu_M(\sigma_P+\mu_P)
}{
(1-r)(\mathcal R_0^q-1)\sigma_L\sigma_P(1+\zeta\mu_F)
}.
\]
Thus we obtain
\[
s_{+,0}
=
\frac{K_E(\mathcal R_0^q-1)\sigma_E}
{\mathcal R_0^q(\sigma_L+\mu_L)}
-
\frac{
\gamma\mu_F\mu_M(\sigma_P+\mu_P)
}{
(1-r)(\mathcal R_0^q-1)\sigma_L\sigma_P(1+\zeta\mu_F)
}
+
O(K_E^{-1}),
\]
and
\[
s_{-,0}
=
\frac{
\gamma\mu_F\mu_M(\sigma_P+\mu_P)
}{
(1-r)(\mathcal R_0^q-1)\sigma_L\sigma_P(1+\zeta\mu_F)
}
+
O(K_E^{-1}).
\]

Finally, for \(\delta_L>0\) sufficiently small, the corresponding roots
of the full cubic \(p_{\delta_L}(s)\) perturb continuously from the
simple roots \(s_{\pm,0}\) of \(p_0\). Hence
\[
L^{\ast\ast}_{+}=s_{+,0}+O(\delta_L),
\qquad
L^{\ast\ast}_{-}=s_{-,0}+O(\delta_L).
\]
Combining this with the preceding expansions gives
\[
L^{\ast\ast}_{+}
=
\frac{K_E(\mathcal R_0^q-1)\sigma_E}
{\mathcal R_0^q(\sigma_L+\mu_L)}
-
\frac{
\gamma\mu_F\mu_M(\sigma_P+\mu_P)
}{
(1-r)(\mathcal R_0^q-1)\sigma_L\sigma_P(1+\zeta\mu_F)
}
+
O(K_E^{-1},\delta_L),
\]
and
\[
L^{\ast\ast}_{-}
=
\frac{
\gamma\mu_F\mu_M(\sigma_P+\mu_P)
}{
(1-r)(\mathcal R_0^q-1)\sigma_L\sigma_P(1+\zeta\mu_F)
}
+
O(K_E^{-1},\delta_L).
\]
These are precisely \eqref{eq:Lstar} and \eqref{eq:Lstarstar}.

    \subsection{Limit of the determinant \eqref{eq:determinant_of_B-}}

    To compute the limit of \eqref{eq:determinant_of_B-} as $(\KE,\delta_L)\to (\infty,0)$, we compute limits of $M_w$, $H$, and $F_u$ as $\KE\to\infty$. Using \eqref{eq:Lstarstar} and the formula for \(M_w^{\ast\ast}\) in
\eqref{eq:equil_components}, we have
\[
\lim_{\KE\to\infty}M^{\ast\ast}_{w,-}
=
\frac{\gamma\mu_F}
{(\Rq-1)(1+\zeta\mu_F)}.
\]
Hence, with \(H=\gamma+\zeta M^{\ast\ast}_{w,-}\),
\[
\lim_{\KE\to\infty}H
=
\gamma+\frac{\zeta\gamma\mu_F}
{(\Rq-1)(1+\zeta\mu_F)}
=
\frac{\gamma\bigl[\Rq(1+\zeta\mu_F)-1\bigr]}
{(\Rq-1)(1+\zeta\mu_F)}.
\]
It follows in particular that
\[
\lim_{\KE\to\infty}\frac{M^{\ast\ast}_{w,-}}{H}
=
\frac{\mu_F}{\Rq(1+\zeta\mu_F)-1}.
\]
Finally, using the equilibrium identities
\[
F_u^{\ast\ast}
=
\frac{r\sigma_P P^{\ast\ast}}
{M_w^{\ast\ast}/H+\mu_F},
\qquad
P^{\ast\ast}
=
\frac{\sigma_LL^{\ast\ast}}{\mu_P+\sigma_P},
\]
we obtain
\[
\lim_{\KE\to\infty}F^{\ast\ast}_{u,-}
=
\frac{
\dfrac{r\gamma\mu_F\mu_M}
{(1-r)(\Rq-1)(1+\zeta\mu_F)}
}{
\dfrac{\mu_F}{\Rq(1+\zeta\mu_F)-1}+\mu_F
}
=
\frac{
r\gamma\mu_M\bigl[\Rq(1+\zeta\mu_F)-1\bigr]
}{
(1-r)\Rq(\Rq-1)(1+\zeta\mu_F)^2
}.
\]
Substituting these identities into \eqref{eq:determinant_of_B-} and taking $\delta_L=0$, we obtain the limit \eqref{eq:limit_of_determinant_of_B-}.

\subsection{Derivation of \eqref{eq:constant_control_Lyapunov_derivative}}
\label{app:constant_control_Lyapunov_derivative}

We derive the differential inequality used in the proof of
Theorem~\ref{thm:finite_cost_constant_control}. Recall that \(V\) is defined by
\eqref{eq:SIT_Lyapunov_function}. Set
\[
A_L=\frac{\sigma_E+\mu_E}{\sigma_E},\qquad
A_P=\frac{(\sigma_E+\mu_E)(\sigma_L+\mu_L)}{\sigma_E\sigma_L},
\]
and
\[
A_F=
\frac{(\sigma_E+\mu_E)(\sigma_L+\mu_L)(\sigma_P+\mu_P)}
{r\sigma_E\sigma_L\sigma_P}.
\]
Then
\[
\dot V=\dot E+A_L\dot L+A_P\dot P+A_F\dot F_u
+\frac{\phi}{\mu_F}\dot F_{mw}.
\]
Using
\[
\dot E\le \phi F_{mw}-(\sigma_E+\mu_E)E,
\]
and substituting the remaining equations from \eqref{eq:model_I}, we obtain
\[
\begin{aligned}
\dot V
\le{}&
\phi F_{mw}-(\sigma_E+\mu_E)E  \\
&+A_L\left[\sigma_EE-(\sigma_L+\mu_L+\delta_LL)L\right] \\
&+A_P\left[\sigma_LL-(\sigma_P+\mu_P)P\right] \\
&+A_F\left[
r\sigma_PP
-\frac{M_w+\eta M_s}{\gamma+\zeta(M_w+\eta M_s)}F_u
-\mu_FF_u
\right] \\
&+\frac{\phi}{\mu_F}\left[
\frac{M_w}{\gamma+\zeta(M_w+\eta M_s)}F_u
-\mu_FF_{mw}
\right].
\end{aligned}
\]
The coefficients in \(V\) were chosen so that the linear \(E\), \(L\),
\(P\), and \(F_{mw}\) terms cancel. Therefore
\[
\dot V
\le
-\delta_LA_LL^2
-
A_F\left(
\frac{M_w+\eta M_s}{\gamma+\zeta(M_w+\eta M_s)}
+\mu_F
\right)F_u
+
\frac{\phi}{\mu_F}
\frac{M_w}{\gamma+\zeta(M_w+\eta M_s)}F_u .
\]
Dropping the nonpositive term involving
\((M_w+\eta M_s)/(\gamma+\zeta(M_w+\eta M_s))\), we get
\[
\dot V
\le
-\delta_LA_LL^2
-
A_F\mu_F F_u
+
\frac{\phi}{\mu_F}
\frac{M_w}{\gamma+\zeta(M_w+\eta M_s)}F_u .
\]
By the definition of \(\Rq\),
\[
A_F\mu_F=\frac{\phi}{\Rq(1+\zeta\mu_F)}.
\]
Moreover, since \(0\le M_w(t)\le M_{w,\max}\) and the function
\[
x\mapsto \frac{x}{\gamma+\zeta(x+\eta M_s(t))}
\]
is increasing for \(x\ge 0\), we have
\[
\frac{M_w(t)}{\gamma+\zeta(M_w(t)+\eta M_s(t))}
\le
\frac{M_{w,\max}}{\gamma+\zeta(M_{w,\max}+\eta M_s(t))}.
\]
Thus
\[
\dot V
\le
-\left[
\frac{\phi}{\Rq(1+\zeta\mu_F)}
-
\frac{\phi}{\mu_F}
\frac{M_{w,\max}}{\gamma+\zeta(M_{w,\max}+\eta M_s(t))}
\right]F_u
-
\delta_L\left(\frac{\sigma_E+\mu_E}{\sigma_E}\right)L^2.
\]
Defining
\[
Q(t)=
\frac{\phi}{\Rq(1+\zeta\mu_F)}
-
\frac{\phi}{\mu_F}
\frac{M_{w,\max}}{\gamma+\zeta(M_{w,\max}+\eta M_s(t))},
\]
we obtain
\[
\dot V
\le
-Q(t)F_u
-
\delta_L\left(\frac{\sigma_E+\mu_E}{\sigma_E}\right)L^2,
\]
which is \eqref{eq:constant_control_Lyapunov_derivative}.

    \section{The cause of the Allee effect in model \eqref{eq:model_I}}\label{app:linear_mating}

    The Allee effect in \eqref{eq:model_I} is induced by the nonzero time spent by unmated adult females searching for a mate. That is, the stability of the MFE is a consequence of the presence of $\gamma>0$ in \eqref{eq:model_I}. This is curious since $\gamma$ does not even appear in the linearization \eqref{eq:AI_0}. But to underscore how this is nevertheless true, we consider a simplified \emph{quick mate search} version of the model \eqref{eq:model_I} in which there is no SIT and $\gamma=0$, but all other parameters are as in \eqref{eq:model_I}. In this simplified model, mate finding is instantaneous, but there is still a post-emergence refractory period with average time $\zeta>0$. Thus, males no longer appear in the mating terms, so we do not need to explicitly represent them. The resulting simplified model is:
    \begin{equation}\label{eq:model_IV}
		\begin{cases}
			\dot E=\phi F_{m}\left(1-\frac{E}{\KE}\right)-(\sigma_E+\mu_E)E & t>0\\
			\dot L=\sigma_EE-(\sigma_L+\mu_L+\delta_L L)L & t>0\\
			\dot P=\sigma_LL-(\sigma_P+\mu_P)P & t>0\\
			\dot F_u=r\sigma_PP-\left(\frac{1}{\zeta}+\mu_F\right)F_u & t>0\\
			\dot F_m=\frac{1}{\zeta}F_u-\mu_FF_{m} & t>0.
		\end{cases}
	\end{equation}
    
    If we apply the next generation matrix method, we find that the basic reproduction number for \eqref{eq:model_IV}, which we call the \emph{quick mate search reproduction number} is in fact $\Rq$. We show that simply by setting $\gamma=0$, we eliminate the Allee effect. Indeed the following proposition shows that the MFE can be unstable:
    \begin{proposition}\label{prop:linear_mating}
        The MFE of the quick mate search model \eqref{eq:model_IV} is unstable for $\Rq>1$.
    \end{proposition}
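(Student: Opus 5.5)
The plan is to linearize \eqref{eq:model_IV} at the MFE $(E,L,P,F_u,F_m)=(0,0,0,0,0)$ and show that the Jacobian has a positive real eigenvalue when $\Rq>1$. With $\gamma=0$ the mating term $F_u/\zeta$ is linear, so the linearization keeps the reproductive loop closed. This contrasts with Theorem~\ref{thm:LAStrivial}, where the bilinear mating term dropped out of the Jacobian. At the MFE the logistic factor equals one and the $\delta_L L^2$ term has zero derivative. The Jacobian is therefore the cyclic matrix
\[
J=\begin{pmatrix}
-(\sigma_E+\mu_E) & 0 & 0 & 0 & \phi\\
\sigma_E & -(\sigma_L+\mu_L) & 0 & 0 & 0\\
0 & \sigma_L & -(\sigma_P+\mu_P) & 0 & 0\\
0 & 0 & r\sigma_P & -\left(\tfrac{1}{\zeta}+\mu_F\right) & 0\\
0 & 0 & 0 & \tfrac{1}{\zeta} & -\mu_F
\end{pmatrix}.
\]

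The next step is to compute its characteristic polynomial. Because the matrix is a single cycle, expanding along the first row gives
\[
\chi(\lambda)=g_0(\lambda)-\frac{r\phi\sigma_E\sigma_L\sigma_P}{\zeta},
\]
where $g_0(\lambda)=[\lambda+(\sigma_E+\mu_E)][\lambda+(\sigma_L+\mu_L)][\lambda+(\sigma_P+\mu_P)][\lambda+\tfrac1\zeta+\mu_F][\lambda+\mu_F]$. This is the same structure as $q(\lambda)$ in \eqref{eq:char_poly_plus}, with $\Rq$ replaced by $1$.

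Next, I restrict $\chi$ to the real half-line $\lambda\ge 0$. There $g_0$ is continuous, strictly increasing, and tends to $+\infty$. Using \eqref{eq:Rlm} and the identity $(\tfrac1\zeta+\mu_F)\mu_F=\mu_F(1+\zeta\mu_F)/\zeta$, we get
\[
\chi(0)=\frac{(\sigma_E+\mu_E)(\sigma_L+\mu_L)(\sigma_P+\mu_P)\mu_F(1+\zeta\mu_F)}{\zeta}\,(1-\Rq).
\]
This is negative when $\Rq>1$. By the intermediate value theorem, $\chi$ has a root $\lambda^\ast>0$. Hence the MFE is unstable by the linearization principle, since a hyperbolic eigenvalue with positive real part gives instability.

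As a side remark, the same determinant computation confirms that the next-generation matrix yields $\Rq$: taking $\mathcal F$ as the $\phi F_m$ entry gives spectral radius $\Rq$. I will mention this only as a consistency check.

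The only delicate step is the bookkeeping in the characteristic polynomial, namely confirming that the cycle product is exactly $r\phi\sigma_E\sigma_L\sigma_P/\zeta$ with a minus sign. The constant in $g_0(0)$ must then be matched against $\Rq$ through the factor $1+\zeta\mu_F$. Once that identity holds, the rest of the argument is elementary.
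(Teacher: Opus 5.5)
Your proof is correct and is essentially the paper's argument. Since $\chi(0)=-\det J$, your intermediate-value step on $[0,\infty)$ is the same sign computation the paper performs on $\det\tilde A(\mathcal E_0)$; it has the small advantage of producing a positive real eigenvalue directly rather than through the paper's parity count of eigenvalues, and your $\chi(0)$ correctly carries the factor $1/\zeta$ that the paper's displayed determinant omits (the sign is unaffected).
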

    \begin{proof}
        We find the linearization of \eqref{eq:model_IV} about the MFE, denoted $\tilde A(\mathcal E_0)$:
        \begin{equation}
            \tilde A(\mathcal E_0)=\begin{pmatrix}
                -(\sigma_E+\mu_E) & 0 & 0 & 0 & \phi\\
                \sigma_E & -(\sigma_L+\mu_L) & 0 & 0 & 0\\
                0 & \sigma_L & -(\sigma_P+\mu_P) & 0 & 0\\
                0 & 0 & r\sigma_P & -\left(\frac{1}{\zeta}+\mu_F\right) & 0\\
                0 & 0 & 0 & \frac{1}{\zeta} & -\mu_F
            \end{pmatrix}
        \end{equation}
        Note that since $\gamma=0$, some of the partial derivatives of the simplified model \eqref{eq:model_IV} are different than those of the full model \eqref{eq:model_I}, as evidenced by the entry $1/\zeta$ in the fifth row and fourth column, which is not present in \eqref{eq:AI_0}.
        
        Though \eqref{eq:model_IV} is a simplification of \eqref{eq:model_I}, the eigenvectors of and eigenvalues of $\tilde A(\mathcal E_0)$ cannot be written explicitly as they can for $A(\mathcal E_0)$. Instead, we compute the determinant of $\tilde A(\mathcal E_0)$ which can be written in terms of $\Rq$ as
        \begin{equation}
            \text{det}\,\tilde A(\mathcal E_0)=\mu_F(\Rq-1)(1+\zeta\mu_F)(\sigma_E+\mu_E)(\sigma_L+\mu_L)(\sigma_P+\mu_P).
        \end{equation}
        We see that this determinant is positive if and only if $\Rq>1$. The sign of the determinant is $(-1)^c$ where $c$ is the number of eigenvalues (counting multiplicity) with negative real part. Since $\tilde A(\mathcal E_0)$ has five eigenvalues (counting multiplicity), they cannot all have negative real part if $\Rq>1$. We conclude that the MFE is unstable if $\Rq>1$.
    \end{proof}

    Since the MFE in the quick mate search model can be unstable, we see that simply introducing a time to find a mate is enough to induce an Allee effect. Indeed, this is explained by the observation that if the population of males is very small, then females may not be able to find a mate fast enough to reproduce before they die. Thus, it is easier to control the mosquito population when this effect is present.

	\bibliographystyle{siam}
	\bibliography{refs}
	
\end{document}